\newtheorem{theorem}{Theorem}
\newtheorem{lemma}[theorem]{Lemma}
\newtheorem{claim}[theorem]{Claim}
\newtheorem{observation}[theorem]{Observation}
\newcommand{\dm}{\mathrm{diam}}
\newcommand{\edm}{\mathrm{ediam}}
\newcommand{\mf}{\mathcal{F}}
\newcommand{\avgH}{\sigma_H}
\newcommand{\savgH}{|V(H)|\sqrt{\log |V(H)|}}
\newcommand*\trunc[2]{\lfloor {#1} \rfloor_{#2}}
\newcommand*\rfrac[2]{{}^{#1}\!/_{#2}}
\DeclareMathOperator{\mst}{\textsc{MST}}
\DeclareMathOperator{\MST}{\mathrm{MST}}
\DeclareMathOperator{\poly}{\mathrm{poly}}
\date{}
\title{Minor-free graphs have light spanners}
\author{Glencora Borradaile \\Oregon State University \\ \texttt{glencora@engr.oregonstate.edu}  \and Hung Le \\Oregon State University \\ \texttt{lehu@onid.oregonstate.edu}  \and Christian Wulff-Nilsen \\University of Copenhagen \\ \texttt{koolooz@di.ku.dk }}
\begin{document}
\maketitle
\begin{abstract}
  We show that every $H$-minor-free graph has a light $(1+\epsilon)$-spanner, resolving an open problem of Grigni and Sissokho~\cite{GS02} and proving a conjecture of Grigni and Hung~\cite{GH12}. Our lightness bound is \[O\left(\frac{\avgH}{\epsilon^3}\log  \frac{1}{\epsilon}\right)\] where $\avgH = \savgH$ is the sparsity coefficient of $H$-minor-free graphs.  That is, it has a practical dependency on the size of the minor $H$. Our result also implies that the polynomial time approximation scheme (PTAS) for the Travelling Salesperson Problem (TSP) in $H$-minor-free graphs by Demaine, Hajiaghayi and Kawarabayashi~\cite{DHK11} is an \emph{efficient} PTAS whose running time is $2^{O_H\left(\frac{1}{\epsilon^4}\log  \frac{1}{\epsilon}\right)}n^{O(1)}$ where $O_H$  ignores dependencies on the size of $H$. Our techniques significantly deviate from existing lines of research on spanners for $H$-minor-free graphs, but build upon the work of Chechik and Wulff-Nilsen for spanners of general graphs~\cite{CW16}. 

\end{abstract}

\section{Introduction}

Peleg and Sch\"{a}ffer~\cite{PS89} introduced $t$-spanners of graphs as a way to sparsify graphs while approximately preserving pairwise distances between vertices. A \emph{$t$-spanner} of a graph $G$ is a subgraph $S$ of $G$ such that $d_S(x,y) \leq t\cdot d_G(x,y)$ for all vertices $x,y$ \footnote{We use standard graph terminology, which can be found in Appendix~\ref{sec:prel}.}. Two parameters of $t$-spanners that are of interest are their \emph{sparsity} and \emph{lightness}. The \emph{sparsity} of $S$ is the ratio of the number of edges to the number of vertices of $S$. The \emph{lightness} of $S$ is the ratio of the total weight of the edges of $S$ to the weight of an $\mst$ of $G$; generally, we assume that $\mst(G) \subseteq S$ (and so $\mst(S) = \MST(G)$). Here, we are concerned with the lightness of $(1+\epsilon)$-spanners, where $\epsilon<1$, and so we refer to $(1+\epsilon)$-spanners simply as {\em spanners}. 

We say that a spanner is \emph{light} if the lightness does not depend on the number of vertices in the graph.  Grigni and  Sissokho~\cite{GS02} showed that $H$-minor-free graphs have spanners of lightness 
\begin{equation}
O\left({\textstyle\frac{1}{\epsilon}}\avgH\log n\right).\label{eq:notlight}
\end{equation}
where $\avgH = \savgH$ is the sparsity coefficient of $H$-minor-free graphs; namely that an $H$-minor-free  graph of $n$ vertices has $O(\savgH n)$ edges\footnote{This bound is tight~\cite{thomason01}.}~\cite{Mader68}. 
Later Grigni and Hung~\cite{GH12}, in showing that graphs of bounded pathwidth have light spanners, conjectured that  $H$-minor-free graphs also have light spanners; that is, that the dependence on $n$ can be removed from the lightness above.  In this paper, we resolve this conjecture positively, proving:
\begin{theorem}\label{thm:main}
  Every $H$-minor-free graph $G$ has a $(1+\epsilon)$-spanner of lightness
  \begin{equation}
O\left(\frac{\avgH}{\epsilon^3}\log  \frac{1}{\epsilon}\right).\label{eq:light}
\end{equation}
\end{theorem}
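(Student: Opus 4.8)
The plan is to adapt the greedy-spanner charging framework of Chechik and Wulff-Nilsen~\cite{CW16} to the minor-free setting, replacing their reliance on general-graph sparsity with the stronger sparsity coefficient $\avgH$ available for $H$-minor-free graphs. First I would fix the spanner $S$ to be the output of the classical greedy algorithm (which automatically yields a $(1+\epsilon)$-spanner containing $\mst(G)$), scale so that $\mst(G)$ has weight $1$, and partition the edges of $S$ into $O(\log(1/\epsilon)) + O(\log n)$ weight classes, where class $i$ contains edges of weight in $(\epsilon^i, \epsilon^{i-1}]$ up to a geometric scaling; the goal is to bound the total weight contributed by each class by $O(\avgH \epsilon^{-2})$ so that the sum over the $O(\epsilon^{-1}\log(1/\epsilon))$ relevant classes gives the claimed bound. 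The heavy classes (edges much heavier than $\mst(G)/n$) are controlled directly: since $S$ is $H$-minor-free, any set of $k$ vertices spans $O(\avgH k)$ spanner edges, and a standard cluster/contraction argument over the MST shows there cannot be too many long spanner edges without creating a short cycle that the greedy algorithm would have refused.

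The core of the argument is handling a single weight class $W_i$ of ``middle-weight'' edges, all of length roughly $\ell$. Here I would build a cluster graph: contract connected pieces of $\mst(G)$ of diameter $\Theta(\epsilon \ell)$ into supernodes, so that each spanner edge of $W_i$ becomes an edge between two supernodes, and the contracted graph is still $H$-minor-free (minors are closed under contraction and edge deletion). The number of supernodes that can be incident to edges of $W_i$ is $O(1/\epsilon)$ times the number of long MST edges in that region, and $H$-minor-freeness caps the number of $W_i$-edges among $N$ such supernodes at $O(\avgH N)$. The crucial step — and the one I expect to be the main obstacle — is the counterpart of Chechik--Wulff-Nilsen's potential/augmentation argument showing that the number of supernodes ``activated'' at scale $i$, summed appropriately, is only $O(\epsilon^{-1})$ times the MST weight rather than $O(\epsilon^{-1}\log n)$ times it: one must amortize the creation of new clusters against MST weight across all scales simultaneously, using the fact that a cluster of diameter $\Theta(\epsilon\ell)$ ``consumes'' $\Theta(\epsilon\ell)$ units of MST length, and that each unit of MST is consumed by only $O(\log(1/\epsilon))$ scales because scales more than $\log_{1/\epsilon}(1/\epsilon)$ apart interact trivially. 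This telescoping/charging is where the $\epsilon^{-3}\log(1/\epsilon)$ (as opposed to $\epsilon^{-2}$) ultimately comes from, and getting the bookkeeping right — in particular ensuring that the same MST edge is not overcharged and that the minor-free sparsity bound is applied to a genuinely $H$-minor-free auxiliary graph at every scale — is the delicate part.

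Finally I would assemble the pieces: the contribution of all heavy classes is a geometric sum dominated by $O(\avgH/\epsilon)$; the contribution of the $O(\epsilon^{-1}\log(1/\epsilon))$ middle classes is $O(\avgH \epsilon^{-2})$ each by the cluster-graph sparsity bound, giving $O(\avgH \epsilon^{-3}\log(1/\epsilon))$; and the light classes (edges of weight $O(\epsilon \cdot \mst(G)/n)$, i.e. very short) are absorbed because there are at most $n^{O(1)}$... more carefully, because the greedy spanner has $O(\avgH n)$ edges total and each such edge has negligible weight, their total is $o(1) \cdot \mst(G)$. Summing yields lightness $O(\avgH \epsilon^{-3}\log(1/\epsilon))$ as stated. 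I expect that making the per-scale cluster construction uniform — so that one fixed hierarchy of MST clusters serves all scales and the minor-free bound applies cleanly at each level — will require the most care, and that, unlike in~\cite{CW16}, we will be able to avoid one logarithmic factor precisely because $\avgH$ is a ``global'' sparsity bound that holds for every subgraph, not just the whole graph.
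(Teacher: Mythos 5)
Your high-level framework parallels the paper closely: greedy spanner, reduction to unit-weight $\mst$ edges, geometric weight classes, a cluster graph formed by contracting $\mst$-pieces so that the minor-closed sparsity bound $O(\avgH N)$ applies to the edges at each scale. All of that is the right scaffolding, and your observation that the scale graphs remain $H$-minor-free is correct and essential.

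The gap is that the amortization across scales, which you correctly flag as the whole ballgame, is not actually supplied. Your sketch of it --- ``each unit of MST is consumed by only $O(\log(1/\epsilon))$ scales because scales more than $\log_{1/\epsilon}(1/\epsilon)$ apart interact trivially'' --- is not a mechanism that yields the claimed bound, and indeed the arithmetic you wrote does not close (bounding each class by $O(\avgH\epsilon^{-2})$ and summing over $O(\epsilon^{-1}\log(1/\epsilon))$ classes requires a class count that your partition of weight range into $(\epsilon^i,\epsilon^{i-1}]$ intervals does not produce; a $\log n$ factor survives in the naive accounting, exactly reproducing Grigni--Sissokho). What the paper actually does is first split the scales into $O(\log(1/\epsilon))$ interleaved residue groups $J_j$ so that \emph{within} a group consecutive levels differ by a full $1/\epsilon$ factor, and then runs a credit scheme within a group: every $\mst$ edge is given $c=O(\avgH/\epsilon^3)$ credits once, each level maintains a diameter--credit invariant (a cluster of diameter $k$ holds at least $c\max\{k,\ell_i/2\}$ credits, and has diameter $O(\ell_i)$), and the level-$i$ clusters are built from level-$(i-1)$ clusters in four carefully designed phases so that the \emph{diameter reduction} realized by merging clusters frees exactly enough credit to pay for the level-$i$ spanner edges (with a small residual bag $B$ of $O(1/\epsilon^2)$ edges per level paid for globally). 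This inductive conservation of credit across levels is what eliminates the $\log n$ factor; it is not a matter of limiting how many scales touch a given $\mst$ edge. Without a concrete substitute for those diameter--credit invariants and the phase-by-phase cluster construction (high-degree supernodes, branching low-degree trees, high-diameter paths, and residual components each get their own argument), your proposal identifies where the proof must go but does not get there.
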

Our algorithm consists of a reduction phase and a greedy phase.  In the reduction phase, we adopt a technique of Chechik and Wulff-Nilsen~\cite{CW16}: edges of the graph are subdivided and their weights are rounded and scaled to guarantee that every $\mst$-edge has unit weight and we include all very low weight edges in the spanner (Appendix~\ref{app:reduction}).  In the greedy phase, we use the standard greedy algorithm for constructing a spanner to select edges from edges of the graph not included in the reduction phase (Appendix~\ref{sec:greed}).

As a result of the reduction phase, our spanner is not the ubiquitous {\em greedy spanner}.  However, since Filtser and Solomon have shown that greedy spanners are (nearly) optimal in their lightness~\cite{FS16}, our result implies that the greedy spanner for $H$-minor-free graphs is also light.

\subsection{Implication: Approximating TSP}

Light spanners have been used to give PTASes, and in some cases efficient PTASes, for the traveling salesperson problem (TSP) on various classes of graphs.  A PTAS, or polynomial-time approximation scheme, is an algorithm which, for a fixed error parameter $\epsilon$, finds a solution whose value is within $1\pm\epsilon$ of optimal in polynomial time. A PTAS is \emph{efficient} if its running time is $f(\epsilon)n^{O(1)}$ where $f(\epsilon)$ is a function of $\epsilon$. Rao and Smith~\cite{RS98} used light spanners of Euclidean graphs to give an EPTAS for Euclidean TSP. Arora, Grigni, Karger, Klein and Woloszyn~\cite{AGKKW98} used light spanners of planar graphs, given by Alth\"{o}fer, Das, Dobkin, Joseph and Soares~\cite{ADDJS93}, to design a PTAS for TSP in planar graphs with running time $n^{O(\frac{1}{\epsilon^2})}$. Klein~\cite{Klein05} improved upon this running time to $2^{O(\frac{1}{\epsilon^2})}n$ by modifying the PTAS framework, using the same light spanner. Borradaile, Demaine and Tazari generalized Klein's EPTAS to bounded genus graphs~\cite{BDT14}. 

In fact, it was in pursuit of a PTAS for TSP in $H$-minor-free graphs that Grigni and  Sissokho discovered the logarithmic bound on lightness (Equation~\eqref{eq:notlight}); however, the logarithmic bound implies only a quasi-polynomial time approximation scheme (QPTAS) for TSP~\cite{GS02}.
Demaine, Hajiaghayi and Kawarabayashi~\cite{DHK11} used Grigni and  Sissokho's spanner to give a PTAS for TSP in $H$-minor-free graphs with running time $n^{O(\poly(\frac{1}{\epsilon}))}$; that is, {\em not} an efficient PTAS.  However, Demaine, Hajiaghayi and Kawarabayashi's PTAS is {\em efficient if the spanner used is light}. Thus, the main result of this paper implies an efficient PTAS for TSP in  $H$-minor-free graphs.

\subsection{Techniques}

In proving the lightness of spanners in planar graphs~\cite{ADDJS93} and bounded genus graphs~\cite{Grigni00}, the  embedding of the graph was heavily used. Thus, it is natural to expect that showing minor-free graphs have light spanners would rely on the 
decomposition theorem of minor-free graphs by Robertson and Seymour~\cite{RS03}, which shows that graphs excluding a fixed minor can be decomposed into the clique-sum of graphs {\em nearly} embedded on surfaces of fixed genus.  Borradaile and Le~\cite{BL17} use this decomposition theorem to show that if graphs of bounded treewidth have light spanners, then $H$-minor-free graphs also have light spanners.  As graphs of bounded treewidth are generally regarded as \emph{easy instances} of $H$-minor-free graphs, it may be possible to give a simpler proof of lightness of spanners for $H$-minor-free graphs using this implication.

However, relying on the Robertson and Seymour decomposition theorem generally results in constants which are galactic in the size of the the minor~\cite{Lipton10,johnson87}.  In this work, we take a different approach which avoids this problem.  Our method is inspired from the recent work of  Chechik and Wulff-Nilsen~\cite{CW16} on spanners for general graphs which uses an iterative super-clustering technique~\cite{ALGP89,EP04}. Using the same technique in combination with amortized analysis, we show that $H$-minor-free graphs not only have light spanners, but also that the dependency of the lightness on $\epsilon$ and $|V(H)|$ is practical (Equation~\eqref{eq:light}).

At a high level, our proof shares several ideas with the work of Chechik and Wulff-Nilsen~\cite{CW16} who prove that (general) graphs have $(2k-1)\cdot(1+\epsilon)$-spanners with lightness $O_\epsilon(n^{1/k})$, removing a factor of $k/\log k$ from the previous best-known bound and matching Erd\H{o}s's girth conjecture~\cite{Erdos64} up to a $1+\epsilon$ factor.  Our work differs from Chechik and Wulff-Nilsen  in two major aspects. First, Chechik and Wulff-Nilsen reduce their problem down to a single hard case where the edges of the graph have weight at most  $g^k$ for some constant $g$.  In our problem, we must partition the edges according to their weight along a logarithmic scale and deal with each class of edges separately.
Second, we must employ the fact that $H$-minor-free graphs (and their minors) are sparse in order to get a lightness bound that does not depend on $n$.  

\subsection{Future directions}

Since we avoid relying on Robertson and Seymour's decomposition theorem and derive bounds using only the sparsity of graphs excluding a fixed minor, it is possible this technique could be extended to related {\em spanner-like} constructions that are used in the design of PTASes for connectivity problems.  Except for TSP, many connectivity problems~\cite{BDT14} have PTASes for bounded genus graphs but are not known to have PTASes for $H$-minor-free graphs -- for example, subset TSP and Steiner tree.   The PTASes for these problems rely on having a light subgraph that approximates the optimal solution within $1+\epsilon$  (and hence is {\em spanner-like}).  The construction of these subgraphs, though, rely heavily on the embedding of the graph on a surface and since the Robertson and Seymour decomposition gives only a weak notion of embedding for $H$-minor-free graphs, pushing these PTASes beyond surface embedded-graphs does not seem likely.  The work of this paper may be regarded as a first step toward designing {\em spanner-like} graphs for problems such as subset TSP and Steiner tree that do not rely on the embedding.

\section{Bounding the lightness of a $(1+\epsilon)$-spanner} \label{sec:bound-lightness}
As we already indicated, we start with a reduction that allows us to assume that the edges of the $\mst$ of the graph each have unit weight. (For details, see Appendix~\ref{app:reduction}.)
For simplicity of presentation, we will also assume that the spanner is a greedy $(1+s\cdot \epsilon))$-spanner for a sufficiently large constant $s$; this does not change the asymptotics of our lightness bound. 

Herein, we let $S$ be the edges of a greedy $(1+s \cdot \epsilon)$-spanner of graph $G$ with an $\mst$ having edges all of unit weight. We simply refer to $S$ as the \emph{spanner}. The greedy spanner considers the edges in non-decreasing order of weights and adds an edge $xy$ if $(1+s \cdot \epsilon)w(xy)$ is at most the $x$-to-$y$ distance in the current spanner (see Appendix~\ref{sec:greed} for a review).

We partition the edges of $S$ according to their weight as it will be simpler to bound the weight of subsets of $S$.  Let $J_0$ be the edges of $S$ of weight in the range $[1,{1\over\epsilon})$; note that $\mst \subseteq J_0$ and, since $G$ has $O(\avgH n)$ edges  and $w(\mst) = n-1$, 
\begin{equation}
w(J_0) = O(\avgH n/\epsilon) = O\left( {\avgH\over \epsilon} w(\mst) \right)\label{eq:J0}
\end{equation}
Let $\Pi_j^i$ be the edges of $S$ of weight in the range $[{2^j\over\epsilon^i},{2^{j+1}\over\epsilon^i})$ for every $i \in \mathbb{Z}^+$ and $j \in \{0,1, \ldots, \lceil \log \frac{1}{\epsilon} \rceil\}$.  Let $J_j = \cup_i \Pi_j^i$.  We will prove that 
\begin{lemma}\label{lem:main-lightness}
There exists a set of spanner edges $B$ such that $w(B) = O(\frac{1}{\epsilon^2}w(\mst))$ 
and for every $j \in \{0,\ldots, \lceil \log \frac{1}{\epsilon} \rceil\}$, 
\[w\left(\mst \cup (J_j\setminus B) \right)= O\left(\frac{\avgH}{\epsilon^3}\right) w(\mst).\]
\end{lemma}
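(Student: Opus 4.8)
The plan is to fix a scale index $j\in\{0,\ldots,\lceil\log\tfrac1\epsilon\rceil\}$ and to bound $w(\mst\cup(J_j\setminus B))$, where $J_j=\bigcup_{i\ge1}\Pi_j^i$ and $\Pi_j^i$ collects the spanner edges of weight in $[\,2^j/\epsilon^i,\,2^{j+1}/\epsilon^i\,)$; write $W_i=2^j/\epsilon^i$, so consecutive bands are a factor $1/\epsilon$ apart and at most $O(\log n/\log\tfrac1\epsilon)$ of them are nonempty. First I would record two consequences of greediness, both using that every $\mst$-edge has unit weight and is processed before every edge of $J_j$, so that when such an edge is processed the current spanner already contains all of the $\mst$. (a) If $C$ is a connected subtree of the $\mst$ with $w(C)<\tfrac{s\epsilon}{2}W_i$, then no edge of $\Pi_j^i$ has both endpoints in $C$, since the $\mst$ alone already supplies a short enough path to block the edge. (b) If $C,C'$ are vertex-disjoint such subtrees, at most one edge of $\Pi_j^i$ joins them: were $e_1$ processed before $e_2$, the path through $C$, then $e_1$, then $C'$ would have weight less than $w(e_1)+s\epsilon W_i\le(1+s\epsilon)w(e_2)$, so greedy would have rejected $e_2$.

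For each band I would then partition the $\mst$ into connected pieces of weight $\Theta(\epsilon W_i)$, strictly below $\tfrac{s\epsilon}{2}W_i$ (possible for $s$ a large enough constant) — the level-$i$ clusters; since $\epsilon W_i=2^j/\epsilon^{\,i-1}\ge1$ is at least the maximum $\mst$-edge weight, such a partition exists with $O(w(\mst)/(\epsilon W_i))$ pieces (for the nonempty bands the additive ``$+1$'' is absorbed). By (a) every $\Pi_j^i$-edge runs between two distinct level-$i$ clusters and by (b) at most one runs between any pair, so contracting the clusters — which yields a minor of $G$, hence $H$-minor-free — shows $|\Pi_j^i|$ is at most the number of edges of a simple $H$-minor-free graph on $O(w(\mst)/(\epsilon W_i))$ vertices, i.e.\ $O(\avgH\, w(\mst)/(\epsilon W_i))$; thus $w(\Pi_j^i)=O(\avgH\, w(\mst)/\epsilon)$ for each band. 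Summing this crude per-band estimate over all bands (and over all $j$) only reproduces the Grigni--Sissokho bound, with its $\log n$ factor; the point of the lemma is to get rid of that factor.

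To do so I would, following the iterative super-clustering of Chechik--Wulff-Nilsen, build the clusterings for $\Pi_j^1,\Pi_j^2,\ldots$ not independently but with $\mathcal{C}_{i+1}$ coarsening $\mathcal{C}_i$, each level-$(i+1)$ cluster being a union of level-$i$ clusters held together by a few edges of $\Pi_j^i$ added to an auxiliary tree extending the $\mst$. At step $i$ a level-$i$ cluster is \emph{popular} if it is joined by $\Pi_j^i$-edges to many nearby level-$i$ clusters and \emph{unpopular} otherwise; popular clusters are absorbed into super-clusters, with the $O(1)$ glue edges per absorbed cluster put into the shared set $B$, while the $\Pi_j^i$-edges incident only to unpopular clusters are charged at step $i$. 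The popularity threshold is tuned so that (i) the number of clusters shrinks fast enough from band to band that $\sum_i$ (weight charged at band $i$) telescopes to $O(\avgH/\epsilon^3)\,w(\mst)$ instead of accumulating a $\log n$ factor, and (ii) across all $j$ the glue edges satisfy $w(B)=O(w(\mst)/\epsilon^2)$; both are controlled by an amortized potential keyed to the current clusters and the band scale, with the sparsity of $H$-minor-free graphs \emph{and of their minors} bounding how many $\Pi_j^i$-edges a neighbourhood can carry before its clusters are forced to merge.

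The main obstacle is exactly this amortization: choosing the popularity threshold so that the per-band charges form a convergent geometric-type series — which is what kills the $\log n$ — while the glue edges dumped into $B$ stay within $O(w(\mst)/\epsilon^2)$, all the while maintaining at every level the invariant that clusters have weight $\Theta(\epsilon W_i)$ so that facts (a) and (b) continue to apply. Reconciling these three requirements, rather than carrying out any single clustering step or the routine partition and sparsity estimates, is where the real work lies.
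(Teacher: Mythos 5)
Your high-level plan---fix the scale $j$, process bands $i=1,2,\ldots$, build cluster hierarchies with $\mathcal{C}_{i+1}$ coarsening $\mathcal{C}_i$, and use sparsity of $H$-minors on the contracted cluster graph---is the same skeleton the paper uses, and your greediness facts (a) and (b) are correct and correspond to the paper's Observation~\ref{obs:K-simple}. The crude per-band estimate reproducing the $O(\avgH\log n/\epsilon)$ bound is also a fair sanity check. But there is a genuine gap in how you propose to kill the $\log n$ factor, and it is not just a detail you haven't written out.

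The problem is the disposition of the ``glue'' edges: you propose to place $O(1)$ edges of $\Pi_j^i$ per absorbed level-$i$ cluster into the bag $B$. At level $i$ your clusters have weight $\Theta(\epsilon W_i)$ and are vertex-disjoint pieces of the $\mst$, so there are $\Theta(w(\mst)/(\epsilon W_i))$ of them, and each contributed glue edge has weight $\Theta(W_i)$; even absorbing only a constant fraction of clusters per level contributes $\Theta(w(\mst)/\epsilon)$ to $w(B)$ \emph{per level}, and the number of nonempty levels can be $\Theta(\log n/\log\tfrac1\epsilon)$. So $w(B)$ comes out as $\Theta(w(\mst)\log n/\epsilon)$, not $O(w(\mst)/\epsilon^2)$---you have simply relocated the $\log n$ from the charged edges into $B$, not eliminated it. Making the cluster count ``shrink fast enough'' cannot save you either, because the disjointness of pieces of total weight $\Theta(w(\mst))$ and the weight-$\Theta(\epsilon W_i)$ invariant you insist on pin the cluster count at each level up to constants.

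What the paper does instead is pay for the glue edges, not bag them. It maintains a \emph{diameter}-credit invariant (not a weight invariant): each level-$i$ cluster of diameter $k$ carries $\geq c\cdot\max(k,\ell_i/2)$ credits. When $\epsilon$-clusters are grouped into a level-$i$ cluster, the diameter of the union is much smaller than the sum of the $\epsilon$-clusters' diameters (they hang off a tree of bounded depth or lie on a bounded-length path), so a large amount of credit is freed; it is this \emph{diameter-reduction surplus} that pays for all the $E_i$-edges incident to the cluster, including the connecting edges. The bag $B$ is reserved only for the degenerate case in which clustering does not happen at all (a single low-degree path with all $E_i$-edges near its ends); the paper shows this contributes $O(1/\epsilon^2)$ edges per level, whose total weight telescopes to $O(w(\mst)/\epsilon^2)$ precisely because $\sum_i\ell_i$ is a geometric series dominated by $\ell_{\max}\leq w(\mst)$. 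Your proposal is missing this credit mechanism entirely---your ``amortized potential keyed to the current clusters and the band scale'' is the right instinct but is left unspecified, and your explicit accounting (weight-$\Theta(\epsilon W_i)$ clusters, glue edges into $B$) fails. As you yourself note, ``reconciling these three requirements\ldots is where the real work lies''; that is precisely the content of the lemma, and the sketch does not supply it.
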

Combined with Equation~(\ref{eq:J0}), Lemma~\ref{lem:main-lightness} gives us
\begin{equation*}
w(S) = w(B) + \sum_{j=0}^{\lceil \log \frac{1}{\epsilon} \rceil} w(J_j\setminus B) = O\left(\frac{\avgH}{\epsilon^3}\log  \frac{1}{\epsilon}\right)w(\mst)
\end{equation*}
which, combined with the reduction to unit-weight $\mst$-edges, proves Theorem~\ref{thm:main} (noting that the stretch condition of $S$ is satisfied since $S$ is a greedy spanner of $G$).

\bigskip 

In the remainder, we prove Lemma~\ref{lem:main-lightness} for a fixed $j \geq 0$. Let $E_i = \Pi_j^i$ for this fixed $j$ and some $i \in \mathbb{Z}^+$.  Let $\ell_i =  \frac{2^{j+1}}{\epsilon^{i}}$; then, the weight of the edges in $E_i$ are in the range $[\ell_i/2,\ell_i)$. Let $E_0 = \mst$.  We refer to the indices $0, 1, 2, \ldots$ of the edge partition as levels.

\subsection{Proof overview}

To prove Lemma~\ref{lem:main-lightness}, we use an amortized analysis, initially assigning each edge of $E_0 = \mst$ a credit of $c = O\left(\frac{\avgH}{\epsilon^3}\right)$.  For each level, we partition the vertices of the spanner into {\em clusters} where each cluster is defined by a subgraph of the graph formed by the edges in levels 0 through $i$. (Note that not every edge of level 0 through $i$ may belong to a cluster; some edges may go between clusters.)  Level $i-1$ clusters are a refinements of level $i$ clusters.  We prove (by induction over the levels), that the clusters for each level satisfy the following \emph{diameter-credit invariants}: 
\begin{description}
\item[DC1] A cluster in level $i$ of diameter $k$ has at least $c \cdot\max\{k,\frac{\ell_i}{2}\}$ credits.
\item[DC2] A cluster in level $i$ has diameter  at most $g\ell_i$ for some constant $g > 2$ (specified later).
\end{description}

We achieve the diameter-credit invariants for the base case (level $0$) as follows.  Although a simpler proof could be given, the following method we use will be revisited in later, more complex, constructions.  Recall that $E_0 = \mst$ and that, in a greedy spanner, the shortest path between endpoints of any edge is the edge itself. If the diameter of $E_0$ is $< \ell_0/2 = O(1)$, edges in the spanner have length at most $\ell_0/2$. Thus, it is trivial to bound the weight of all the spanner edges across all levels using the sparsity of $H$-minor-free graphs.
Assuming a higher diameter, let $\cal T$ be a maximal collection of vertex-disjoint subtrees of $E_0$, each having diameter $\lceil \ell_0/2 \rceil$ (chosen, for example, greedily).  Delete $\cal T$ from $E_0$.  What is left is a set of trees $\cal T'$, each of diameter $< \ell_0/2$.  For each tree $T \in {\cal T}$, let $C_T$ be the union of $T$ with any neighboring trees in $\cal T'$ (connected to $T$ by a single edge of $E_0$).  By construction, $C_T$ has diameter at most $3\ell_0/2 + 1 \le 2\ell_0$ (giving {DC2}).  $C_T$ is assigned the credits of all the edges in the cluster  each of which have credit $c$ (giving {DC1}).

We build the clusters for level $i$ from the clusters of level ${i-1}$ in a series of four phases (Section~\ref{subsec:dc2}).  We call the clusters of level ${i-1}$
{\em $\epsilon$-clusters}, since the diameter of clusters in level ${i-1}$ are an $\epsilon$-fraction of the diameters of clusters in level $i$.  A cluster in level $i$ is induced by a group of $\epsilon$-clusters.  

We try to group the $\epsilon$-clusters so that the diameter of the group is smaller than the sum of the diameters of the $\epsilon$-clusters in the group (Phases 1 to 3).  This {\em diameter reduction} will give us an excess of credit beyond what is needed to maintain {DC1} which allows us to pay for the edges of $E_i$.  We will use the sparsity of $H$-minor free graphs to argue that each $\epsilon$-cluster needs to pay for, on average, a constant number of edges of $E_i$. In Phase 4, we further grow existing clusters via $\mst$ edges and unpaid edges of $E_i$.

Showing that the clusters for level $i$ satisfy invariant {DC2} will be seen directly from the construction. However, satisfying invariant {DC1} is trickier. Consider a path $D$ witnessing the diameter of a level-$i$ cluster $\mathcal{B}$.  Let $\mathcal D$ be the graph obtained from $D$ by contracting $\epsilon$-clusters; we call $\mathcal D$ the {\em cluster-diameter path}. The edges of $\mathcal{D}$ are a subset of $\mst \cup E_i$. If $\mathcal{D}$ does not contain an edge of $E_i$, the credits from the $\epsilon$-clusters and $\mst$ edges of $\mathcal{D}$ are sufficient for satisfying invariant {DC1} for $\mathcal{B}$. However, since edges of $E_i$ are not initialized with any credit, when $\mathcal{D}$ contains an edge of $E_i$, we must use credits of the $\epsilon$-clusters of $\mathcal{B}$ outside $\mathcal{D}$ to satisfy {DC1} as well as pay for $E_i$. Finally, we need to pay for edges of $E_i$ that go between clusters.  We do so in two ways. First, some edges of $E_i$ will be paid for {\em by this level} by using credit leftover after satisfying {DC1}. Second, the remaining edges will be paid for at the end of the entire process (over all levels); we show that there are few such edges over all levels (the edges $B$ of Lemma~\ref{lem:main-lightness}).

In our proof below, the fixed constant $g$ required in DC2 is roughly 100 and $\epsilon$ is sufficiently smaller than $\frac{1}{g}$. For simplicity of presentation, we make no attempt to optimize $g$. We note that a $(1+\epsilon)$-spanner is also a $(1+2\epsilon)$-spanner for any constant $\epsilon$ and the asymptotic dependency of the lightness on $\epsilon$ remains unchanged. That is, requiring that $\epsilon$ is sufficiently small is not a limitation on the range of the parameter $\epsilon$.

\section{Achieving diameter-credit invariants} \label{subsec:dc2}

In this section, we construct clusters for level $i$ that satisfy {DC2} using the induction hypothesis that $\epsilon$-clusters (clusters of level ${i-1}$) satisfy the diameter-credit invariants ({DC1} and {DC2}). Since $\ell_{i-1} = \epsilon \ell_i$, we let $\ell = \ell_i$, and drop the subscript in the remainder.  For {DC2}, we need to group $\epsilon$-clusters into clusters of diameter $\Theta(\ell)$. Let $\mathcal{C}_\epsilon$ be the collection of $\epsilon$-clusters and $\mathcal{C}$ be the set of clusters that we construct for level $i$.  Initially, $\mathcal{C} = \emptyset$. We define a {\em cluster graph} ${\mathcal K}(\mathcal{C}_\epsilon, E_i)$ whose vertices are the $\epsilon$-clusters and edges are the edges of $E_i$.  $\mathcal{K}(\mathcal{C}_\epsilon, E_i)$ can be obtained from the subgraph of $G$ formed by the edges of the $\epsilon$-clusters and $E_i$ by contracting each $\epsilon$-cluster to a single vertex. Recall each $\epsilon$-cluster is a subgraph of the graph formed by the edges in levels 0 through $i-1$.

\begin{observation}\label{obs:K-simple}
$\mathcal{K}(\mathcal{C}_\epsilon, E_i)$ is a simple graph.
\end{observation}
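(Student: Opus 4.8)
The plan is to show that $\mathcal{K}(\mathcal{C}_\epsilon, E_i)$ has neither self-loops nor parallel edges, since a simple graph is precisely one with no self-loops and no multi-edges. I would begin by recalling the two facts I have available: each $\epsilon$-cluster is a connected subgraph formed from edges in levels $0$ through $i-1$ (so edges of weight strictly less than $\ell_{i-1}/2 \cdot (\text{something}) $, and in any case of weight $< \ell_i/2 = \ell/2$ after using $\ell_{i-1} = \epsilon\ell_i$ and summing the geometric series of level-weight ranges), while every edge of $E_i$ has weight in $[\ell/2, \ell)$. The key quantitative input is therefore that the diameter of any single $\epsilon$-cluster is much smaller than the weight of any $E_i$-edge; more precisely, by DC2 for level $i-1$ an $\epsilon$-cluster has diameter at most $g\ell_{i-1} = g\epsilon\ell$, which since $\epsilon$ is sufficiently smaller than $1/g$ is less than $\ell/2$.

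First I would rule out self-loops. A self-loop in $\mathcal{K}(\mathcal{C}_\epsilon, E_i)$ would be an edge $xy \in E_i$ whose endpoints $x$ and $y$ lie in the same $\epsilon$-cluster $C$. But then $d_C(x,y) \le \dm(C) \le g\epsilon\ell < \ell/2 \le w(xy)$, so there is an $x$-to-$y$ path in the current spanner (indeed within $C$, which consists of edges considered before $xy$ by the greedy algorithm since they are lighter) of length strictly less than $(1+s\epsilon)w(xy)$. Hence the greedy algorithm would not have added $xy$ to $S$, contradicting $xy \in E_i \subseteq S$. So no self-loops exist.

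Next I would rule out parallel edges. Suppose $e = xy$ and $e' = x'y'$ are two distinct edges of $E_i$ both joining the same pair of $\epsilon$-clusters $C$ and $C'$, with $x, x' \in C$ and $y, y' \in C'$. Without loss of generality $e'$ is considered after $e$ by the greedy algorithm (equal weights broken by the order of consideration). When $e'$ is considered, the spanner already contains $e$ (it was added at some point since $e \in S$) together with all of $C$ and all of $C'$ (again, these are lighter edges, all in levels $\le i$ and of weight $< \ell/2 \le w(e')$, hence considered earlier). Therefore $d_{\text{current spanner}}(x', y') \le d_C(x', x) + w(e) + d_{C'}(y, y') \le g\epsilon\ell + \ell + g\epsilon\ell = (1 + 2g\epsilon)\ell$. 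On the other hand $(1+s\epsilon)w(e') \ge (1+s\epsilon)\ell/2$. Here I need to be a bit careful: $(1+2g\epsilon)\ell$ is not automatically at most $(1+s\epsilon)\ell/2$. The cleaner route is: $w(e') \ge \ell/2$, and we want $d_{\text{current}}(x',y') < (1+s\epsilon)w(e')$; it suffices that $(1+2g\epsilon)\ell < (1+s\epsilon)\cdot \ell/2$, which fails for large $\ell$-factors. So instead I would use the lower bound $w(e') \ge \ell/2$ together with the fact that $e$ itself has $w(e) \ge \ell/2$ as well, and compare: $(1+2g\epsilon)\ell$ versus $(1+s\epsilon)w(e')$. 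Since $w(e') \ge \ell/2$ and the additive slack $2g\epsilon\ell$ is tiny compared to $\ell/2$ when $\epsilon < 1/(4g)$, we get $d_{\text{current}}(x',y') \le \ell + 2g\epsilon\ell \le \ell(1 + 1/2) \le 2\ell \le 4 w(e')$ — this is too weak. The right comparison is: we need strict inequality $d_{\text{current}}(x',y') < (1+s\epsilon) w(e')$, and since $w(e') \ge \ell/2$ while $d_{\text{current}}(x',y') \le d_C(x',x) + w(e) + d_{C'}(y,y')$ with $w(e) < \ell \le 2w(e')$... I would clean this up by observing $w(e) < \ell \le 2 w(e')$ gives $d_{\text{current}}(x',y') < 2 w(e') + 2g\epsilon\ell \le 2w(e') + 4g\epsilon w(e') = (2 + 4g\epsilon)w(e')$, which is $> (1+s\epsilon)w(e')$ unless $s$ is large — so this does not immediately work either, and the resolution must be that $e$ and $e'$ cannot both be in $E_i$ joining $C$ to $C'$ for a different reason: the greedy algorithm, when considering $e'$, sees a path of length $d_C(x',x) + w(e) + d_{C'}(y,y') \le w(e) + 2g\epsilon\ell$. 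Since $w(e) < \ell$ and $w(e') \ge \ell/2$, and in fact both $e,e'$ have weight in $[\ell/2,\ell)$, we have $w(e) < 2w(e')$; but more usefully, if $e'$ was added \emph{after} $e$ and the $\epsilon$-clusters were already formed before both (they are level $\le i-1$), then at the time $e'$ is considered the distance is at most $w(e)(1 + 4g\epsilon) < (1+s\epsilon)w(e')$ provided $w(e) \le w(e')$ — which need not hold. The honest conclusion: I expect the intended argument reorders so that the \emph{lighter} of $e, e'$, call it $e$, is considered first, and then when the heavier $e'$ is considered, $d_{\text{current}}(x',y') \le w(e)(1+4g\epsilon) \le w(e')(1+4g\epsilon) < (1+s\epsilon)w(e')$ for $s > 4g$, contradicting the addition of $e'$.

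The main obstacle, then, is precisely this constant-juggling in the parallel-edge case: one must exploit that among two parallel $E_i$-edges the greedy algorithm processes them in weight order, so that the second one sees a detour of length (first edge's weight) plus two $\epsilon$-cluster diameters, and the $\epsilon$-cluster diameters are a $\Theta(\epsilon)$-fraction of $\ell$ hence a $\Theta(\epsilon)$-fraction of the first edge's weight; choosing $s$ larger than $4g$ (which is permitted since $s$ is "a sufficiently large constant") then makes the detour shorter than the $(1+s\epsilon)$-stretch bound, so the second edge would never have been added — contradiction. The self-loop case is the easy warm-up and follows the same template with zero copies of the "other" edge. Once both cases are dispatched, $\mathcal{K}(\mathcal{C}_\epsilon, E_i)$ has no loops and no parallel edges, hence is simple, completing the proof of Observation~\ref{obs:K-simple}.
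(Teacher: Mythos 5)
Your proposal is correct and, once the meandering subsides, lands on essentially the same argument as the paper: rule out self-loops by comparing $g\epsilon\ell$ to $\ell/2$, and rule out parallel edges by routing the heavier of the two through the lighter one plus the two $\epsilon$-cluster interiors, giving a detour of length at most $(1+4g\epsilon)$ times the heavier edge's weight, which contradicts the greedy $(1+s\epsilon)$-spanner condition for $s > 4g$. The paper states the parallel-edge case more crisply (WLOG $w(x_2y_2) \le w(x_1y_1)$, exhibit the short path, done), but the underlying idea — exploit that all constituent edges of the detour were processed earlier and that cluster diameters are an $O(\epsilon)$-fraction of edge weights — is identical.
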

\begin{proof}
  Since $g\epsilon \ell \leq \frac{\ell}{2}$ when $\epsilon$ is sufficiently smaller than $\frac{1}{g}$, there are no self-loops in $\mathcal{K}(\mathcal{C}_\epsilon, E_i)$. Suppose that there are parallel edges $x_1y_1$ and $x_2y_2$ where $x_1, x_2 \in X \in \mathcal{C}_\epsilon$ and $y_1,y_2 \in Y \in \mathcal{C}_\epsilon$.  Let $w(x_2y_2) \le w(x_1y_1)$, w.l.o.g..  Then, the path $P$ consisting of the shortest $x_1$-to-$x_2$ path in $X$, edge $x_2y_2$ and the shortest $y_1$-to-$y_2$ path in $Y$ has length at most $w(x_2y_2) + 2g\epsilon \ell$ by {DC2}.  Since $w(x_2y_2) \le w(x_1y_1)$ and $w(x_1y_1) \ge \ell/2$, $P$ has length at most $(1+4g\epsilon)w(x_1y_1)$.  Therefore, if our spanner is a greedy $(1+4g\epsilon)$-spanner, $x_1y_1$ would not be added to the spanner.
\end{proof}

We call an $\epsilon$-cluster $X$ \emph{high-degree} if its degree in the cluster graph is at least $\frac{20}{\epsilon}$, and \emph{low-degree} otherwise. For each $\epsilon$-cluster $X$, we use $\mathcal{C}(X)$ to denote the cluster in $\mathcal{C}$ that contains $X$. To both maintaining diameter-credit invariants and buying edges of $E_i$, we use credits of $\epsilon$-clusters in $\mathcal{C}(X)$ and $\mst$ edges connecting $\epsilon$-clusters in $\mathcal{C}(X)$. We {\em save} credits of a subset $\mathcal{S}(X)$ of $\epsilon$-clusters of $\mathcal{C}(X)$ and $\mst$-edges connecting $\epsilon$-clusters in $\mathcal{S}(X)$ for maintaining invariant DC1. We then {\em reserve} credits of another subset $\mathcal{R}(X)$ of $\epsilon$-clusters to pay for edges of of $E_i$ incident to $\epsilon$-clusters in $\mathcal{S}(X) \cup \mathcal{R}(X)$. We let other $\epsilon$-clusters in $\mathcal{C}(X) \setminus (\mathcal{S}(X) \cup \mathcal{R}(X))$ \emph{release} their credits to pay for their incident edges of $E_i$; we call such $\epsilon$-clusters \emph{releasing} $\epsilon$-clusters. We designate an $\epsilon$-cluster in $\mathcal{C}(X)$ to be its \emph{center} and let the center collect the credits of $\epsilon$-clusters in $\mathcal{R}(X)$. The credits collected by the center are used to pay for edges of $E_i$ incident to non-releasing $\epsilon$-clusters. 

\subsection{Phase 1: High-degree $\epsilon$-clusters} In this phase, we group high-degree $\epsilon$-clusters. The goal is to ensure that any edge  of $E_i$ not incident to a low-degree $\epsilon$-cluster has both endpoints in the new clusters formed (possibly in distinct clusters). Then we can use sparsity of the subgraph of $\mathcal{K}(\mathcal{C}_\epsilon, E_i)$ induced by the $\epsilon$-clusters that were clustered to argue that the clusters can pay for all such edges; this is possible since this subgraph is a minor of $G$. The remaining edges that have not been paid for are all incident to low-degree $\epsilon$-clusters which we deal with in later phases.

With all $\epsilon$-clusters initially {\em unmarked}, we apply Step 1 until it no longer applies and then apply Step 2 to all remaining high-degree $\epsilon$-clusters at once  and breaking ties arbitrarily:
\begin{enumerate}[Step 1]
\item If there is a high-degree $\epsilon$-cluster $X$ such that all of its neighbor $\epsilon$-clusters in $\mathcal{K}$ are unmarked, we group $X$, edges in $E_i$ incident to $X$ and its neighboring $\epsilon$-cluster into a new cluster $\mathcal{C}(X)$. We then mark all $\epsilon$-clusters in $\mathcal{C}(X)$. We call $X$ the \emph{center} $\epsilon$-cluster of $\mathcal{C}(X)$.

\item After Step 1, any unmarked high-degree $\epsilon$-cluster, say $Y$, must have at least one marked neighboring $\epsilon$-cluster, say $Z$.  We add $Y$ and the edge of $E_i$ between $Y$ and $Z$ to $\mathcal{C}(Z)$ and mark $Y$.
\end{enumerate}

\begin{figure}
\centering
\includegraphics[scale = 1]{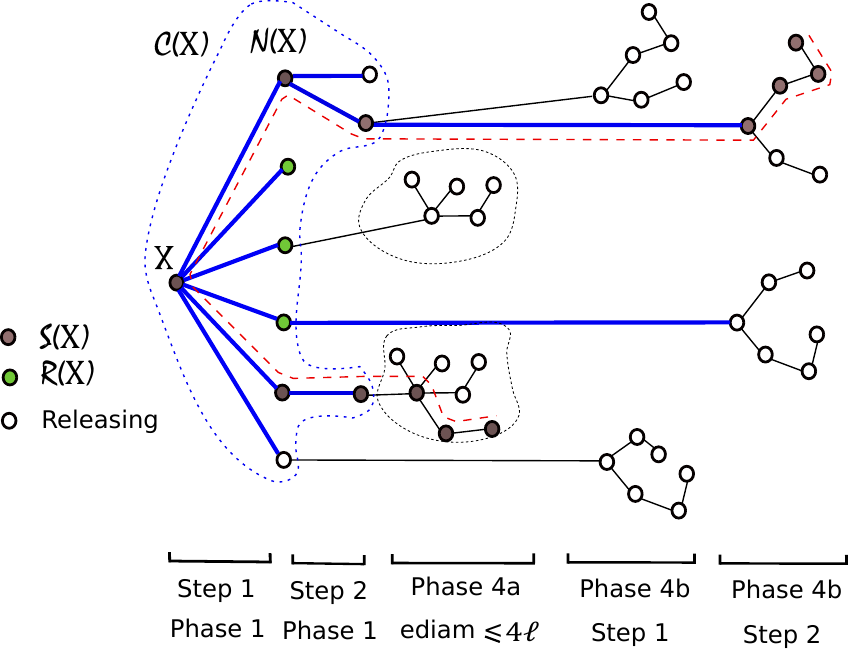}
\caption{A cluster $\mathcal{C}(X)$ formed in Phase 1 is enclosed in the dotted blue curve. The set $\mathcal{S}(X)$ consists of five gray $\epsilon$-clusters inside the dotted blue closed curve and $\mathcal{S}(X)$ consists of green-shaded $\epsilon$-clusters. Remaining hollow $\epsilon$-clusters are releasing. Cluster $\mathcal{C}(X)$ will be augmented further in Phase 4 and augmenting $\epsilon$-clusters are outside the dotted blue curve.  Solid blue edges are in $E_i$ and thin black edges are in $\MST$. The diameter path $\mathcal{D}$  is highlighted by the dashed red curve and $\epsilon$-clusters in $\mathcal{D}$ are gray-shaded.}
\label{fig:P1N}
\end{figure}

In the following, the upper bound is used to guarantee DC2 and the lower bound will be used to guarantee DC1. 
\begin{claim} \label{clm:diam-clsuter-P1}
 The diameter of each cluster added in Phase 1 is at least $\ell$ and at most $(4 + 5g\epsilon)\ell$.
\end{claim}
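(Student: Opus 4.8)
The plan is to analyze a cluster $\mathcal{C}(X)$ produced in Phase 1 by tracing through exactly how it is assembled in Steps 1 and 2, and then bounding its diameter in the metric of the subgraph of $G$ underlying the relevant $\epsilon$-clusters and $E_i$-edges. Recall each $\epsilon$-cluster has diameter at most $g\epsilon\ell$ by the inductive invariant DC2 (with $\ell = \ell_i$), and each edge of $E_i$ has weight in $[\ell/2,\ell)$, hence at most $\ell$. A cluster $\mathcal{C}(X)$ created in Step 1 is a ``star'' of $\epsilon$-clusters: a center $\epsilon$-cluster $X$ together with all of its neighbors in $\mathcal{K}(\mathcal{C}_\epsilon,E_i)$, joined to $X$ by single $E_i$-edges. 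In Step 2 each remaining high-degree $\epsilon$-cluster $Y$ is appended to some already-created cluster $\mathcal{C}(Z)$ via one $E_i$-edge from $Y$ to a marked neighbor $Z \in \mathcal{C}(Z)$; crucially this marked neighbor may itself be either the center of $\mathcal{C}(Z)$ or one of its leaf $\epsilon$-clusters, so Step 2 can lengthen shortest paths within a cluster by at most one extra ``hop'' on each side.

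For the \textbf{upper bound}, I would bound the distance between any two vertices $u,v$ in $\mathcal{C}(X)$ by walking from $u$ to the center of its cluster and then out to $v$. In the worst case $u$ lies in an $\epsilon$-cluster $Y_u$ attached in Step 2 to a leaf $Z_u$ of the original star, and similarly for $v$. Then a $u$-to-center path traverses: the interior of $Y_u$ (cost $\le g\epsilon\ell$), an $E_i$-edge $Y_uZ_u$ (cost $\le \ell$), the interior of $Z_u$ (cost $\le g\epsilon\ell$), an $E_i$-edge $Z_uX$ (cost $\le \ell$), and part of the interior of the center $X$ (cost $\le g\epsilon\ell$). That is at most $2\ell + 3g\epsilon\ell$ from $u$ to a vertex of $X$, and symmetrically for $v$, but sharing the traversal of $X$ once; summing gives diameter at most $4\ell + 5g\epsilon\ell = (4+5g\epsilon)\ell$, matching the claimed bound. (One should double-check the accounting of how many $\epsilon$-cluster interiors are crossed — that is the one place where an off-by-one would matter — but five interior-crossings is exactly what the star-plus-two-Step-2-attachments picture gives.)

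For the \textbf{lower bound}, it suffices to exhibit two vertices in $\mathcal{C}(X)$ at distance $\ge \ell$ in the underlying graph. Since $X$ is high-degree, $X$ has at least $\tfrac{20}{\epsilon} \ge 2$ neighbors in $\mathcal{K}$, so in Step 1 the cluster $\mathcal{C}(X)$ contains $X$ together with at least one neighbor $\epsilon$-cluster $W$, joined by an $E_i$-edge $xw$ with $w(xw) \ge \ell/2$. To get a full $\ell$, I would use two distinct neighbors $W_1, W_2$ of $X$ (available since the degree is at least $\tfrac{20}{\epsilon}$), with $E_i$-edges $xw_1$ and $x'w_2$ where $x,x' \in X$. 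Pick $a \in W_1$, $b \in W_2$. Any $a$-to-$b$ path in the graph must leave $W_1$ and enter $W_2$; since $W_1$ and $W_2$ are distinct $\epsilon$-clusters and (by Observation 3.1) $\mathcal{K}$ is simple with no parallel edges, an $a$-to-$b$ walk that stays inside $\mathcal{C}(X)$'s underlying graph must use at least two $E_i$-edges (one to exit $W_1$'s side, one to enter $W_2$'s side), contributing at least $2\cdot(\ell/2) = \ell$; alternatively, if the shortest $a$-to-$b$ path were shorter than $\ell$ it would violate the greedy-spanner stretch condition on one of these $E_i$-edges against a path through $X$, just as in the proof of Observation 3.1. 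Either argument gives $\mathrm{diam}(\mathcal{C}(X)) \ge \ell$.

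The \textbf{main obstacle} I anticipate is the lower bound: it requires being careful about \emph{which} distance is meant — the diameter of $\mathcal{C}(X)$ should be measured in the subgraph of $G$ spanned by the edges making up $\mathcal{C}(X)$ (so that DC1/DC2 are internally consistent across levels), not in all of $G$ — and then arguing that no ``shortcut'' through the $\epsilon$-cluster interiors collapses the two $E_i$-edges. This is exactly the kind of shortcutting argument already used in Observation 3.1 (invoking that the spanner is greedy with stretch $1+4g\epsilon$, so a short detour through a low-diameter $\epsilon$-cluster cannot replace an $E_i$-edge of weight $\ge \ell/2$), so I would reuse that mechanism rather than reprove it. The upper bound is essentially a bounded-length walk computation and should go through routinely once the Step-1/Step-2 structure is pinned down.
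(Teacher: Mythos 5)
Your proof is correct and takes essentially the same approach as the paper: the upper bound counts at most five $\epsilon$-cluster interiors and four $E_i$-edges on a worst-case path through a Step-1 star with Step-2 augmentations ($5g\epsilon\ell + 4\ell$), and the lower bound observes that the cluster, viewed as a tree of $\epsilon$-clusters, contains at least two $E_i$-edges each of weight $\ge \ell/2$ that any path between vertices of two leaf $\epsilon$-clusters must traverse. One small quibble: your ``alternative'' lower-bound argument invoking the greedy stretch condition is unnecessary and slightly conflates $G$-distance with the in-cluster distance that DC1/DC2 actually refer to; your primary two-edge-counting argument is the clean one and is exactly what the paper does.
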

\begin{proof}
Since the clusters formed are trees each containing at least two edges of $E_i$ and since each edge of $E_i$ has weight at least $\ell/2$, the resulting clusters have diameter at least $\ell$.

Consider an $\epsilon$-cluster $X$ that is the center of a cluster $C$ in Step 1 that is augmented to $\widehat C$ in Step 2 (where, possibly $C = \widehat C$). The upper bound on the diameter of $\widehat C$ comes from observing that any two vertices in $\widehat C$ are connected via at most 5 $\epsilon$-clusters and via at most 4 edges of $E_i$ (each $\epsilon$-cluster that is clustered in Step 2 is the neighbor of a marked $\epsilon$-cluster from Step 1).  Since $\epsilon$-clusters have diameter at most $g\epsilon\ell$ and edges of $E_i$ have weight at most $\ell$, the diameter of $\widehat C$ is at most $(4 + 5g\epsilon)\ell$.
\end{proof}

Let $\mathcal{C}(X)$ be a cluster in Phase 1 with the center $X$. Let $\mathcal{N}(X)$ be the set of $X$'s neighbors in the cluster graph $K(\mathcal{C}_\epsilon, E_i)$.  By construction, $\mathcal{C}(X)$ is a tree of $\epsilon$-clusters. Thus, at most five $\epsilon$-clusters in $\mathcal{C}(X)$  would be in the cluster-diameter path $\mathcal{D}$ while at most three of them are in $\mathcal{N}(X)\cup \{X\}$. We use the credit of $X$ and of two $\epsilon$-clusters in $\mathcal{N}(X)$ for maintaining DC1. Let this set of three $\epsilon$-clusters be $\mathcal{S}(X)$. Since $X$ is high-degree and $\epsilon < 1$, $\mathcal{N}(X)\setminus \mathcal{S}(X)$ has at least $\frac{20}{\epsilon}-2 \geq \frac{18}{\epsilon}$ $\epsilon$-clusters. 
Let $\mathcal{R}(X)$ be any subset of $\frac{18}{\epsilon}$ $\epsilon$-clusters in $\mathcal{N}(X) \setminus \mathcal{S}(X)$. The center $X$ collects the credits of $\epsilon$-clusters in $\mathcal{R}(X)$. We let other $\epsilon$-clusters in $\mathcal{C}\setminus (\mathcal{R}(X) \cup \mathcal{S}(X))$ release their own credits; we call such $\epsilon$-clusters \emph{releasing $\epsilon$-clusters}. By diameter-credit invariants for level ${i-1}$, each $\epsilon$-cluster has at  least $\frac{c\epsilon\ell}{2}$ credits. Thus, we have:

\begin{observation} \label{obs:P1-center-credit}
The center $X$ of  $\mathcal{C}(X)$ collects at least $9c\ell$ credits.
\end{observation}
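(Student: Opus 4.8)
The plan is that this is an immediate consequence of the credit bound recorded in the sentence just before the statement. Recall that, by the diameter-credit invariants for level $i-1$ (the induction hypothesis of Section~\ref{subsec:dc2}), every $\epsilon$-cluster carries at least $\frac{c\epsilon\ell}{2}$ credits: indeed DC1 at level $i-1$ gives any such cluster, say of diameter $k$, at least $c\cdot\max\{k,\ell_{i-1}/2\}\ge \frac{c\ell_{i-1}}{2}$ credits, and $\ell_{i-1}=\epsilon\ell$ since $\ell=\ell_i$.

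First I would note that $X$ collects precisely the credits of the $\epsilon$-clusters in $\mathcal{R}(X)$, and that $|\mathcal{R}(X)|=\frac{18}{\epsilon}$ by its definition as a subset of $\mathcal{N}(X)\setminus\mathcal{S}(X)$ of that size; this choice is possible because $X$ is high-degree (degree at least $\frac{20}{\epsilon}$ in the cluster graph) and $|\mathcal{S}(X)|\le 3$, so $|\mathcal{N}(X)\setminus\mathcal{S}(X)|\ge \frac{18}{\epsilon}$. Then I would simply sum the per-cluster bound: the $\epsilon$-clusters of $\mathcal{R}(X)$ are pairwise vertex-disjoint (clusters of a common level are vertex-disjoint subgraphs of $G$), so the edge sets carrying their credits are disjoint and the credits add up, yielding at least $\frac{18}{\epsilon}\cdot\frac{c\epsilon\ell}{2}=9c\ell$ credits collected by $X$.

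The only point that needs care — and the reason $\mathcal{R}(X)$ was defined disjoint from $\mathcal{S}(X)\cup\{X\}$ in the construction above — is that none of these credits has been committed elsewhere: $X$ itself does not lie in $\mathcal{N}(X)$ because $\mathcal{K}(\mathcal{C}_\epsilon,E_i)$ has no self-loops (Observation~\ref{obs:K-simple}), and the credits saved for maintaining DC1 live on $\mathcal{S}(X)$, which is disjoint from $\mathcal{R}(X)$ by choice. So I do not anticipate any real obstacle here; the substance of the observation is the arithmetic $\frac{18}{\epsilon}\cdot\frac{c\epsilon\ell}{2}=9c\ell$, together with the bookkeeping that no $\epsilon$-cluster's credit is double-counted.
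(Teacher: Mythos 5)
Your proposal is correct and follows essentially the same reasoning as the paper: the center collects the credits of the $\frac{18}{\epsilon}$ $\epsilon$-clusters in $\mathcal{R}(X)$, each of which holds at least $\frac{c\epsilon\ell}{2}$ credits by DC1 at level $i-1$, giving $\frac{18}{\epsilon}\cdot\frac{c\epsilon\ell}{2}=9c\ell$. The extra bookkeeping you supply (vertex-disjointness of $\epsilon$-clusters, $\mathcal{R}(X)\cap\mathcal{S}(X)=\emptyset$, no self-loops) is sound and simply makes explicit what the paper leaves implicit.
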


Let $A_1$ be the set of edges of $E_i$ that have both endpoints in marked $\epsilon$-clusters. 
\begin{claim} \label{clm:paid-credit-P1}
 If $c = \Omega\left( \frac{\avgH}{\epsilon}\right)$, we can buy  edges of $A_1$ using  $c\ell$ credits deposited in the centers and credit of releasing $\epsilon$-clusters.
\end{claim}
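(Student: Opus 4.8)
The plan is to charge the edges of $A_1$ to the $\epsilon$-clusters they touch, exploiting the fact that the cluster graph $\mathcal{K}(\mathcal{C}_\epsilon, E_i)$ restricted to marked $\epsilon$-clusters is a minor of $G$ and hence $H$-minor-free, so it has at most $O(\avgH \cdot t)$ edges when it has $t$ vertices. First I would let $M$ be the set of marked $\epsilon$-clusters (equivalently, the vertex set of all Phase-1 clusters) and let $\mathcal{K}[M]$ be the subgraph of the cluster graph induced on $M$; every edge of $A_1$ appears in $\mathcal{K}[M]$. Since each Phase-1 cluster $\mathcal{C}(X)$ either is the lone center $X$ together with its $|\mathcal{N}(X)|\ge \frac{20}{\epsilon}$ neighbors (Step 1), or is a single high-degree $\epsilon$-cluster attached in Step 2, and centers are pairwise non-adjacent in $\mathcal{K}$ (a center's neighbors are all marked in its own step, so no two centers are neighbors), I can bound $|M|$ below in terms of the number of Phase-1 clusters: each Step-1 cluster contributes at least $\frac{20}{\epsilon}+1$ marked $\epsilon$-clusters. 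The point of Observation~\ref{obs:P1-center-credit} is precisely that each center sits on a pile of at least $9c\ell$ credits, far more than the $c\ell$ we are allowed to spend from it.

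Next I would set up the accounting. Partition $A_1$ into two parts: edges incident to at least one releasing $\epsilon$-cluster, and edges with both endpoints in $\mathcal{S}(X) \cup \mathcal{R}(X)$ for the various centers $X$. For the first part, note that each edge of $E_i$ has weight at most $\ell$, and a releasing $\epsilon$-cluster has, by DC1 for level $i-1$, at least $\frac{c\epsilon\ell}{2}$ credits; since each such $\epsilon$-cluster is low-degree in the cluster graph only when it is not a center, but here the relevant bound is that every $\epsilon$-cluster (releasing or not) has degree counted in $\mathcal{K}$ — actually the clean way is: every $\epsilon$-cluster that is an endpoint of some edge of $A_1$ and is releasing simply pays for all of its at most $\deg_{\mathcal{K}}$-many incident $A_1$-edges out of its own $\frac{c\epsilon\ell}{2}$ credits, which suffices as long as its degree is $O(\frac{1}{\epsilon})$ — but releasing $\epsilon$-clusters can be high-degree. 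So instead I would route all of $A_1$ through a global counting argument: the number of edges of $A_1$ is at most $|E(\mathcal{K}[M])| \le C\avgH |M|$ for an absolute constant $C$ (Mader's bound for $H$-minor-free graphs). I then distribute these edges' cost, $\le \ell$ each, so a total of $\le C\avgH |M| \ell$, among the marked $\epsilon$-clusters: each marked $\epsilon$-cluster must absorb on average $C\avgH \ell$ worth of edge weight. A releasing $\epsilon$-cluster has $\frac{c\epsilon\ell}{2} \ge C\avgH\ell$ credits once $c \ge \frac{2C\avgH}{\epsilon} = \Omega(\frac{\avgH}{\epsilon})$; a non-releasing one (in some $\mathcal{S}(X)\cup\mathcal{R}(X)$) has its share covered by the center's pile, since each center has $\Omega(c\ell)$ credit beyond the reserved $c\ell$, and there are $O(\frac{1}{\epsilon})$ such $\epsilon$-clusters per center contributing $O(\frac{1}{\epsilon})\cdot C\avgH\ell = O(\frac{\avgH}{\epsilon})\ell \le c\ell$ when $c = \Omega(\frac{\avgH}{\epsilon})$.

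Carrying this out carefully: I would make the charging local to each Phase-1 cluster. Fix a Phase-1 cluster with center $X$ (so $\mathcal{S}(X), \mathcal{R}(X)$ are defined; for a Step-2 $\epsilon$-cluster $Y$ absorbed into $\mathcal{C}(Z)$, treat $Y$ as a releasing $\epsilon$-cluster of $\mathcal{C}(Z)$ and fold it into $Z$'s accounting, or handle it by its own credit since it still has $\frac{c\epsilon\ell}{2}$ credits). Edges of $A_1$ incident to a releasing $\epsilon$-cluster: charge each such edge (cost $\le \ell$) to an incident releasing $\epsilon$-cluster; the subtlety is that a releasing $\epsilon$-cluster could receive many charges, but in the cluster graph $\mathcal{K}[M]$ the total count of edges is $\le C\avgH|M|$, and an averaging/orientation argument (orient each edge of a minor-free graph toward an endpoint so that out-degree is $O(\avgH)$, using degeneracy) lets each $\epsilon$-cluster be charged for only $O(\avgH)$ edges, i.e. $O(\avgH)\cdot \ell \le \frac{c\epsilon\ell}{2}$ worth once $c = \Omega(\frac{\avgH}{\epsilon})$. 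Edges of $A_1$ with both endpoints non-releasing (in some $\mathcal{S}\cup\mathcal{R}$): there are $O(\frac1\epsilon)$ non-releasing $\epsilon$-clusters per center and each has $\mathcal{K}$-degree $O(\frac1\epsilon)$... no, that fails for high-degree ones — so again use the degeneracy orientation globally on $\mathcal{K}[M]$: every $\epsilon$-cluster pays for its $O(\avgH)$ out-edges, and the non-releasing $\epsilon$-clusters of $\mathcal{C}(X)$ route their $O(\avgH)$-per-cluster $\times$ $O(\frac1\epsilon)$-clusters $= O(\frac{\avgH}{\epsilon}) = O(c)$ times $\ell$ bill to center $X$'s $c\ell$ reserve. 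The main obstacle, and where I would spend the most care, is precisely this: releasing $\epsilon$-clusters are not guaranteed to be low-degree, so a naive "each $\epsilon$-cluster pays its incident edges" argument overspends; the fix is to use the $O(\avgH)$-degeneracy of $H$-minor-free graphs (via Mader's bound) to give an acyclic orientation of $\mathcal{K}[M]$ with out-degree $O(\avgH)$ and charge each edge to its tail. I would state this degeneracy fact as an immediate consequence of the sparsity bound already cited, and then the arithmetic $O(\avgH)\cdot\ell \le \frac{c\epsilon\ell}{2}$ and $O(\frac{\avgH}{\epsilon})\cdot\ell \le c\ell$ both hold for $c = \Omega(\frac{\avgH}{\epsilon})$, completing the claim.
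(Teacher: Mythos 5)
Your proposal is correct and follows essentially the same approach as the paper: bound $|A_1|$ via the sparsity of the minor $\mathcal{K}[M]$, then let releasing $\epsilon$-clusters pay for $O(\avgH)$ incident edges each out of their $\frac{c\epsilon\ell}{2}$ credits while the center's $c\ell$ reserve covers the $O(\frac{1}{\epsilon})$ non-releasing $\epsilon$-clusters' $O(\frac{\avgH}{\epsilon})$ edges. The only real difference is a refinement: you make explicit the degeneracy-orientation step (orient $\mathcal{K}[M]$ so each node has out-degree $O(\avgH)$ and charge each edge to its tail), whereas the paper moves from ``on average, each marked $\epsilon$-cluster is incident to at most $O(\avgH)$ edges of $A_1$'' directly to a per-cluster responsibility bound without spelling out why the distribution is achievable.
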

\begin{proof} Since the subgraph of $\mathcal{K}$ induced by marked $\epsilon$-clusters and edges of $A_1$ is $H$-minor-free,  each marked $\epsilon$-cluster, on average, is incident to at most $O(\avgH)$ edges of $A_1$. Thus, each $\epsilon$-cluster  must be responsible for buying $\Omega(\avgH) $ edges of $A_1$.

Consider a cluster $\mathcal{C}(X)$. The total credits of each releasing $\epsilon$-clusters is at least $\frac{c\epsilon \ell}{2}$, which is $\Omega(\avgH) \ell$ when $c = \Omega( \frac{\avgH}{\epsilon})$. For non-releasing $\epsilon$-clusters, we use  $c\ell$ credits from their center $X$ to pay for incident edges of $A_1$. Recall that non-releasing $\epsilon$-clusters are in $\mathcal{R}(X) \cup \mathcal{S}(X)$ and:
\begin{equation}\label{eq:Rx-Sx-P1}
|\mathcal{R}(X) \cup \mathcal{S}(X)| \leq 5 + \frac{18}{\epsilon}
\end{equation}

Thus, non-releasing $\epsilon$-cluster are responsible for paying at most $O(\frac{\avgH}{\epsilon})$ edges of $A_1$ and $c\ell$ credits suffice if $c = \Omega(\frac{\avgH}{\epsilon})$.
\end{proof}

By Claim~\ref{clm:paid-credit-P1}, each center $\epsilon$-cluster has at least $8c\ell$ credits remaining after paying for $A_1$. We note that clusters in Phase 1 could be augmented further in Phase 4. We will use these remaining credits at the centers  to pay for edges of $E_i$ in Phase 4.

\subsection{Phase 2: Low-degree, branching $\epsilon$-clusters}
Let $\mf$ be a maximal forest whose nodes are the $\epsilon$-clusters that remain unmarked after Phase 1 and whose edges are $\mst$ edges between pairs of such $\epsilon$-clusters.

Let $\dm(\mathcal{P})$ be the diameter of a path $\mathcal{P}$ in $\mf$, which is the diameter of the subgraph of $G$ formed by edges inside  $\epsilon$-clusters and MST edges connecting $\epsilon$-clusters  of $\mathcal{P}$. We define the \emph{effective diameter} $\edm(\mathcal{P})$ to be the sum of the diameters of the $\epsilon$-clusters  in $\mathcal{P}$. Since the edges of $\mf$ have unit weight (since they are $\mst$ edges), the true diameter of a path in $\mf$ is bounded by the effective diameter of $\mathcal{P}$ plus the number of $\mst$ edges in the path. Since each $\epsilon$-cluster has diameter at least $1$ (by construction of the base case), we have:
\begin{observation} \label{obs:effdiam-vs-diam}
$\dm(\mathcal{P}) \leq 2\edm(\mathcal{P})$.
\end{observation}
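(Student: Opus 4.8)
The plan is to bound $\dm(\mathcal{P})$ from above by constructing, for an arbitrary pair of vertices in the subgraph associated with $\mathcal{P}$, an explicit short path, and then to compare its length against $\edm(\mathcal{P})$ term by term. First I would set up notation: let $X_1,\ldots,X_m$ be the $\epsilon$-clusters of $\mathcal{P}$ listed in the order they appear along the path, joined consecutively by $\mst$ edges $e_1,\ldots,e_{m-1}$, where $e_t$ connects (a vertex of) $X_t$ to (a vertex of) $X_{t+1}$. Let $u$ and $v$ be any two vertices of the subgraph of $G$ formed by the edges interior to these clusters together with $e_1,\ldots,e_{m-1}$, say $u\in X_a$ and $v\in X_b$ with $a\le b$.

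Next I would build a $u$-to-$v$ walk in this subgraph: inside $X_a$ go from $u$ to the endpoint of $e_a$ (cost at most $\dm(X_a)$), cross $e_a$, inside $X_{a+1}$ go from the endpoint of $e_a$ to the endpoint of $e_{a+1}$ (cost at most $\dm(X_{a+1})$), and so on, finishing inside $X_b$ by going from the endpoint of $e_{b-1}$ to $v$ (cost at most $\dm(X_b)$); in the degenerate case $a=b$ we simply route within $X_a$. Since every $\mst$ edge has unit weight (by the reduction to unit-weight $\mst$-edges), this walk has length at most $\sum_{t=a}^{b}\dm(X_t)+(b-a)\le \edm(\mathcal{P})+(m-1)$, where the bound on the first sum uses only that all cluster diameters are nonnegative and the $(b-a)$ counts the at most $m-1$ traversed $\mst$ edges.

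Finally I would invoke the fact, noted just before the observation, that each $\epsilon$-cluster has diameter at least $1$: this holds for the level-$0$ clusters by the base-case construction and is preserved at every level since clusters only grow. Hence $m\le \sum_{t=1}^{m}\dm(X_t)=\edm(\mathcal{P})$, so $m-1\le\edm(\mathcal{P})$, and combining with the previous bound gives $d_G(u,v)\le 2\,\edm(\mathcal{P})$. Taking the maximum over all such $u,v$ yields $\dm(\mathcal{P})\le 2\,\edm(\mathcal{P})$. The argument is essentially bookkeeping, and there is no real obstacle; the only point deserving care is that the number of $\mst$ edges along $\mathcal{P}$ (namely $m-1$) must be absorbed into $\edm(\mathcal{P})$, and this is exactly where the unit-weight normalization of $\mst$ edges and the ``diameter at least $1$'' property of clusters are both needed.
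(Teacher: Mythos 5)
Your proposal is correct and follows essentially the same argument the paper gives (in compressed form just before the observation): walk through the clusters along the path, bound the contribution of the intermediate $\mst$ edges by the number of clusters, and absorb that count into $\edm(\mathcal{P})$ using the fact that every $\epsilon$-cluster has diameter at least $1$. You have simply spelled out the explicit walk and the bookkeeping that the paper leaves implicit.
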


We define the effective diameter of a tree (in $\mf$) to be the maximum effective diameter over all paths of the tree.  Let $\mathcal T$ be a tree in $\mf$ that is not a path and such that $\edm(\mathcal{T}) \ge 2\ell$.  Let $X$ be a \emph{branching vertex} of $\mathcal T$, i.e., a vertex of $\mathcal T$ of degree is at least $3$, and let ${\mathcal C}(X)$ be a minimal subtree of $\mathcal T$ that contains $X$ and $X$'s neighbors and such that $\edm(\mathcal{C}(X)) \ge 2\ell$.  We add $\mathcal{C}(X)$ to $\mathcal C$ and delete $\mathcal{C}(X)$ from $\mathcal T$; this process is repeated until no such tree exists  in $\mf$. We refer to $X$ as the \emph{center} $\epsilon$-cluster of $\mathcal{C}(X)$.

\begin{figure}
\centering
\includegraphics[scale = 1]{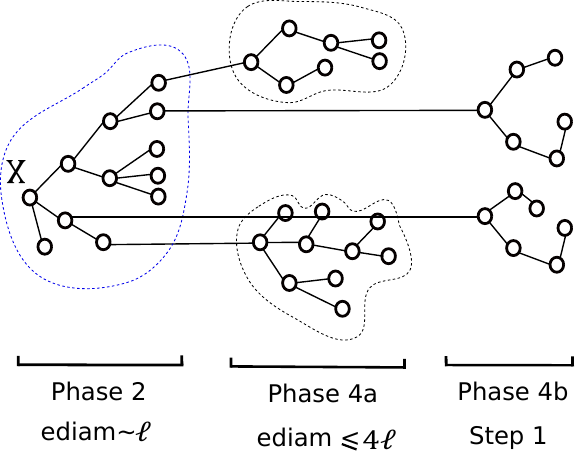}
\caption{A cluster $\mathcal{C}(X)$ formed in Phase 2 is enclosed in the dotted blue curve. $\mathcal{C}(X)$ will be augmented further in Phase 4 and augmenting $\epsilon$-clusters are outside the dotted blue curve. Edges connecting $\epsilon$-clusters are $\MST$ edges.}
\label{fig:P2N}
\end{figure}
\begin{claim} \label{clm:diam-clsuter-P2}
 The diameter of each cluster added  in \emph{Phase 2} is at most $(4+2g\epsilon)\ell $.
\end{claim}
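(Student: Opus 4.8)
The plan is to first control the \emph{effective} diameter of the new cluster and then convert that into a bound on the true diameter, accepting a factor-of-two loss. Since any path $\mathcal{P}$ witnessing $\dm(\mathcal{C}(X))$ satisfies $\dm(\mathcal{P})\le 2\,\edm(\mathcal{P})\le 2\,\edm(\mathcal{C}(X))$ by Observation~\ref{obs:effdiam-vs-diam} together with the definition of the effective diameter of a tree, it suffices to prove $\edm(\mathcal{C}(X))\le (2+g\epsilon)\ell$; the desired bound $\dm(\mathcal{C}(X))\le(4+2g\epsilon)\ell$ then follows at once.

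To bound $\edm(\mathcal{C}(X))$, I would exploit the minimality built into the definition of $\mathcal{C}(X)$. Let $S_0$ be the subtree of $\mathcal{T}$ consisting of $X$, all of $X$'s neighbors in $\mathcal{T}$, and the edges joining them, so that $\mathcal{C}(X)\supseteq S_0$ by construction. If $\mathcal{C}(X)=S_0$, then every path in $\mathcal{C}(X)$ visits at most three $\epsilon$-clusters, each of diameter at most $g\epsilon\ell$ by DC2 for level $i-1$, so $\edm(\mathcal{C}(X))\le 3g\epsilon\ell\le(2+g\epsilon)\ell$ (using $g\epsilon\le 1$). Otherwise $\mathcal{C}(X)$ strictly contains $S_0$, and then it has a leaf $\epsilon$-cluster $Z\notin S_0$ (otherwise every leaf of $\mathcal{C}(X)$ would be a neighbor of $X$, and any vertex outside $S_0$ would close a cycle with an $S_0$-edge, contradicting that $\mathcal{C}(X)$ is a tree). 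Deleting $Z$ yields a subtree of $\mathcal{T}$ that still contains $X$ together with all of $X$'s neighbors, so minimality of $\mathcal{C}(X)$ gives $\edm(\mathcal{C}(X)\setminus Z)<2\ell$. Since removing a leaf $\epsilon$-cluster from a tree lowers its effective diameter by at most the diameter of that cluster — a witnessing path either avoids $Z$ entirely or has $Z$ as an endpoint — we obtain $\edm(\mathcal{C}(X))\le \edm(\mathcal{C}(X)\setminus Z)+\dm(Z)<2\ell+g\epsilon\ell=(2+g\epsilon)\ell$, as required.

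I expect the main obstacle to be the careful bookkeeping around minimality: one must verify that the subtree obtained after deleting the chosen leaf still satisfies both defining constraints (it remains a subtree of $\mathcal{T}$, and it still contains $X$ and all of $X$'s neighbors), since only then does minimality force its effective diameter strictly below $2\ell$ — which is exactly what caps the ``overshoot'' of $\edm(\mathcal{C}(X))$ beyond $2\ell$ at the diameter of a single $\epsilon$-cluster. The remaining ingredients are routine: applying DC2 to bound an $\epsilon$-cluster's diameter by $g\epsilon\ell$, applying Observation~\ref{obs:effdiam-vs-diam} for the factor-$2$ conversion, and using that $\epsilon$ is sufficiently small relative to $1/g$.
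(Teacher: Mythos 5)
Your proof is correct and follows the paper's argument exactly: bound the effective diameter by $(2+g\epsilon)\ell$ via minimality, then apply Observation~\ref{obs:effdiam-vs-diam} for the factor-two conversion. The paper treats the minimality step as self-evident; you simply spell out the leaf-deletion argument that justifies it.
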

\begin{proof} 
Since $\mathcal{C}(X)$ is minimal, its effective diameter is at most $2\ell + g \epsilon \ell$. The claim follows from Observation~\ref{obs:effdiam-vs-diam}.
\end{proof}

Let $\mathcal{X}$ be a set of $\epsilon$-clusters. We define a subset of $\mathcal{X}$ as follows:
\[
 \trunc{\mathcal{X}}{\rfrac{2g}{\epsilon}} = 
  \begin{cases} 
   \mathcal{X} & \text{if } |\mathcal{X}| \leq \rfrac{2g}{\epsilon}\\
   \text{any subset of }  \rfrac{2g}{\epsilon} \text{ of } \mathcal{X}      & \text{otherwise }
  \end{cases}
\]
By definition, we have:
\begin{equation}\label{eq:size-trunc}
|\trunc{\mathcal{X}}{\rfrac{2g}{\epsilon}}| \leq \frac{2g}{\epsilon}
\end{equation}

 Let $\mathcal{S}(X) = \trunc{\mathcal{C}(X)\cap \mathcal{D}}{\rfrac{2g}{\epsilon}}$ where $\mathcal{D}$ is the diameter path of $\mathcal{C}(X)$. We save credits of $\epsilon$-clusters in $\mathcal{S}(X)$ for maintaining DC1 and we use credits of $\epsilon$-clusters in $\mathcal{C}\setminus \mathcal{S}(X)$ to buy edges of $E_i$ incident to $\epsilon$-clusters in $\mathcal{C}(X)$.  Since $X$ is branching, at least one neighbor $\epsilon$-cluster of $X$, say $Y$, is not in $\mathcal{S}(X)$. Let $\mathcal{R}(X) = \{Y\}$. The center collects credits of clusters in $\mathcal{R}(X)$; other $\epsilon$-clusters in $\mathcal{C}(X)\setminus \{\mathcal{S}(X) \cup \mathcal{R}(X)\}$ release their credits.  

 Let $A_2$ be the set of unpaid edges of $E_i$ incident to $\epsilon$-clusters grouped in Phase 2.

\begin{claim} \label{clm:paid-credit-P2}
 If $c = \Omega( \frac{g}{\epsilon^3})$, we can buy edges  of $A_2$ using $\frac{c\epsilon \ell}{6}$ credits from the center $\epsilon$-clusters and half the credit from releasing $\epsilon$-clusters.
\end{claim}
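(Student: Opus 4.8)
The plan is to charge every edge of $A_2$ to a Phase-2 $\epsilon$-cluster that is incident to it, and then split the analysis according to whether that $\epsilon$-cluster is releasing or non-releasing. The key fact driving everything is that, after Phase~1, every unmarked $\epsilon$-cluster — in particular every $\epsilon$-cluster grouped in Phase~2 — is low-degree, hence incident to fewer than $\frac{20}{\epsilon}$ edges of $E_i$ in $\mathcal{K}(\mathcal{C}_\epsilon,E_i)$, each of weight less than $\ell$. Therefore any single Phase-2 $\epsilon$-cluster can be made responsible for at most $\frac{20}{\epsilon}$ edges of $A_2$, at total cost less than $\frac{20\ell}{\epsilon}$. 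If an edge of $A_2$ has both endpoints in Phase-2 $\epsilon$-clusters we simply charge it to both; this double-charging only weakens the inequalities we need.

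For a releasing $\epsilon$-cluster $Z$: by invariant DC1 for level $i-1$, $Z$ holds at least $\frac{c\epsilon\ell}{2}$ credits, so half its credit is at least $\frac{c\epsilon\ell}{4}$, which exceeds $\frac{20\ell}{\epsilon}$ as soon as $c \ge \frac{80}{\epsilon^2}$; this is implied by $c = \Omega(\frac{g}{\epsilon^3})$ since $g>2$ and $\epsilon$ is sufficiently small. Thus each releasing $\epsilon$-cluster pays for all of its incident edges of $A_2$ out of half of its own credit.

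For the non-releasing $\epsilon$-clusters of a Phase-2 cluster $\mathcal{C}(X)$: these are exactly the $\epsilon$-clusters of $\mathcal{S}(X)\cup\mathcal{R}(X)$, and by Equation~\eqref{eq:size-trunc} together with $|\mathcal{R}(X)|=1$ there are at most $\frac{2g}{\epsilon}+1 < \frac{3g}{\epsilon}$ of them (for $\epsilon$ small enough). Being low-degree, they are jointly incident to fewer than $\frac{3g}{\epsilon}\cdot\frac{20}{\epsilon}=\frac{60g}{\epsilon^2}$ edges of $A_2$, at total cost less than $\frac{60g\ell}{\epsilon^2}$. The center $X$ has collected the credit of the $\epsilon$-cluster in $\mathcal{R}(X)$, which is at least $\frac{c\epsilon\ell}{2}$ by DC1; spending $\frac{c\epsilon\ell}{6}$ of these collected credits covers the bill whenever $c \ge \frac{360g}{\epsilon^3}$, i.e.\ whenever $c = \Omega(\frac{g}{\epsilon^3})$ with a sufficiently large hidden constant, and the center still retains at least $\frac{c\epsilon\ell}{3}$ of its collected credits (plus all of its own credit) for Phase~4 and for maintaining DC1.

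Finally, one should observe that these credit sources are pairwise disjoint — releasing $\epsilon$-clusters spend half of their own credit, centers spend credit handed to them by the $\mathcal{R}(X)$-clusters, and $\mathcal{S}(X)$-clusters spend nothing (their credit being reserved for DC1) — so no credit is used twice; and that every edge of $A_2$ gets charged, since by definition every edge of $A_2$ is incident to an $\epsilon$-cluster grouped in Phase~2, which is either releasing or non-releasing. I expect no genuine obstacle here beyond bookkeeping: the only point requiring care is fixing the constant in $c = \Omega(\frac{g}{\epsilon^3})$ large enough to dominate both the $\frac{80}{\epsilon^2}$ threshold from the releasing clusters and the $\frac{60g}{\epsilon^2}$ edge count at the centers, while staying consistent with the target value $c = O(\frac{\avgH}{\epsilon^3})$ — which holds because $g = O(1) = O(\avgH)$.
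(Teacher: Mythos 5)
Your proof is correct and follows essentially the same route as the paper's: let releasing $\epsilon$-clusters pay for their own $< \frac{20}{\epsilon}$ incident $A_2$-edges out of half their DC1-guaranteed credit $\ge \frac{c\epsilon\ell}{4}$, and let the center pay for edges incident to the at most $\frac{2g}{\epsilon}+1$ non-releasing $\epsilon$-clusters of $\mathcal{S}(X)\cup\mathcal{R}(X)$ out of the $\frac{c\epsilon\ell}{6}$ slice of the credit collected from $\mathcal{R}(X)$, with the same $c=\Omega(g/\epsilon^3)$ threshold closing both cases. You are slightly more explicit than the paper about double-charging edges with two Phase-2 endpoints and about the disjointness of the three credit pools, but these are bookkeeping details the paper handles implicitly; there is no substantive difference.
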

\begin{proof}
Consider a cluster $\mathcal{C}(X)$ formed in Phase 2. Recall $\epsilon$-clusters in Phase 2 are low-degree. Thus, each $\epsilon$-cluster in $\mathcal{C}(X)$  is incident to at most $\frac{20}{\epsilon} $ edges of $A_2$. We need to argue that each $\epsilon$-cluster has at least  $\frac{20 \ell}{\epsilon} = \Omega(\frac{\ell}{\epsilon}) $ credits to pay for edges of $A_2$.  By invariant DC1 for level ${i-1}$, half credits of releasing $\epsilon$-clusters are at least $\frac{c\epsilon \ell}{4}$, which is  $\Omega(\frac{1}{\epsilon}) \ell$ when $c = \Omega( \frac{1}{\epsilon^2})$.

Since $|\mathcal{R}(X)| = 1$, the center $X$ collects at least $\frac{c\epsilon \ell}{2}$ credits by invariant DC1 for level ${i-1}$. Recall non-releasing $\epsilon$-clusters are all in $\mathcal{S}(X)$. Thus, by Equation~\ref{eq:size-trunc}, the total number of edges of $A_2$ incident to $\epsilon$-clusters in $\mathcal{S}(X) \cup \mathcal{R}(X)$ is at most:
\begin{equation*}
\left(\frac{2g}{\epsilon} + 1\right)\frac{20}{\epsilon} = O\left(\frac{g}{\epsilon^2}\right)
\end{equation*}

Since $c = \Omega(\frac{g}{\epsilon^3})$, $\frac{c\epsilon \ell}{6}$ credits of the center $X$ is at least $\Omega(\frac{g\ell}{\epsilon^2})$ which suffices to buy all edges of $A_2$ incident to $\epsilon$-clusters in $\mathcal{S}(X) \cup \mathcal{R}(X)$.
\end{proof}

We use remaining half the credit of releasing $\epsilon$-clusters to achieve invariant DC1. More details will be given later when we show diameter-credit invariants of $\mathcal{C}(X)$.

\subsection{Phase 3: Grouping $\epsilon$-clusters in high-diameter paths} 

  In this phase, we consider components of $\mf$ that are paths with high effective diameter.  To that end, we partition the components of $\mf$ into HD-components (equiv.\ HD-paths), 
 those with (high) effective diameter at least $4\ell$ (which are all paths) and LD-components, 
 those with (low) effective diameter less that $4\ell$ (which may be paths or trees).  

\subsubsection*{Phase 3a: Edges of $E_i$ within an HD-path} Consider an HD-path $\mathcal P$  that has an edge $e\in E_i$ with endpoints in $\epsilon$-clusters $X$ and $Y$ of $\mathcal{P}$ such that the two disjoint affices ending at $X$ and $Y$ both have effective diameter at least $2\ell$. We choose $e$ such that there is no other edge with the same property on the $X$-to-$Y$ subpath of $\mathcal P$ (By Observation~\ref{obs:K-simple}, there is no edge of $E_i$ parallel to $e$). Let ${\mathcal P}_{X,Y}$ be the $X$-to-$Y$ subpath of $\mathcal P$.  By the stretch guarantee of the spanner, $\dm({\mathcal P}_{X,Y}) \geq (1 + s\epsilon)w(e)$.  Let $\mathcal{P}_X$ and $\mathcal{P}_Y$ be \emph{minimal subpaths} of the disjoint affices of $\mathcal{P}$ that end at $X$ and $Y$, respectively, such that the effective diameters of $\mathcal{P}_X$ and $\mathcal{P}_Y$ are at least $2\ell$.  $\mathcal{P}_X$ and $\mathcal{P}_Y$ exist by the way we choose $e$.
 
 \paragraph*{Case 1: \boldmath$\edm(\mathcal{P}_{X,Y})\le2\ell$}
We construct a new cluster consisting of (the $\epsilon$-clusters and $\mst$ edges of) $\mathcal{P}_{X,Y}$, $\mathcal{P}_X$, $\mathcal{P}_Y$ and edge $e$ (see Figure~\ref{fig:phase4a}(a)).  We refer to, w.l.o.g, $X$ as the center $\epsilon$-cluster of the new cluster.

\begin{claim} \label{clm:diam-cluster-P3a}
 The diameter of each cluster added in Case 1 of Phase 3a is at least $\frac{\ell}{2}$ and at most $(12+4\epsilon g)\ell $.
\end{claim}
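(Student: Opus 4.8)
The plan is to bound the diameter of a Case-1 cluster $C$ by tracking the longest path through it, using the structural description of $C$ as the union of three subpaths of $\mf$ glued by the single edge $e$. For the lower bound, observe that $C$ contains $\mathcal{P}_X$, whose effective diameter is at least $2\ell$; by Observation~\ref{obs:effdiam-vs-diam} we get $\dm(\mathcal{P}_X) \le 2\edm(\mathcal{P}_X)$, but we want a lower bound on $\dm(C)$, so instead I would use that $\mathcal{P}_X$ contains at least one $\epsilon$-cluster and the minimality of $\mathcal{P}_X$ forces $\edm(\mathcal{P}_X) \ge 2\ell$ (any shorter prefix has effective diameter $< 2\ell$, but adding one more $\epsilon$-cluster of diameter at least $1$ pushes it to at least $2\ell$, so in fact $\edm(\mathcal{P}_X) \in [2\ell, 2\ell + g\epsilon\ell)$). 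The true diameter of $C$ is at least the true diameter of a single $\epsilon$-cluster on $\mathcal{P}_X$, which is at least $1$; that's too weak, so instead note that $C$ contains $e$ itself, and $w(e) \ge \ell/2$ since $e \in E_i$. That gives $\dm(C) \ge \ell/2$ directly.

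For the upper bound, I would take any two vertices $u, v \in C$ and bound $d_C(u,v)$. Every vertex of $C$ lies in one of the three subpaths $\mathcal{P}_X$, $\mathcal{P}_Y$, $\mathcal{P}_{X,Y}$ (or in an $\epsilon$-cluster thereof). A $u$-to-$v$ path in $C$ routes within at most all three subpaths plus the edge $e$. Each subpath is minimal with effective diameter just over $2\ell$, so by Observation~\ref{obs:effdiam-vs-diam} (or rather the remark preceding it: true diameter $\le$ effective diameter plus number of $\mst$ edges, and there are at most as many $\mst$ edges as $\epsilon$-clusters, each of diameter at least $1$, so true diameter $\le 2\edm$) each has true diameter at most $2(2\ell + g\epsilon\ell) = (4 + 2g\epsilon)\ell$. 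For $\mathcal{P}_{X,Y}$, Case 1 assumes $\edm(\mathcal{P}_{X,Y}) \le 2\ell$, so $\dm(\mathcal{P}_{X,Y}) \le (4 + \ldots)\ell$ as well — more precisely $\le 2(2\ell + \text{(one extra } \epsilon\text{-cluster)}) $; I need to be a bit careful that $\mathcal{P}_{X,Y}$'s endpoints $X,Y$ are shared with the other two subpaths, so the diameter path through $C$ visits $\mathcal{P}_X$, then $\mathcal{P}_{X,Y}$ (or the edge $e$, whichever shorter route — but for an upper bound I just take $\mathcal{P}_{X,Y}$), then $\mathcal{P}_Y$, giving a total of at most three subpath-diameters: $3 \cdot (4 + 2g\epsilon)\ell \le (12 + 6g\epsilon)\ell$. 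To match the claimed $(12 + 4\epsilon g)\ell$ I would instead route through $e$ where advantageous — the worst case is $\dm(\mathcal{P}_X) + w(e) + \dm(\mathcal{P}_Y) \le (4+2g\epsilon)\ell + \ell + (4+2g\epsilon)\ell$, which is $(9 + 4g\epsilon)\ell \le (12 + 4\epsilon g)\ell$, and separately the route inside $\mathcal{P}_{X,Y}$ together with a short piece of $\mathcal{P}_X$ and $\mathcal{P}_Y$ to reach $u,v$; bookkeeping the exact minimal-subpath bounds ($\edm < 2\ell + g\epsilon\ell$ since adding one $\epsilon$-cluster of diameter $< g\epsilon\ell$ to something of effective diameter $< 2\ell$) yields the stated constant with room to spare.

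The main obstacle is purely the bookkeeping: making sure the "minimal subpath with effective diameter at least $2\ell$" really has effective diameter strictly less than $2\ell + g\epsilon\ell$ (removing the last $\epsilon$-cluster drops it below $2\ell$, and that last cluster has diameter at most $g\epsilon\ell$ by {DC2} for level $i-1$), and then carefully choosing which of the $\le 2$ routes between two given vertices (through $e$ versus through $\mathcal{P}_{X,Y}$) to use so that no route exceeds $(12 + 4\epsilon g)\ell$. Since $\epsilon$ is taken sufficiently small relative to $1/g$, all the $g\epsilon\ell$ error terms are lower-order and there is ample slack, so I expect no genuine difficulty — just a case split on where $u$ and $v$ sit among the three subpaths, bounding each case by summing at most three subpath diameters (each $\le (4+2g\epsilon)\ell$) or two subpath diameters plus $w(e) \le \ell$, and taking the maximum.
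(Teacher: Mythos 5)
Your upper-bound argument is essentially the paper's, but your first pass overcounts: $\mathcal{P}_{X,Y}$ is \emph{not} a minimal subpath, so its effective diameter is bounded by the Case~1 hypothesis $\edm(\mathcal{P}_{X,Y}) \le 2\ell$ with no additional $g\epsilon\ell$ slack. With that correction, the single route through $\mathcal{P}_{X,Y}$ already gives $2\bigl(\edm(\mathcal{P}_X) + \edm(\mathcal{P}_Y) + \edm(\mathcal{P}_{X,Y})\bigr) \le 2\bigl((2+g\epsilon)\ell + (2+g\epsilon)\ell + 2\ell\bigr) = (12+4g\epsilon)\ell$, which is exactly the paper's computation; the detour you introduce of alternately routing through $e$ is unnecessary.

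There is, however, a genuine gap in your lower bound. You assert that because the cluster contains $e$ and $w(e) \ge \ell/2$, ``that gives $\dm(C) \ge \ell/2$ directly.'' This does not follow: the cluster also contains the entire path $\mathcal{P}_{X,Y}$ between $e$'s endpoints, so in principle the intra-cluster distance between them could be shorter than $w(e)$, in which case containing a heavy edge says nothing about diameter. The missing step is the greedy-spanner property the paper invokes: since the greedy algorithm added $e$ to $S$, every other $S$-path (and a fortiori every other cluster-internal path) between $e$'s endpoints has length greater than $(1+s\epsilon)w(e) > w(e)$, so the cluster distance between $e$'s endpoints is exactly $w(e) \ge \ell/2$, and the diameter is at least that. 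Without this observation the lower bound is unjustified.
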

\begin{proof}
Since the new cluster contains edge $e$ of $E_i$ and, in spanner $S$, the shortest path between endpoints of any edge is the edge itself, we get the lower bound of the claim. The effective diameters of $\mathcal{P}_X$ and $\mathcal{P}_Y$ are each at most $(2 + \epsilon g)\ell $ since they are minimal. By Observation~\ref{obs:effdiam-vs-diam}, we get that the diameter is at most:
\begin{equation*}
2(\edm(\mathcal{P}_X) + \edm(\mathcal{P}_Y) + \edm(\mathcal{P}_{X,Y}))\leq 4(2+\epsilon g)\ell + 4\ell = (12+4\epsilon g)\ell 
\end{equation*}
\end{proof}

\begin{claim}\label{clm:D-simple-path}
Let $x,y$ be any two vertices of $G$ in a cluster $\mathcal{C}(X)$ added in Case 1 of Phase 3a. Let $P_{x,y}$ be the shortest $x$-to-$y$ path in $\mathcal{C}(X)$ as a subgraph of $G$. Let $\mathcal{P}_{x,y}$ be obtained from $P_{x,y}$ by contracting $\epsilon$-clusters  into a single vertex. Then,  $\mathcal{P}_{x,y}$ is a simple path.
\end{claim}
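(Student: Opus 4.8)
The plan is to pass to the cluster-contracted picture. Let $\mathcal{K}'$ be the graph whose vertices are the $\epsilon$-clusters of $\mathcal{C}(X)$ and whose edges are the inter-cluster edges of $\mathcal{C}(X)$ — that is, the $\mst$ edges of $\mathcal{P}_X$, $\mathcal{P}_{X,Y}$, $\mathcal{P}_Y$ together with the edge $e\in E_i$; thus $\mathcal{P}_{x,y}$ is exactly the image of $P_{x,y}$ in $\mathcal{K}'$. First I would record the structure of $\mathcal{K}'$: it is the union of the two paths hanging at $X$ and $Y$ (namely $\mathcal{P}_X$, $\mathcal{P}_Y$), the path $\mathcal{P}_{X,Y}$ from $X$ to $Y$, and the single chord $e$ from $X$ to $Y$, so it is connected with exactly one more edge than a spanning tree, hence unicyclic with unique cycle $C := \mathcal{P}_{X,Y}\cup\{e\}$. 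I would also check that $\mathcal{K}'$ is simple: parallel $E_i$ edges are excluded by Observation~\ref{obs:K-simple}; a parallel $\mst$/$E_i$ pair between two $\epsilon$-clusters is excluded by the greedy property, exactly as in the proof of that observation (the $\mst$ edge plus the two $\epsilon$-cluster paths give a short detour); and two parallel $\mst$ edges between two (connected) $\epsilon$-clusters would create a cycle inside $\mst$.

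Now suppose for contradiction that $\mathcal{P}_{x,y}$ is not a simple path; I would derive a contradiction with $P_{x,y}$ being shortest. Since distinct edges of $\mathcal{K}'$ are distinct edges of $G$ and a shortest path repeats no edge of $G$, the walk $\mathcal{P}_{x,y}$ repeats no edge of $\mathcal{K}'$, i.e.\ it is a trail. A trail that visits a vertex twice contains a closed subwalk with no repeated internal vertex, hence a cycle of $\mathcal{K}'$; since the only cycle is $C$, this closed subwalk is $C$ itself, based at some $\epsilon$-cluster $Z\in V(C)$, and in particular it uses the edge $e$. Translating back to $G$, the path $P_{x,y}$ then contains a subpath $Q$ between two vertices $z_1,z_2\in Z$ whose inter-cluster edges are precisely the edges of $C$; in particular $Q$ uses $e$, so $w(Q)\ge w(e)\ge \ell/2$ (the weights in $E_i$ lie in $[\ell/2,\ell)$). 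Replacing $Q$ inside $P_{x,y}$ by a shortest $z_1$-to-$z_2$ path inside $Z$ keeps the walk inside $\mathcal{C}(X)$ and, by invariant DC2 for level $i-1$, has length at most $\dm(Z)\le g\epsilon\ell$. Since $\epsilon$ is sufficiently smaller than $\frac{1}{g}$ we have $g\epsilon\ell<\ell/2$, so this yields an $x$-to-$y$ walk in $\mathcal{C}(X)$ of strictly smaller weight than $P_{x,y}$ — a contradiction. Hence $\mathcal{P}_{x,y}$ is simple.

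The step I expect to be the main obstacle is the bookkeeping around the trail argument: being careful that the edges of $\mathcal{K}'$ other than $e$ are genuinely distinct $\mst$ edges (which holds because $\mathcal{P}_X,\mathcal{P}_{X,Y},\mathcal{P}_Y$ are edge-disjoint subpaths of a single path of $\mf$) and that contracting the $\epsilon$-clusters identifies none of them, so that $\mathcal{P}_{x,y}$ really is a trail and its non-simplicity forces the \emph{entire} cycle $C$ (and in particular $e$) to be traversed; and then matching this closed subwalk of $\mathcal{P}_{x,y}$ with an honest subpath $Q$ of $P_{x,y}$ both of whose endpoints lie in one $\epsilon$-cluster. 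Once that is in place, the remainder is the single weight comparison $w(e)\ge\ell/2>g\epsilon\ell\ge\dm(Z)$ using DC2 and the weight range of $E_i$.
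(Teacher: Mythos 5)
Your proof is correct and follows the same route as the paper's: identify $\mathcal{P}_{X,Y}\cup\{e\}$ as the unique cycle among the $\epsilon$-clusters of $\mathcal{C}(X)$, conclude that a non-simple $\mathcal{P}_{x,y}$ must traverse this cycle (so $e\in\mathcal{P}_{x,y}$), and shortcut through the base $\epsilon$-cluster $Z$ at a savings of at least $w(e)\ge\ell/2$ and a cost of at most $\dm(Z)\le g\epsilon\ell$, contradicting that $P_{x,y}$ is shortest. You simply make explicit the trail/cycle argument and the simplicity check for $\mathcal{K}'$ that the paper leaves implicit.
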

\begin{proof}
By construction, the only cycle of $\epsilon$-clusters  in $\mathcal{C}(X)$ is $\mathcal{P}_{X,Y} \cup \{e\}$ (see Figure~\ref{fig:phase4a}(a)). Therefore, if ${\cal P}_{x,y}$ is not simple, $e \in \mathcal{P}_{x,y}$ and $\mathcal{P}_{x,y}$ must enter and leave $\mathcal{P}_{X,Y}$ at some $\epsilon$-cluster $Z$.  In this case, $\mathcal{D}$ could be short-cut through $Z$, reducing the weight of the path by at least $w(e) \ge \ell/2$ and increasing its weight by at most $\dm(Z) \le g\epsilon \ell$.  This contradicts the shortness of $P_{x,y}$ for $\epsilon$ sufficiently smaller than $\frac{1}{g}$ ($g\epsilon < \frac{1}{2}$).
\end{proof}

\begin{figure}
\centering
\includegraphics[scale = 1]{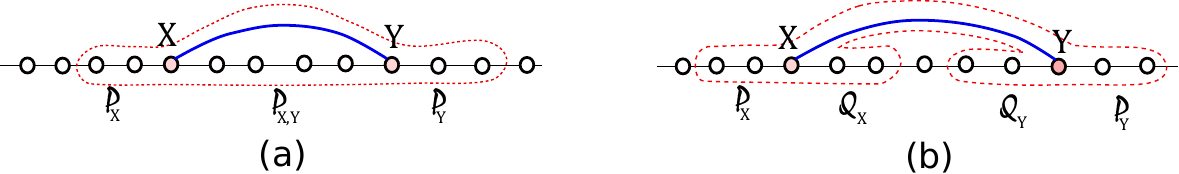}
\caption{(a) A cluster of $\mathcal{C}$ in Case 1 of \textbf{Phase 3a} and (b) a cluster of $\mathcal{C}$ in Case 2 of \textbf{Phase 3a}.  Thin edges are edges of $\MST$, solid blue edges are edges of $E_i$ and vertices are $\epsilon$-clusters. Edges and vertices inside the dashed red curves are grouped into a new cluster.}
\label{fig:phase4a}
\end{figure}

Since $\mathcal{P}_{X,Y} \cup \{e\}$ is the only cycle of $\epsilon$-clusters, by Claim~\ref{clm:D-simple-path}, $\epsilon$-clusters in $\mathcal{D} \cap \mathcal{C}(X)$ form a simple subpath of $\mathcal{D}$ where $\mathcal{D}$ is the diameter path of $\mathcal{C}(X)$. We have:

\begin{observation} \label{obs:Pxy-no-in-D}
   $\mathcal{P}_{X,Y} \not\subseteq \mathcal{D}$.
\end{observation}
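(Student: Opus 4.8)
The plan is to argue by contradiction: assume $\mathcal{P}_{X,Y}\subseteq\mathcal{D}$ and show that $\mathcal{D}$ is then not a shortest path in $\mathcal{C}(X)$, contradicting its being a diameter path. (We may take $\mathcal{D}$ to be a shortest path in $\mathcal{C}(X)$; if $\mathcal{D}$ is the diameter path of a larger cluster formed later, run the argument on its portion inside $\mathcal{C}(X)$, which is still a shortest path there by subpath optimality.) First I would pin down the shape of $\mathcal{D}$. By Claim~\ref{clm:D-simple-path} the $\epsilon$-cluster contraction of $\mathcal{D}$ is a simple path, and $\mathcal{P}_{X,Y}\cup\{e\}$ is the only cycle of $\epsilon$-clusters in $\mathcal{C}(X)$; since $\mathcal{D}$ already contains all of $\mathcal{P}_{X,Y}$, it cannot also contain $e$, for otherwise its contraction would contain that cycle. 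Hence $\mathcal{D}$ lies entirely in $\mathcal{C}(X)\setminus\{e\}$, whose $\epsilon$-cluster contraction is the single path obtained by concatenating $\mathcal{P}_X$, $\mathcal{P}_{X,Y}$, $\mathcal{P}_Y$; thus $\mathcal{D}$ is a contiguous subpath of that path, of the form $\mathcal{Q}_X+\mathcal{P}_{X,Y}+\mathcal{Q}_Y$ with $\mathcal{Q}_X\subseteq\mathcal{P}_X$ ending at $X$ and $\mathcal{Q}_Y\subseteq\mathcal{P}_Y$ starting at $Y$ (each possibly trivial).

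Let $Q$ be the portion of $\mathcal{D}$ realizing $\mathcal{P}_{X,Y}$, running between a vertex $a_X\in X$ and a vertex $a_Y\in Y$. The crux, which I expect to be the main obstacle, is the lower bound $w(Q)\ge\dm(\mathcal{P}_{X,Y})-2g\epsilon\ell$: a priori $\mathcal{D}$ could pass near a boundary of each $\epsilon$-cluster of $\mathcal{P}_{X,Y}$, making $Q$ much shorter than $\dm(\mathcal{P}_{X,Y})$ and defeating the shortcut below. The way around it is to observe that a diametral pair of the subgraph underlying $\mathcal{P}_{X,Y}$ must lie in two \emph{distinct} $\epsilon$-clusters $Z,Z'$ of $\mathcal{P}_{X,Y}$, because by {DC2} at level $i-1$ each $\epsilon$-cluster has diameter at most $g\epsilon\ell<\ell/2\le w(e)<\dm(\mathcal{P}_{X,Y})$ and so is too small to contain such a pair. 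Since $Q$ traverses all of $\mathcal{P}_{X,Y}$ it crosses every $\mst$-edge of $\mathcal{P}_{X,Y}$, hence contains the whole stretch running from the boundary of $Z$ to the boundary of $Z'$ through the $\epsilon$-clusters strictly between them; and because $\mathcal{D}$ is a shortest path, this stretch has the same weight in $Q$ as in a diametral path of $\mathcal{P}_{X,Y}$ (within each intermediate $\epsilon$-cluster both use the shortest boundary-to-boundary path, and they share the intervening $\mst$-edges). Subtracting only the at most $2g\epsilon\ell$ that a diametral path of $\mathcal{P}_{X,Y}$ spends inside $Z$ and $Z'$ themselves yields the claimed bound.

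To finish, I would exhibit a cheaper $a_X$-to-$a_Y$ route inside $\mathcal{C}(X)$: a shortest $a_X$-to-$x$ path within $X$, the edge $e=xy$, then a shortest $y$-to-$a_Y$ path within $Y$; by {DC2} its weight is at most $w(e)+2g\epsilon\ell$. Since $\mathcal{D}$ is a shortest path, its subpath $Q$ cannot be longer, so $\dm(\mathcal{P}_{X,Y})-2g\epsilon\ell\le w(Q)\le w(e)+2g\epsilon\ell$, i.e.\ $\dm(\mathcal{P}_{X,Y})\le w(e)+4g\epsilon\ell$. Combined with the stretch guarantee $\dm(\mathcal{P}_{X,Y})\ge(1+s\epsilon)w(e)$ and with $w(e)\ge\ell/2$ (whence $4g\epsilon\ell\le 8g\epsilon\,w(e)$), this gives $(1+s\epsilon)w(e)\le(1+8g\epsilon)w(e)$, i.e.\ $s\le 8g$, contradicting that $s$ is a sufficiently large constant (any $s>8g$ works). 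Hence $\mathcal{P}_{X,Y}\not\subseteq\mathcal{D}$.
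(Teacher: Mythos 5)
Your proof is correct and uses the same mechanism as the paper's: shortcut the diameter path through $e$, lower-bound the replaced stretch in terms of $\dm(\mathcal{P}_{X,Y})$, apply the stretch guarantee $\dm(\mathcal{P}_{X,Y})\geq(1+s\epsilon)w(e)$, and derive a contradiction for $s$ large relative to $g$. The main difference is that you spell out carefully (via the diametral pair and the shared boundary-to-boundary stretch) why the replaced portion has length at least $\dm(\mathcal{P}_{X,Y})-2g\epsilon\ell$, a step the paper compresses into the unproved bound labeled ``lower bound on diameter.''
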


\begin{proof}
  For otherwise, $\mathcal{D}$ could be shortcut through $e$ at a cost of
  \begin{eqnarray*}
    & \le&  \underbrace{\dm(X) + \dm(Y) + w(e)}_\text{cost of shortcut}  - \underbrace{(\dm(\mathcal{P}_{X,Y}) - \dm(X) - \dm(Y))}_\text{lower bound on diameter} \\
    & \le&  +w(e) +4g\epsilon\ell -(1+s\epsilon)w(e) \quad \text{(by the stretch condition for }e)\\
    & \le& 4g\epsilon\ell -s\epsilon\ell/2 \quad \text{(since }w(e) \geq \ell/2)
  \end{eqnarray*}
 This change in cost is negative for $s \geq 8g +1$.
\end{proof}

Let $\mathcal{S}(X) = \trunc{ \mathcal{D} \cap \mathcal{C}(X)}{\rfrac{2g}{\epsilon}}$ and $\mathcal{R}(X) = \trunc{ \mathcal{C}(X) \setminus \mathcal{D}}{\rfrac{2g}{\epsilon}}$.  The center $X$ collects the credits of $\epsilon$-clusters in $\mathcal{R}(X)$ and  $\mst$ edges outside $\mathcal{D}$ connecting $\epsilon$-clusters of $\mathcal{C}(X)$. We let other $\epsilon$-clusters in $\mathcal{C}(X)\setminus (\mathcal{R}(X) \cup \mathcal{S}(X))$ release their credits.

\begin{claim}\label{clm:P3a1-credit-center}
If $\mathcal{D}$ does not contain $e$, then $X$ has at least $\frac{c\epsilon\ell}{2}$ credits. Otherwise, $X$ has at least $cw(e) + \frac{c\epsilon\ell}{2}$  credits.
\end{claim}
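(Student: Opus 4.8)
The plan is to keep careful track of which $\epsilon$-clusters and $\mst$ edges of $\mathcal{C}(X)$ are routed to the center $X$ and to lower bound the total credit they carry, using three elementary facts: (i) every $\epsilon$-cluster carries at least $\frac{c\epsilon\ell}{2}$ credits (invariant DC1 at level $i-1$, since $\ell_{i-1}=\epsilon\ell$); (ii) every $\epsilon$-cluster $Z$ carries at least $c\cdot\dm(Z)$ credits (again DC1 at level $i-1$); and (iii) every $\mst$ edge connecting $\epsilon$-clusters retains its initial $c$ credits. Recall that $X$ collects the credits of the $\epsilon$-clusters in $\mathcal{R}(X)=\trunc{\mathcal{C}(X)\setminus\mathcal{D}}{\rfrac{2g}{\epsilon}}$ together with the credits of \emph{all} $\mst$ edges lying outside $\mathcal{D}$ that connect $\epsilon$-clusters of $\mathcal{C}(X)$ — and, importantly, the $\mst$ edges are not affected by the truncation defining $\mathcal{R}(X)$.

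First I would dispose of the case $e\notin\mathcal{D}$. Here $\mathcal{D}$ is a subpath of the path $\mathcal{C}(X)\setminus\{e\}=\mathcal{P}_X\cup\mathcal{P}_{X,Y}\cup\mathcal{P}_Y$, so if $\mathcal{D}$ contained both $X$ and $Y$ it would contain all of $\mathcal{P}_{X,Y}$, contradicting Observation~\ref{obs:Pxy-no-in-D}. Thus at least one of $X,Y$ lies in $\mathcal{C}(X)\setminus\mathcal{D}$, and choosing $\mathcal{R}(X)$ to contain it already gives $X$ at least $\frac{c\epsilon\ell}{2}$ credits.

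The main work is the case $e\in\mathcal{D}$, where I must produce $cw(e)+\frac{c\epsilon\ell}{2}$ credits. Since $\mathcal{D}$ is a simple path (Claim~\ref{clm:D-simple-path}) that uses $e$, it cannot traverse the unique cycle $\mathcal{P}_{X,Y}\cup\{e\}$ of $\epsilon$-clusters, so it must arrive at $X$ either from $\mathcal{P}_X$ or from a short initial segment of $\mathcal{P}_{X,Y}$, and likewise at $Y$. I split into two cases. If $\mathcal{D}$ arrives at $X$ from $\mathcal{P}_X$ and at $Y$ from $\mathcal{P}_Y$, then the whole interior of $\mathcal{P}_{X,Y}$ — a subpath of $\mst$ edges and $\epsilon$-clusters of diameter at least $\dm(\mathcal{P}_{X,Y})-2g\epsilon\ell\ge(1+s\epsilon)w(e)-2g\epsilon\ell$ (using the stretch guarantee $\dm(\mathcal{P}_{X,Y})\ge(1+s\epsilon)w(e)$ and $\dm(X),\dm(Y)\le g\epsilon\ell$) — lies outside $\mathcal{D}$; by (ii) and (iii) its edges and clusters carry at least $c$ credits per unit of diameter, i.e.\ at least $c\big((1+s\epsilon)w(e)-2g\epsilon\ell\big)\ge cw(e)+\frac{c\epsilon\ell}{2}$ for $s\ge 8g+1$ (using $w(e)\ge\ell/2$). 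If instead $\mathcal{D}$ arrives at $X$ (or $Y$) from $\mathcal{P}_{X,Y}$, then all of $\mathcal{P}_X$ (or $\mathcal{P}_Y$), whose effective diameter is at least $2\ell$ by minimality, lies outside $\mathcal{D}$, and by (ii) and (iii) its clusters and $\mst$ edges carry at least $2c\ell\ge cw(e)+\frac{c\epsilon\ell}{2}$ since $w(e)<\ell$ and $\epsilon<1$. In both cases there is a subpath $Q\subseteq\mathcal{C}(X)\setminus\mathcal{D}$ carrying at least $cw(e)+\frac{c\epsilon\ell}{2}$ credits. Finally I handle the truncation by a dichotomy: if $\mathcal{C}(X)\setminus\mathcal{D}$ has more than $\frac{2g}{\epsilon}$ $\epsilon$-clusters, then picking $\mathcal{R}(X)$ to be $\frac{2g}{\epsilon}$ of them already gives $X$ at least $\frac{2g}{\epsilon}\cdot\frac{c\epsilon\ell}{2}=cg\ell\ge cw(e)+\frac{c\epsilon\ell}{2}$ (as $g>2$); otherwise $\mathcal{R}(X)=\mathcal{C}(X)\setminus\mathcal{D}$ contains every $\epsilon$-cluster of $Q$, and since the $\mst$ edges of $Q$ are not truncated, $X$ collects the full credit of $Q$.

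The hard part is the case analysis in the last paragraph: one has to be sure the claimed ``long'' portion of $\mathcal{C}(X)$ genuinely lies outside $\mathcal{D}$, which rests on the simplicity of $\mathcal{D}$ (Claim~\ref{clm:D-simple-path}) together with $\mathcal{D}$ being a \emph{shortest} path in $\mathcal{C}(X)$ — this is what prevents $\mathcal{D}$ from detouring deep into $\mathcal{P}_{X,Y}$ from both ends at once — and that the truncation defining $\mathcal{R}(X)$ never costs anything, which the few-versus-many dichotomy above takes care of. The small $O(g\epsilon\ell)$ losses at the ends of the relevant subpaths (from cluster diameters) are absorbed by taking the stretch parameter $s$ large enough, consistent with the bound $s\ge 8g+1$ already forced by Observation~\ref{obs:Pxy-no-in-D}.
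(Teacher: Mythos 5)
Your proof follows essentially the same route as the paper's: dispose of $e\notin\mathcal{D}$ by noting that a simple subpath of $\mathcal{P}_X\cup\mathcal{P}_{X,Y}\cup\mathcal{P}_Y$ containing both $X$ and $Y$ would contain $\mathcal{P}_{X,Y}$ (contradicting Observation~\ref{obs:Pxy-no-in-D}); for $e\in\mathcal{D}$, split on whether $\mathcal{D}$ enters $\mathcal{P}_{X,Y}$ from one side (your second subcase, paper's ``$\mathcal{D}$ contains an internal $\epsilon$-cluster of $\mathcal{P}_{X,Y}$'') or not (your first subcase, paper's stretch-condition argument via Equation~\eqref{eq:dm-L-prime-case2}); and finish with the same many-versus-few dichotomy to handle the truncation defining $\mathcal{R}(X)$. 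One small inaccuracy: in the second subcase you write that \emph{all} of $\mathcal{P}_X$ lies outside $\mathcal{D}$ and so carries credit $\geq 2c\ell$, but since $e\in\mathcal{D}$ the endpoint $\epsilon$-cluster $X$ is on $\mathcal{D}$, so only $\mathcal{P}_X\setminus X$ is outside $\mathcal{D}$, giving credit $\geq c(2-g\epsilon)\ell$; the desired inequality $c(2-g\epsilon)\ell\ge cw(e)+\frac{c\epsilon\ell}{2}$ then needs $\epsilon\le\frac{2}{2g+1}$, which is exactly the condition the paper imposes, so the conclusion still stands.
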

\begin{proof}
If $\mathcal{C}(X)\setminus \mathcal{D}$ contains at least $\frac{2g}{\epsilon}$ $\epsilon$-clusters, then $|\mathcal{R}(X)| = \frac{2g}{\epsilon}$. Thus, by invariant DC1 for level ${i-1}$, the total credit of $\epsilon$-clusters in $\mathcal{R}(X)$ is at least:
\begin{equation*}
\begin{split}
\frac{2g}{\epsilon} \cdot \frac{c\epsilon \ell}{2} = gc\ell &\geq c\ell + \frac{c\epsilon\ell}{2} \quad \mbox{ (for } g\geq 2 \mbox{ and } \epsilon < 1) \\
&\geq cw(e) + \frac{c\epsilon\ell}{2} \quad \mbox{(since } w(e) \leq \ell)
\end{split}
\end{equation*}
Thus, we can assume that $\mathcal{C}(X)\setminus \mathcal{D}$ contains less than $\frac{2g}{\epsilon}$ $\epsilon$-clusters. In this case, $\mathcal{R}(X) = \mathcal{C}(X)\setminus \mathcal{D}$. Since $\mathcal{P}_{X,Y} \not\subseteq \mathcal{D}$ by Observation~\ref{obs:Pxy-no-in-D}, $\mathcal{D}$ does not contain, w.l.o.g., $\mathcal{P}_X$. Thus,  $\mathcal{R}(X)$ contains at least one $\epsilon$-cluster and the claim holds for the case that $e \not\in \mathcal{D}$. 

Suppose that $\mathcal{D}$ contains $e$ and an internal $\epsilon$-clusters of $\mathcal{P}_{X,Y}$, then w.l.o.g., $\mathcal{D}$ does not contain $\mathcal{P}_X \setminus X$.  $\mathcal{P}_X \setminus X$ has credit $2c\ell - g\epsilon c \ell $. Since $\ell \geq w(e)$ and $\ell - g\epsilon\ell \geq \frac{\epsilon \ell}{2}$ when $\epsilon$ is sufficiently small ($\epsilon \leq \frac{2}{1 + 2g}$), the claim holds.

If $\mathcal{D}$ contains $e$ but no internal $\epsilon$-clusters of $\mathcal{P}_{X,Y}$, then
\begin{equation} \label{eq:dm-L-prime-case2}
\begin{split}
& \dm(\mathcal{P}_{X,Y} \setminus \{X,Y\}) \\
& \ge \dm(\mathcal{P}_{X,Y}) - \dm(X)-\dm(Y) \\
& \ge (1+ s\epsilon)w(e) - \dm(X)-\dm(Y)\ \ \ (\mbox{by the stretch condition}) \\
& \ge w_{e} + s\epsilon\ell/2 - 2g\epsilon\ell\ \ \ (\mbox{by bounds on $w(e)$ and DC2}) \\
& \ge w_e + \epsilon \ell/2 \ \ \ (\mbox{for $s \ge 8g + 1$, as previously required})
\end{split} 
\end{equation}

The credit of the $\mst$ edges and $\epsilon$-clusters of $\mathcal{P}_{X,Y} \setminus \{X,Y\}$ is at least:
\begin{equation}
\begin{split}
&c \cdot(\mst(\mathcal{P}_{X,Y} \setminus \{X,Y\}) + \edm(\mathcal{P}_{X,Y} \setminus \{X,Y\})) \\&\ge c\cdot \dm(\mathcal{P}_{X,Y} \setminus \{X,Y\})\\&\geq c(w_e + \epsilon \ell/2)  \qedhere
\end{split}
\end{equation}
\end{proof} 

Let $A_3$ be the set of \emph{unpaid edges} of $E_i$ incident to $\epsilon$-clusters of clusters in Case 1 of Phase 3a.

\begin{claim}\label{clm:paid-credit-P3a1}
If $ c= \Omega(\frac{g}{\epsilon^3})$,  we can buy edges  of $A_3$ using $\frac{c\epsilon\ell}{6}$ credits from each center and credits of releasing $\epsilon$-clusters.
\end{claim}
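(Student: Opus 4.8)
The plan is to mirror the argument of Claim~\ref{clm:paid-credit-P2}, using three facts. First, every $\epsilon$-cluster grouped in Phase~3 is low-degree, since all high-degree $\epsilon$-clusters were consumed in Phase~1; hence each such $\epsilon$-cluster is incident to at most $\frac{20}{\epsilon}$ edges of $E_i$, and in particular to at most $\frac{20}{\epsilon}$ edges of $A_3$. Second, the non-releasing $\epsilon$-clusters of a Case-1 cluster $\mathcal{C}(X)$ are exactly those in $\mathcal{S}(X)\cup\mathcal{R}(X)$, and since $\mathcal{S}(X)=\trunc{\mathcal{D}\cap\mathcal{C}(X)}{\rfrac{2g}{\epsilon}}$ and $\mathcal{R}(X)=\trunc{\mathcal{C}(X)\setminus\mathcal{D}}{\rfrac{2g}{\epsilon}}$, Equation~\eqref{eq:size-trunc} gives $|\mathcal{S}(X)\cup\mathcal{R}(X)|\le\frac{4g}{\epsilon}$. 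Third, each releasing $\epsilon$-cluster carries at least $\frac{c\epsilon\ell}{2}$ credits by invariant DC1 for level ${i-1}$, and by Claim~\ref{clm:P3a1-credit-center} the center $X$ has collected at least $\frac{c\epsilon\ell}{2}$ credits (plus an additional $cw(e)$ when $e\in\mathcal{D}$, which is not needed here).

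Next I would fix an assignment of each edge $e'\in A_3$ to one of its endpoints' $\epsilon$-clusters lying in a Case-1 cluster, preferring a releasing endpoint whenever one exists; this is always possible because, by definition, $e'$ is incident to an $\epsilon$-cluster grouped in Case~1 of Phase~3a. For an edge assigned to a releasing $\epsilon$-cluster $Z$, the cluster $Z$ is charged at most $\frac{20}{\epsilon}$ edges each of weight at most $\ell$, a total of at most $\frac{20\ell}{\epsilon}$; since $c=\Omega(g/\epsilon^3)$ forces $c=\Omega(1/\epsilon^2)$, the credit $\frac{c\epsilon\ell}{2}$ of $Z$ covers this.

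For an edge assigned to a non-releasing $\epsilon$-cluster of a Case-1 cluster $\mathcal{C}(X)$, all such $\epsilon$-clusters lie in $\mathcal{S}(X)\cup\mathcal{R}(X)$, so the number of these edges is at most $\frac{4g}{\epsilon}\cdot\frac{20}{\epsilon}=O(g/\epsilon^2)$, of total weight $O(g\ell/\epsilon^2)$. Taking the hidden constant in $c=\Omega(g/\epsilon^3)$ large enough makes $\frac{c\epsilon\ell}{6}=\Omega(g\ell/\epsilon^2)$ dominate this weight, so $\frac{c\epsilon\ell}{6}$ of the center's collected credits pay for all of them, leaving at least $\frac{c\epsilon\ell}{3}$ (and the surplus $cw(e)$ when $e\in\mathcal{D}$) untouched for maintaining DC1 in the later analysis.

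The main difficulty here is bookkeeping rather than anything substantive: I must make sure no edge of $A_3$ is billed twice — guaranteed by the fixed assignment rule — and that an edge is never billed to a center whose cluster does not own one of its endpoints. I also need to check the numeric slack, namely $\frac{c\epsilon\ell}{6}=\Omega(g\ell/\epsilon^2)$ and $\frac{c\epsilon\ell}{2}\ge\frac{20\ell}{\epsilon}$, both of which follow from $c=\Omega(g/\epsilon^3)$. One point to keep in mind is that when $|\mathcal{D}\cap\mathcal{C}(X)|>\frac{2g}{\epsilon}$ some $\epsilon$-clusters on the diameter path $\mathcal{D}$ are releasing and spend their credits here; this is consistent with the statement of the claim, and the accounting needed to still satisfy DC1 in that case is deferred to the later proof of the diameter-credit invariants for $\mathcal{C}(X)$.
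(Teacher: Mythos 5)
Your proof is correct and follows essentially the same route as the paper's: releasing $\epsilon$-clusters pay for their own incident edges (using $c=\Omega(1/\epsilon^2)$), while the center's $\frac{c\epsilon\ell}{6}$ credits cover the at most $\frac{4g}{\epsilon}\cdot\frac{20}{\epsilon}=O(g/\epsilon^2)$ edges of $A_3$ touching the non-releasing $\epsilon$-clusters in $\mathcal{S}(X)\cup\mathcal{R}(X)$, giving the $c=\Omega(g/\epsilon^3)$ requirement. The explicit assignment rule you introduce to avoid double-billing is a clean piece of bookkeeping that the paper leaves implicit, but it does not change the argument.
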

\begin{proof}
Consider a cluster $\mathcal{C}(X)$ in Phase 3. Similar to Claim~\ref{clm:paid-credit-P2}, releasing $\epsilon$-clusters can pay for their incident edges in $A_3$ when $c = \Omega(\frac{1}{\epsilon^2})$. By construction, non-releasing clusters of $\mathcal{C}(X)$ are in $\mathcal{S}(X) \cup \mathcal{R}(X)$. Since $|\mathcal{R}(X)| \leq \frac{2g}{\epsilon}$ and $|\mathcal{S}(X)| \leq \frac{2g}{\epsilon}$ by Equation~\eqref{eq:size-trunc} and since clusters now we are considering have low degree, there are at most
\begin{equation*}
\frac{4g}{\epsilon} \cdot \frac{20}{\epsilon}  = O\left( \frac{g}{\epsilon^2}\right)
\end{equation*}
edges of $A_3$ incident to non-releasing $\epsilon$-clusters. Thus, if $ c= \Omega(\frac{g}{\epsilon^3})$,  $\frac{c\epsilon\ell}{6} \geq \Omega(\frac{g}{\epsilon^2})\ell$. That implies  $\frac{c\epsilon\ell}{4}$ credits of $X$ suffice to pay for all edges of $A_3$ incident to non-releasing $\epsilon$-clusters.
\end{proof}

\paragraph*{Case 2: {\boldmath$\edm(\mathcal{P}_{X,Y})>2\ell$}} Refer to Figure~\ref{fig:phase4a}(b). Let $\mathcal{Q}_X$ and $\mathcal{Q}_Y$ be \emph{minimal} affices of $\mathcal{P}_{X,Y}$
such that each has effective diameter at least $\ell$.  We construct a new cluster consisting of (the $\epsilon$-clusters and $\mst$ edges of) $\mathcal{P}_X$, $\mathcal{P}_Y$, $\mathcal{Q}_X$ and $\mathcal{Q}_Y$ and edge $e$. We refer to $X$ as the center of the new cluster.

We apply Case 1 to all edges of $E_i$ satisfying the condition of Case 1 until no such edges exist.  We then apply Case 2 to all remaining edges of $E_i$ satisfying the conditions of Case 2.  After each new cluster is created (by Case 1 or 2), we delete the $\epsilon$-clusters in the new cluster from $\mathcal{P}$, reassign the resulting components of $\mathcal{P}$ to the sets of HD- and LD-components.  At the end, any edge of $E_i$ with both endpoints in the same HD-path have both endpoints in two disjoint affixes of effective diameter less than $2\ell$.

We bound the diameter and credit of the centers of clusters in Case 2 of Phase 3a in Phase 3b. 

\subsubsection*{Phase 3b: Edges of $E_i$ between HD-paths} Let $e$ be an edge of $E_i$ that connects $\epsilon$-cluster $X$ of HD-path $\mathcal{P}$ to $\epsilon$-cluster $Y$ of different HD-path $\mathcal{Q}$ such that none affix of effective diameter less than $2\ell$ of $\mathcal{P}$ contains $X$ and none affix of effective diameter less than $2\ell$ of $\mathcal{Q}$ contains $Y$. Such edge $e$ is said to have  both endpoints \emph{far} from endpoint $\epsilon$-clusters of $\mathcal{P}$ and $\mathcal{Q}$.

Let $\mathcal{P}_X$ and $\mathcal{Q}_X$ be  \emph{minimal} edge-disjoint subpaths of $\mathcal{P}$ that end at $X$ and each having effective diameter at least $2\ell$.  ($\mathcal{P}_X$ and $\mathcal{Q}_X$ exist by the way we choose edge $e$.)  Similarly, define $\mathcal{P}_Y$ and $\mathcal{Q}_Y$.  We construct a new cluster consisting of (the $\epsilon$-clusters and $\mst$ edges of)  $\mathcal{P}_X, \mathcal{P}_Y, \mathcal{Q}_X, \mathcal{Q}_Y$ and edge $e$ (see Figure~\ref{fig:phase4b}). We refer to $X$ as the center of the new cluster. We then delete the $\epsilon$-clusters in the new cluster from $\mathcal{P}$ and $\mathcal{Q}$, reassign the resulting components of $\mathcal{P}$ and $\mathcal{Q}$ to the sets of HD- and LD-components. We continue to create such new clusters until there are no edges of $E_i$ connecting HD-paths with far endpoints. 

We now bound the diameter and credits of the center of a cluster, say $\mathcal{C}(X)$, that is formed in Case 2 of Phase 3a or in Phase 3b. By construction in both cases,  $\mathcal{C}(X)$ consists of two paths  $ \mathcal{P}_X\cup \mathcal{Q}_X$ and $ \mathcal{P}_Y\cap \mathcal{Q}_Y$ connected by an edge $e$.

\begin{claim} \label{clm:diam-cluster-P3b}
 The diameter of each cluster in Case 2 of Phase 3a and Phase 3b is at least $\frac{\ell}{2}$ and at most $(9 + 4\epsilon g)\ell$.
\end{claim}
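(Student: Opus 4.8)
The plan is to read the diameter bounds off the structural description recalled just before the claim: $\mathcal{C}(X)$ is the union of two $\mf$-paths of $\epsilon$-clusters, $\mathcal{P}_X\cup\mathcal{Q}_X$ through its center $X$ and $\mathcal{P}_Y\cup\mathcal{Q}_Y$ through $Y$, joined by the single edge $e\in E_i$. The lower bound is immediate: $\mathcal{C}(X)\subseteq S$ contains $e$, and in a greedy spanner the shortest path between the endpoints of an edge is that edge, so the distance in $\mathcal{C}(X)$ between the endpoints of $e$ equals $w(e)\ge\ell/2$; hence $\dm(\mathcal{C}(X))\ge\ell/2$.

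For the upper bound, I would first bound the four constituent subpaths. Each of $\mathcal{P}_X,\mathcal{P}_Y$ (and, in Phase~3b, each of $\mathcal{Q}_X,\mathcal{Q}_Y$) is a minimal subpath of an $\mf$-path whose effective diameter reaches $2\ell$; deleting its far-end $\epsilon$-cluster, which has diameter at most $g\epsilon\ell$ by DC2, drops the effective diameter below $2\ell$, so its effective diameter is less than $(2+g\epsilon)\ell$. In Case~2 of Phase~3a the minimality target for $\mathcal{Q}_X,\mathcal{Q}_Y$ is only $\ell$, so the same bound holds there a fortiori. By Observation~\ref{obs:effdiam-vs-diam} applied to $\mathcal{P}_X$ and $\mathcal{Q}_X$ (subpaths of one $\mf$-path meeting only at $X$), every vertex $v$ on the $X$-side satisfies $d_{\mathcal{C}(X)}(v,X)\le 2\max\{\edm(\mathcal{P}_X),\edm(\mathcal{Q}_X)\}<(4+2g\epsilon)\ell$, and symmetrically $d_{\mathcal{C}(X)}(v,Y)<(4+2g\epsilon)\ell$ for $v$ on the $Y$-side.

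Combining these, for $u,v$ on the same side the triangle inequality through the common center gives $d_{\mathcal{C}(X)}(u,v)\le d_{\mathcal{C}(X)}(u,X)+d_{\mathcal{C}(X)}(X,v)<(8+4g\epsilon)\ell$, and for $u,v$ on opposite sides, routing $u\to X\to Y\to v$ and using $w(e)<\ell$ gives
\[
d_{\mathcal{C}(X)}(u,v)<(4+2g\epsilon)\ell+\ell+(4+2g\epsilon)\ell=(9+4g\epsilon)\ell ,
\]
so the maximum over the two cases is exactly the claimed bound. I do not expect a genuine obstacle here — the argument is triangle-inequality bookkeeping on top of Observation~\ref{obs:effdiam-vs-diam} and DC2 — but the one point needing care is that the ``two paths joined by $e$'' description is exact, in particular that $\mathcal{Q}_X$ and $\mathcal{Q}_Y$ do not overlap inside $\mathcal{P}_{X,Y}$ in Case~2 of Phase~3a; should they overlap, $\mathcal{C}(X)$ is only a subgraph of $\mathcal{P}_X\cup\mathcal{Q}_X\cup\mathcal{Q}_Y\cup\mathcal{P}_Y\cup\{e\}$, so the upper bound still goes through unchanged, and the lower bound is unaffected.
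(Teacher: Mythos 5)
Your proof is correct and takes essentially the same approach as the paper: the lower bound comes from the greedy stretch guarantee applied to $e$ (referenced back to Claim~\ref{clm:diam-cluster-P3a}), and the upper bound from the minimality of the subpaths together with DC2 and Observation~\ref{obs:effdiam-vs-diam}. Your explicit same-side/opposite-side case split is a slightly more careful rendering of the paper's terse assertion that the cluster diameter is bounded by $\dm(\mathcal{P}_X)+\dm(\mathcal{P}_Y)+w(e)$; both routes yield the stated $(9+4g\epsilon)\ell$.
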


\begin{proof}
The lower bound follows from the same argument as in the proof of Claim~\ref{clm:diam-cluster-P3a}. Since the effective diameters of $\mathcal{Q}_X$ and $\mathcal{Q}_Y$ are smaller than the effective diameters of $\mathcal{P}_X$ and $\mathcal{P}_Y$, the diameter of the new cluster is bounded by the sum of the diameters of $\mathcal{P}_X$ and $\mathcal{P}_Y$  and $w(e)$.  The upper bound follows from the upper bounds on these diameters as given in the proof of Claim~\ref{clm:diam-cluster-P3a}.
\end{proof}

\begin{figure}
\centering
\includegraphics[scale = 1.5]{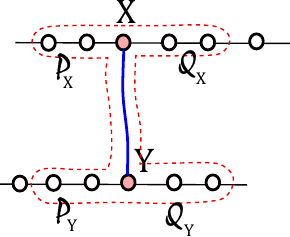}
\caption{A cluster of $\mathcal{C}$ in \textbf{Phase 4a}.  Thin edges are edges of $\MST$, solid blue edges are edges of $E_i$ and vertices are $\epsilon$-clusters. Edges and vertices inside the dashed red curves are grouped into a new cluster.}
\label{fig:phase4b}
\end{figure}

We show how to pay for unpaid edges of $E_i$ incident to $\epsilon$-clusters in Case 2 of Phase 3a and Phase 3b.  W.l.o.g, we refer to  $X$ as the center $\epsilon$-cluster of $\mathcal{C}(X)$. Let $\mathcal{S}(X) = \trunc{\mathcal{D} \cap \mathcal{C}(X)}{\rfrac{2g}{\epsilon}}$ and $\mathcal{R}(X) = \trunc{\mathcal{C}(X) \setminus \mathcal{D}}{\rfrac{2g}{\epsilon}}$ where $\mathcal{D}$ is the cluster-diameter path of $\mathcal{C}(X)$. We save credits of $\epsilon$-clusters in $\mathcal{S}(X)$ for maintaining invariant DC1. The center $X$ collects credits of $\epsilon$-clusters in $\mathcal{R}(X)$. We let other $\epsilon$-clusters in $\mathcal{C}(X)\setminus (\mathcal{S}(X) \cup \mathcal{R}(X))$ to release their credits. 

\begin{claim} \label{clm:P3a2-P3b-credit-center}
The center of a cluster in Case 2 of Phase 3a or in Phase 3b has at least $2c(1-g\epsilon)\ell$ credits.
\end{claim}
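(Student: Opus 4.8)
The plan is to bound the credit the center $X$ collects by first splitting on the size of $\mathcal{C}(X)\setminus\mathcal{D}$ and then, in the interesting case, locating enough $\epsilon$-cluster diameter strictly outside $\mathcal{D}$ that DC1 for level $i-1$ alone yields the bound. Recall first that $\mathcal{C}(X)$ is the union of the two paths of $\epsilon$-clusters $\mathcal{P}_X\cup\mathcal{Q}_X$ (meeting at $X$) and $\mathcal{P}_Y\cup\mathcal{Q}_Y$ (meeting at $Y$), joined by the single edge $e$; hence, after contracting $\epsilon$-clusters, $\mathcal{C}(X)$ is a tree whose only nodes of degree exceeding $2$ are $X$ and $Y$, and the cluster-diameter path $\mathcal{D}$ is a simple path in this tree. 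If $|\mathcal{C}(X)\setminus\mathcal{D}|\ge\frac{2g}{\epsilon}$, then $|\mathcal{R}(X)|=\frac{2g}{\epsilon}$, and since each $\epsilon$-cluster carries at least $\frac{c\epsilon\ell}{2}$ credits by DC1 for level $i-1$, $X$ collects at least $\frac{2g}{\epsilon}\cdot\frac{c\epsilon\ell}{2}=gc\ell\ge 2c(1-g\epsilon)\ell$ credits (using $g\ge2$). So from here I would assume $\mathcal{R}(X)=\mathcal{C}(X)\setminus\mathcal{D}$; it then suffices to exhibit $\epsilon$-clusters in $\mathcal{C}(X)\setminus\mathcal{D}$ whose diameters sum to at least $2(1-g\epsilon)\ell$, since by DC1 for level $i-1$ (a cluster of diameter $k$ carrying at least $ck$ credits) their total credit, collected by $X$, is then at least $2c(1-g\epsilon)\ell$.

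Next I would case on whether $e\in\mathcal{D}$. If $e\in\mathcal{D}$, then $\mathcal{D}$ passes through both $X$ and $Y$; being a simple path, at $X$ it uses $e$ together with at most one of the two sub-arms $\mathcal{P}_X,\mathcal{Q}_X$, so it avoids the interior of the other, and symmetrically at $Y$. Each of $\mathcal{P}_X,\mathcal{Q}_X,\mathcal{P}_Y,\mathcal{Q}_Y$ has effective diameter at least $\ell$ by construction (indeed at least $2\ell$, except for $\mathcal{Q}_X,\mathcal{Q}_Y$ in Case 2 of Phase 3a), and $\dm(X),\dm(Y)\le g\epsilon\ell$ by DC2 for level $i-1$; so removing the branch cluster from an avoided sub-arm leaves a subpath of effective diameter at least $(1-g\epsilon)\ell$. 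The two such subpaths lie in different arms, hence are vertex-disjoint and both contained in $\mathcal{C}(X)\setminus\mathcal{D}$, giving diameter sum at least $2(1-g\epsilon)\ell$. If instead $e\notin\mathcal{D}$, then since $e$ is the only link between the two arms, $\mathcal{D}$ is confined to one of them, say $\mathcal{P}_X\cup\mathcal{Q}_X$; the entire other arm then lies in $\mathcal{C}(X)\setminus\mathcal{D}$ and has effective diameter at least $\edm(\mathcal{P}_Y)+\edm(\mathcal{Q}_Y)-\dm(Y)\ge 2\ell+\ell-g\epsilon\ell$, which is at least $2(1-g\epsilon)\ell$. Either way the claim follows.

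The main obstacle I anticipate is the bookkeeping around the two branch clusters: verifying that whenever $\mathcal{D}$ uses $e$ it really does avoid the interior of at least one sub-arm at each of $X$ and $Y$ — including the degenerate case that $\mathcal{D}$ terminates at $X$ or $Y$, where it avoids both sub-arms there and the bound only improves — and that, after deleting the branch cluster, the avoided pieces are disjoint from $\mathcal{D}$ and from each other; the latter holds because, in a tree, the clusters of $\mathcal{Q}_X$ other than $X$ are reachable from the rest of $\mathcal{C}(X)$ only through $X$. One must also double-check that the effective-diameter thresholds $\ell$ (resp.\ $2\ell$) used above are exactly those guaranteed when $\mathcal{P}_X,\mathcal{Q}_X,\mathcal{P}_Y,\mathcal{Q}_Y$ are chosen in Case 2 of Phase 3a and in Phase 3b. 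Beyond this, the proof is a direct application of invariants DC1 and DC2 for level $i-1$.
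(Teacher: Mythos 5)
Your proof is correct and follows essentially the same route as the paper: split on whether $|\mathcal{C}(X)\setminus\mathcal{D}|\ge\frac{2g}{\epsilon}$, and in the remaining case use the tree structure of $\mathcal{C}(X)$ to identify two of the four arms $\mathcal{P}_X,\mathcal{Q}_X,\mathcal{P}_Y,\mathcal{Q}_Y$ whose interiors avoid $\mathcal{D}$, each contributing $\ge c(1-g\epsilon)\ell$ credits via DC1. Your explicit case split on $e\in\mathcal{D}$ versus $e\notin\mathcal{D}$ is a slightly more careful unpacking of the paper's terser observation that $\mathcal{D}$ meets at most two of the four paths, but the underlying argument is the same.
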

\begin{proof}
If $|\mathcal{C}(X)\setminus \mathcal{D} |  \geq \frac{2g}{\epsilon}$, $\mathcal{R}(X)$  has $\frac{2g}{\epsilon}$ $\epsilon$-clusters which have at least $gc\ell$ total credits by invariant DC1 for level ${i-1}$. Since $gc\ell > 2c(1-g\epsilon)\ell$ when $g > 2$, the claim holds. Thus, we assume that $|\mathcal{C}(X)\setminus \mathcal{D} |  < \frac{2g}{\epsilon}$ which implies $\mathcal{R}(X) =\mathcal{C}(X)\setminus \mathcal{D} $. By construction, $\mathcal{S}(X)$ contains $\epsilon$-clusters of at most two of four paths $\mathcal{P}_X,\mathcal{P}_Y,\mathcal{Q}_X,\mathcal{Q}_Y$.  Since each path has effective diameter at least $\ell$, the $\epsilon$-clusters of each path in $\mathcal{R}(X)$ have total diameter at least $\ell - g\epsilon \ell$. By invariant DC1 for level ${i-1}$, each path in  $\mathcal{R}(X)$ has at least $c(1 - g\epsilon)\ell$ credits that implies the claim.
\end{proof}

Let $A_4$ be the set of \emph{unpaid edges} of $E_i$ incident to $\epsilon$-clusters of clusters in Case 2 of Phase 3a and clusters in Phase 3b.

\begin{claim} \label{clm:paid-credit-P3a2-P3b}
If $c = \Omega( \frac{g}{\epsilon^2})$, we can buy edges  of $A_4$ using $c(1-3g\epsilon)\ell$ credits of the centers of clusters in Case 2 of Phase 3a and Phase 3b and credits of releasing $\epsilon$-clusters.
\end{claim}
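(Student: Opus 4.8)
The plan is to mirror the proofs of Claim~\ref{clm:paid-credit-P2} and Claim~\ref{clm:paid-credit-P3a1}, separating the edges of $A_4$ according to whether, at the $\epsilon$-cluster of $\mathcal{C}(X)$ they are incident to, that $\epsilon$-cluster is releasing or non-releasing. First I would note that every $\epsilon$-cluster grouped in Case~2 of Phase~3a or in Phase~3b is a node of $\mf$ and therefore survived Phase~1, so it is low-degree; hence each such $\epsilon$-cluster is incident to at most $\frac{20}{\epsilon}$ edges of $E_i$, and in particular at most $\frac{20}{\epsilon}$ edges of $A_4$. For a releasing $\epsilon$-cluster, invariant DC1 for level $i-1$ guarantees at least $\frac{c\epsilon\ell}{2}$ credits, which—since each edge of $A_4$ has weight at most $\ell$—already covers its at most $\frac{20}{\epsilon}$ incident edges of $A_4$ once $c=\Omega\!\left(\frac{1}{\epsilon^2}\right)$, exactly as in Claim~\ref{clm:paid-credit-P2}.

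Next I would count the edges the center must pay for. By construction the non-releasing $\epsilon$-clusters of a cluster $\mathcal{C}(X)$ all lie in $\mathcal{S}(X)\cup\mathcal{R}(X)$, and by Equation~\eqref{eq:size-trunc} we have $|\mathcal{S}(X)|\le\frac{2g}{\epsilon}$ and $|\mathcal{R}(X)|\le\frac{2g}{\epsilon}$, so there are at most $\frac{4g}{\epsilon}$ such $\epsilon$-clusters; being low-degree, they are collectively incident to at most $\frac{4g}{\epsilon}\cdot\frac{20}{\epsilon}=O\!\left(\frac{g}{\epsilon^2}\right)$ edges of $A_4$, of total weight $O\!\left(\frac{g}{\epsilon^2}\right)\ell$ since each such edge weighs at most $\ell$.

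Finally I would invoke Claim~\ref{clm:P3a2-P3b-credit-center}, which says the center holds at least $2c(1-g\epsilon)\ell$ credits. Because $\epsilon$ is taken sufficiently smaller than $\frac{1}{g}$, the quantity $c(1-3g\epsilon)\ell$ is a constant fraction of $c\ell$, so $c(1-3g\epsilon)\ell=\Omega\!\left(\frac{g}{\epsilon^2}\right)\ell$ whenever $c=\Omega\!\left(\frac{g}{\epsilon^2}\right)$; this suffices to buy all edges of $A_4$ incident to non-releasing $\epsilon$-clusters, and it leaves the center with at least $2c(1-g\epsilon)\ell-c(1-3g\epsilon)\ell=c(1+g\epsilon)\ell\ge c\ell$ credits to be spent later in Phase~4 and on invariant DC1. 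I do not expect a genuine obstacle here; the only points requiring care are that $\mathcal{S}(X)\cup\mathcal{R}(X)$ together with the low-degree property really do account for every non-releasing $\epsilon$-cluster (so no edge of $A_4$ is left unpaid), and that the slack factor $1-3g\epsilon$ is precisely what guarantees the center retains $\Omega(c\ell)$ credits afterwards—the reason this claim needs only $c=\Omega\!\left(\frac{g}{\epsilon^2}\right)$, rather than the $\Omega\!\left(\frac{g}{\epsilon^3}\right)$ required in Claim~\ref{clm:paid-credit-P3a1}, is that here the center's budget is $\Theta(c\ell)$ and not merely $\Theta(c\epsilon\ell)$.
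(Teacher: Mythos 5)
Your proposal is correct and follows essentially the same route as the paper: releasing $\epsilon$-clusters pay for their own incident edges of $A_4$ using the $\Omega(c\epsilon\ell)$ guaranteed by DC1, the non-releasing $\epsilon$-clusters are bounded in number by $|\mathcal{S}(X)\cup\mathcal{R}(X)|\le\frac{4g}{\epsilon}$ and hence incident to $O(\frac{g}{\epsilon^2})$ edges of $A_4$ by low-degree, and $c(1-3g\epsilon)\ell=\Omega(c\ell)$ covers these when $c=\Omega(\frac{g}{\epsilon^2})$. Your additional bookkeeping showing the center retains $c(1+g\epsilon)\ell$ credits, and your remark explaining why $\Omega(\frac{g}{\epsilon^2})$ suffices here versus $\Omega(\frac{g}{\epsilon^3})$ in Claim~\ref{clm:paid-credit-P3a1}, are accurate elaborations beyond what the paper states explicitly but do not change the argument.
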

\begin{proof}Similar to the proof of Claim~\ref{clm:paid-credit-P2}, releasing $\epsilon$-clusters of $\mathcal{C}(X)$ can buy their incident edges in $A_4$ when $c = \Omega(\frac{1}{\epsilon^2})$. By construction, non-releasing $\epsilon$-clusters are in $\mathcal{S}(X) \cup \mathcal{R}(X)$. Since $|\mathcal{R}(X)| \leq \frac{2g}{\epsilon}$ and $|\mathcal{S}(X)| \leq \frac{2g}{\epsilon}$, there are at most $ O(\frac{g}{\epsilon^2})$ edges of $A_4$ incident to non-releasing $\epsilon$-clusters. When $\epsilon$ is sufficiently small ($\epsilon < \frac{1}{6g}$), $c(1 - 3g\epsilon)\ell > \frac{c\ell}{2} $. Thus, if $c= \Omega(\frac{g}{\epsilon^2})$, $\frac{c\ell}{2} =  \Omega(\frac{g}{\epsilon^2})\ell$ and hence,  $c(1 - 3g\epsilon)\ell$ credits suffice to pay for all edges of $A_4$ incident to non-releasing $\epsilon$-clusters of $\mathcal{C}(X)$.
\end{proof}

\subsection{Phase 4: Remaining HD-paths and LD-components}
 
We assume that $\mathcal{C} \not= \emptyset$ after Phase 3. The case when  $\mathcal{C} = \emptyset$ will be handled at the end of this section.

\subsubsection*{Phase 4a: LD-components}  Consider a LD-component $\mathcal{T}$, that has effective diameter less than $4\ell$. By construction, $\mathcal{T}$ must have an $\MST$ edge to a cluster, say $\mathcal{C}(X)$, in $\mathcal{C}$ formed in a previous phase. We include $\mathcal{T}$ and an $\MST$ edge connecting  $\mathcal{T}$ and  $\mathcal{C}(X)$ to  $\mathcal{C}(X)$. Let $A_5$ be the set of unpaid edges of $E_i$ that incident to $\epsilon$-clusters merged into new clusters in this phase. We use credit of the center $X$ and $\epsilon$-clusters in this phase to pay for $A_5$. More details will be given in Phase 4b.

\subsubsection*{Phase 4b: Remaining HD-paths} Let $\mathcal{P}$ be a HD-path. By construction, there is at least one $\mst$ edge connecting $\mathcal{P}$ to an existing cluster in $\mathcal{C}$. Let $e$ be one of them. Greedily break $\mathcal P$ into subpaths such that each subpath has effective diameter at least $2\ell$ and at most $4\ell$. We call a subpath of $\mathcal{P}$ a \emph{long subpath} if it contains at least $\frac{2g}{\epsilon}+1$ $\epsilon$-clusters and \emph{short subpath} otherwise. We process subpaths of $\mathcal P$ in two steps. In Step 1, we process affixes of $\mathcal{P}$, long subpaths of $\mathcal{P}$ and the subpath of $\mathcal{P}$ containing an endpoint $\epsilon$-cluster of $e$. In Step 2, we process remaining subpaths of $\mathcal{P}$. 
 
\paragraph*{Step 1}     If a subpath $\mathcal{P}'$ of $\mathcal{P}$ contain an $\epsilon$-cluster that is incident to $e$, we merge $\mathcal{P}'$ to the cluster in $\mathcal{C}$ that contains another endpoint $\epsilon$-cluster of $e$. We call $\mathcal{P}'$ the \emph{augmenting subpath} of $\mathcal{P}$. We form a new cluster from each long subpath of $\mathcal{P}$ and each affix of $\mathcal{P}$. It could be that one of two affixes of $\mathcal{P}$ is augmenting. We repeatedly apply Step 1 for all HD-paths. The remaining cluster paths which are short subpaths of HD-paths would be handled in Step 2.  We then pay for every unpaid edges of $E_i$ incident to $\epsilon$-clusters in this step. We call a cluster \emph{a long cluster} if it is a long subpath of $\mathcal{P}$ and a \emph{short cluster} if it is a short subpath of $\mathcal{P}$.  

Let $A_6$ be the set of unpaid edges of $E_i$ incident to $\epsilon$-clusters of long clusters.  We show below that each long cluster can both maintain diameter-credit invariant and pay for its incident edges in $A_6$ using credits of its $\epsilon$-clusters. 

Let $A_7$ be the set of unpaid edges of $E_i$ incident to remaining  $\epsilon$-clusters involved in this step; those belong to augmenting subpaths and short affices of HD-paths. We can pay for edges of $A_7$ incident to $\epsilon$-clusters in augmenting subpaths using the similar argument in previous phases. However, we must be careful when paying for other edges of $A_7$ that are incident to $\epsilon$-clusters in short affices of $\mathcal{P}$. Since short affices of $\mathcal{P}$ spend all credits of their children $\epsilon$-clusters to maintain invariant DC1, we need to use credits of $\epsilon$-clusters in $\mathcal{P}'$ to pay for edges of $A_7$ incident to short affices of $\mathcal{P}$.

\paragraph*{Step 2}  Let $\mathcal{P}'$ be a short subpath of $\mathcal{P}$. If edges of $E_i$ incident to $\epsilon$-clusters of  $\mathcal{P}'$ are all paid, we let $\mathcal{P}'$ become a new cluster. Suppose that  $\epsilon$-clusters in  $\mathcal{P}'$ are incident to at least one unpaid edge of $E_i$, say $e$. We have:

\begin{observation} \label{obs:unpaid-edges-P4b2} Edge $e$ must be incident to an $\epsilon$-cluster merged in Phase 1. 
\end{observation}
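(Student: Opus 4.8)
The plan is to argue by elimination. Write $e = X'Z$, where $X'$ is an $\epsilon$-cluster of $\mathcal{P}'$ and $Z$ is the other $\epsilon$-cluster incident to $e$ (distinct from $X'$, since $\mathcal{K}(\mathcal{C}_\epsilon,E_i)$ has no self-loops by Observation~\ref{obs:K-simple}). Since $X'$ lies on the HD-path $\mathcal{P}$ it is a node of $\mf$, hence unmarked, hence not merged in Phase~1; in particular $e \notin A_1$, because the members of $A_1$ have \emph{both} endpoints marked. Thus $e$ is still unbought at the start of every phase after Phase~1, and it suffices to show that $Z$ is merged in Phase~1. Every $\epsilon$-cluster is either merged in Phase~1 or is a node of $\mf$, and by the time Step~2 of Phase~4b begins every node of $\mf$ has been grouped into a cluster by Phase~2, Phase~3a, Phase~3b, Phase~4a, or Step~1 of Phase~4b, or else still lies in a short subpath of some HD-path (the LD-components having all been consumed in Phase~4a, and the affixes, long subpaths and augmenting subpaths of HD-paths in Step~1). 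I rule out each of these possibilities for $Z$.

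First suppose $Z$ has been grouped into a cluster by one of Phases~2, 3a, 3b, 4a, or Step~1 of Phase~4b. Each of these phases buys every edge of $E_i$ that is unbought when the phase runs and incident to an $\epsilon$-cluster the phase groups -- these are exactly the sets $A_2$ (Claim~\ref{clm:paid-credit-P2}), $A_3$ (Claim~\ref{clm:paid-credit-P3a1}), $A_4$ (Claim~\ref{clm:paid-credit-P3a2-P3b}), $A_5$, and $A_6 \cup A_7$. Since $e$ is unbought at the start of the phase in question and, once that phase groups $Z$, becomes incident to a newly grouped $\epsilon$-cluster, $e$ falls into the corresponding set and is bought -- contradicting the hypothesis that $e$ is unbought in Step~2 of Phase~4b. (When $Z$ lies in a Step~1 cluster of Phase~4b, one checks the sub-cases $Z$ in a long cluster, in an affix, or in the augmenting subpath, which place $e$ in $A_6$ or $A_7$.)

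It remains to exclude the possibility that $Z$ still lies in a short subpath $\mathcal{P}''$ of some HD-path $\mathcal{Q}$ when Step~2 of Phase~4b begins (possibly $\mathcal{Q}=\mathcal{P}$, possibly $\mathcal{P}''=\mathcal{P}'$). Short subpaths processed in Step~2 are, by construction, not affixes of their HD-path (nor long subpaths nor augmenting subpaths), so neither $\mathcal{P}'$ nor $\mathcal{P}''$ is an end-subpath of the greedy breaking; consequently neither $X'$ nor $Z$ lies in an affix of effective diameter less than $2\ell$ of its HD-path, because such an affix is contained in an end-subpath of the breaking, which Step~1 already handled. But the terminating conditions of Phases~3a and~3b guarantee that any edge of $E_i$ whose endpoints lie in a common HD-path, or in two distinct HD-paths, has at least one endpoint in such a small affix. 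This contradiction rules out $Z$ lying in any HD-path at that moment.

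Having exhausted the alternatives, $Z$ must be merged in Phase~1, proving the observation. The main obstacle in making this rigorous is the phase-by-phase bookkeeping in the second paragraph -- checking that $e$ really survives unbought until whichever phase would grab $Z$, which hinges entirely on $e \notin A_1$ -- together with reconciling the Phase~3 notion of an affix (a prefix or suffix of effective diameter below $2\ell$) with the Phase~4b notion (an end-subpath of the greedy breaking), so that ``$X'$ was not touched in Step~1'' genuinely certifies ``$X'$ is not in a small affix''. The rest follows directly from the definitions of the phases.
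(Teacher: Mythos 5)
Your proof is correct and follows the same elimination-by-phases strategy that the paper uses: rule out that the other endpoint of $e$ lies in a cluster formed in Phases~2, 3a, 3b, 4a, or Step~1 of Phase~4b (those edges are already in $A_2,\ldots,A_7$), then rule out that it lies in another short subpath via the terminating conditions of Phases~3a and~3b. You spell out two things the paper leaves implicit --- that an edge running between non-end short subpaths (possibly of distinct HD-paths) would contradict the Phase~3a/3b termination conditions, and the reconciliation between an \emph{affix of effective diameter less than $2\ell$} and the end subpaths of the Phase~4b greedy breaking --- but the underlying argument is the same.
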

\begin{proof}
Recall that edges of $E_i$ incident to $\epsilon$-clusters of clusters initially formed in  previous phases except Phase 1 are in $A_2 \cup \ldots \cup A_7$; thus, they are all paid. By construction, edges of $E_i$ between two $\epsilon$-clusters in the same cluster initially formed in Phase 1 are in $A_1 \cup A_5 \cup A_7$ which are also paid. Since $\mathcal{P}'$ is not an affix of $\mathcal{P}$, there is no unpaid edge between two $\epsilon$-clusters of $\mathcal{P}'$ since otherwise $\mathcal{P}'$ would become a new cluster in Phase 3a; that implies the observation.
\end{proof}

We merge $\epsilon$-clusters, $\mst$ edges of $\mathcal{P}'$ and $e$ to the cluster in $\mathcal{C}$ that contains another endpoint of $e$. This completes the clustering process. Let $A_8$ be the set of remaining unpaid edges of $E_i$ incident to  $\epsilon$-clusters involved in Step 2.

We now analyze clusters of $\mathcal{C}$ which are formed or modified in Phase 4. 

\begin{claim} \label{clm:dm-crd-short-affix-clustesr}  Let $\mathcal{B}$ be a short cluster. Then, $\dm(\mathcal{B}) \leq 8\ell$ and credits of $\epsilon$-clusters and $\mst$ edges connecting  $\epsilon$-clusters in $\mathcal{B}$ suffice to maintain invariant DC1 for $\mathcal{B}$.
\end{claim}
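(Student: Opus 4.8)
The plan is to treat the two assertions separately, each essentially by unwinding definitions and invoking the invariants already established for level $i-1$.

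For the diameter bound, recall that $\mathcal{B}$ is a short subpath of an HD-path $\mathcal{P}$, so by the greedy breaking performed in Step 1 it has $\edm(\mathcal{B}) \le 4\ell$. Since $\mathcal{B}$ is a path of $\epsilon$-clusters joined by $\mst$ edges, Observation~\ref{obs:effdiam-vs-diam} applies and gives $\dm(\mathcal{B}) \le 2\,\edm(\mathcal{B}) \le 8\ell$ at once.

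For the credit bound, I would first argue that none of the credit of $\mathcal{B}$'s $\epsilon$-clusters or of its internal $\mst$ edges has been consumed, so that all of it is available for DC1. This is where the case analysis of Phase 4b enters: if $\mathcal{B}$ is a short affix then, as noted in Step 1, it contributes none of its children's credit toward the edges of $A_7$ incident to it (those edges are charged to $\epsilon$-clusters of the augmenting subpath $\mathcal{P}'$); and if $\mathcal{B}$ is a short non-affix subpath it is declared a cluster in Step 2 only once every edge of $E_i$ incident to its $\epsilon$-clusters is already paid. Moreover each $\mst$ edge of $\mathcal{B}$ lies in $\mf$, hence was never absorbed into a lower-level cluster, and therefore still carries its initial credit $c$. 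Thus, writing $m$ for the number of $\mst$ edges of $\mathcal{B}$ and using DC1 for level $i-1$ (each $\epsilon$-cluster $Y$ of $\mathcal{B}$ has at least $c\cdot\dm(Y)$ credit), the total credit available to $\mathcal{B}$ is at least $c\bigl(\sum_{Y}\dm(Y)+m\bigr)=c\bigl(\edm(\mathcal{B})+m\bigr)$.

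It then remains to check that this dominates the DC1 requirement $c\cdot\max\{\dm(\mathcal{B}),\ell/2\}$. Since every $\mst$ edge has unit weight, the true diameter of the path $\mathcal{B}$ is at most $\edm(\mathcal{B})+m$, so the available credit is at least $c\cdot\dm(\mathcal{B})$; and since the greedy breaking guarantees $\edm(\mathcal{B})\ge 2\ell>\ell/2$, the available credit is also at least $c\cdot\ell/2$. Combining, DC1 holds for $\mathcal{B}$. I expect the diameter estimate and the final arithmetic to be routine; the only point requiring genuine care is the bookkeeping in the third paragraph, namely confirming that for both flavours of short cluster no credit of $\mathcal{B}$'s $\epsilon$-clusters or $\mst$ edges has been spent in earlier steps of Phase 4 or in earlier phases.
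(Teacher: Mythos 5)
Your proof is correct and follows essentially the same approach as the paper: both use Observation~\ref{obs:effdiam-vs-diam} for the diameter bound, and both bound the available credit by $c(\edm(\mathcal{B})+|\MST(\mathcal{B})|)\ge c\cdot\dm(\mathcal{B})$ together with $\edm(\mathcal{B})\ge 2\ell$ to get the $\max$ required by DC1. The extra bookkeeping you supply (verifying that no credit of $\mathcal{B}$'s $\epsilon$-clusters or internal $\mst$ edges was consumed earlier) is a reasonable sanity check that the paper leaves implicit.
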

\begin{proof}
Since $\edm(\mathcal{B}) \leq 4\ell$, by Observation~\ref{obs:effdiam-vs-diam}, $\dm(\mathcal{B}) \leq 8\ell$. The total credit of $\epsilon$-clusters and $\mst$ edges in $\mathcal{B}$ is at least:
\begin{equation*}
c(|\MST({\cal B})| +  \edm({\cal B})) \geq c \cdot\dm({\cal B}))
\end{equation*}
Since $\edm({\cal B})) \geq 2\ell$, $\mathcal{B}$ has at least $2c\ell$ credits. Thus, $\cal B$ has at least $c\cdot\max(\dm(\mathcal{B}), \ell/2)$ credits.
\end{proof}

We show how to pay for edges of $E_i$ in $A_6$ and maintain diameter-credit invariants of long clusters. We use $cr(\mathcal{X})$ to denote the total credit of $\epsilon$-clusters of a set of $\epsilon$-clusters $\mathcal{X}$.

\begin{claim} \label{clm:paid-dm-crd-long-clustesr} Let $\mathcal{B}$ be a long cluster. If $c = \Omega(\frac{g}{\epsilon^3})$ and $g \geq 8$, we can maintain diameter-credit invariants of $\mathcal{B}$ and pay for edges in $A_6$ incident to $\epsilon$-clusters in $\mathcal{B}$ using credits of $\epsilon$-clusters in $\mathcal{B}$. 
\end{claim}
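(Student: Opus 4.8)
The plan is to exploit the defining feature of a long cluster: $\mathcal{B}$ is a subpath of an HD-path, hence (as HD-paths are components of $\mf$) a path of $\epsilon$-clusters joined only by $\mst$ edges, and by construction it has $|\mathcal{C}(X)| \ge \frac{2g}{\epsilon}+1$ $\epsilon$-clusters while $\edm(\mathcal{B}) \le 4\ell$. So it carries far more credit than either invariant needs. I would first dispose of DC2 and pin down the DC1 target: by Observation~\ref{obs:effdiam-vs-diam}, $\dm(\mathcal{B}) \le 2\edm(\mathcal{B}) \le 8\ell < g\ell$, so DC2 is immediate and DC1 asks for only $c\cdot\max\{\dm(\mathcal{B}),\ell/2\} \le 8c\ell$ credits of $\mathcal{B}$.

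Next I would fix the credit split. Since $\mathcal{B}$ contains only $\mst$ edges between its $\epsilon$-clusters, its cluster-diameter path $\mathcal{D}$ carries no edge of $E_i$, which puts us in the easy case for DC1. I would pick any $\epsilon$-cluster of $\mathcal{B}$ as the center $X$, set $\mathcal{S}(X) = \trunc{\mathcal{D}\cap\mathcal{C}(X)}{\rfrac{2g}{\epsilon}}$ and $\mathcal{R}(X) = \trunc{\mathcal{C}(X)\setminus\mathcal{S}(X)}{\rfrac{2g}{\epsilon}}$ (nonempty because $|\mathcal{C}(X)| > \frac{2g}{\epsilon} \ge |\mathcal{S}(X)|$), let the remaining $\epsilon$-clusters be releasing, and have $X$ collect the credit of $\mathcal{R}(X)$ and of the $\mst$ edges off $\mathcal{D}$. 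Each $\epsilon$-cluster has at least $\frac{c\epsilon\ell}{2}$ credits by DC1 at level $i-1$, so: if $|\mathcal{D}\cap\mathcal{C}(X)| \ge \frac{2g}{\epsilon}$, then $\mathcal{S}(X)$ retains at least $gc\ell \ge 8c\ell \ge c\max\{\dm(\mathcal{B}),\ell/2\}$ credits (here I use $g \ge 8$ and $\dm(\mathcal{B}) \le 8\ell$); otherwise $\mathcal{S}(X) = \mathcal{D}\cap\mathcal{C}(X)$, the credits of $\mathcal{S}(X)$ and of the $\mst$ edges of $\mathcal{D}$ sum to at least $c\cdot w(\mathcal{D}) = c\dm(\mathcal{B})$, and $O(1/\epsilon)$ further $\epsilon$-clusters, if required, lift this to $c\ell/2$. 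Either way DC1 holds, and because the retained $\epsilon$-clusters all lie on $\mathcal{D}$, this matches how DC1 is verified in general.

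Then I would pay for $A_6$, exactly as in Claims~\ref{clm:paid-credit-P2},~\ref{clm:paid-credit-P3a1} and~\ref{clm:paid-credit-P3a2-P3b}. Every $\epsilon$-cluster of $\mathcal{B}$ is low-degree (high-degree ones were marked in Phase 1), hence incident to at most $\frac{20}{\epsilon}$ edges of $E_i$, each of weight at most $\ell$; so a releasing $\epsilon$-cluster covers its own incident $A_6$-edges out of its $\frac{c\epsilon\ell}{2}$ credits once $c = \Omega(1/\epsilon^2)$. The non-releasing $\epsilon$-clusters all lie in $\mathcal{S}(X) \cup \mathcal{R}(X)$, of size $O(g/\epsilon)$, hence are incident to $O(g/\epsilon^2)$ edges of $A_6$ of total weight $O(g\ell/\epsilon^2)$; the center $X$, having collected at least $\frac{c\epsilon\ell}{2}$ credits from $\mathcal{R}(X)$, buys all of them once $c = \Omega(g/\epsilon^3)$. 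Hence all edges of $A_6$ incident to $\epsilon$-clusters of $\mathcal{B}$ are paid for using only credit of $\epsilon$-clusters of $\mathcal{B}$, while DC1 and DC2 hold for $\mathcal{B}$.

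The hard part will be the simultaneous bookkeeping: the $\epsilon$-clusters reserved on $\mathcal{D}$ for DC1 must be disjoint from those spent on $A_6$, both parts must be large enough at once, and in the tight case $|\mathcal{R}(X)| = 1$ a single reserved $\epsilon$-cluster must still give the center enough credit to cover the $O(g/\epsilon^2)$ edges incident to the $O(g/\epsilon)$ non-releasing $\epsilon$-clusters. That last point is precisely what forces $c = \Omega(g/\epsilon^3)$, and $g \ge 8$ is precisely what lets the $gc\ell$ credits of $\mathcal{S}(X)$ absorb the $\dm(\mathcal{B}) \le 8\ell$ bound needed for DC1; everything else is the credit arithmetic already rehearsed in the earlier phases.
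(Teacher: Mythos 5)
Your proposal is correct and follows the same essential approach as the paper: DC2 from $\dm(\mathcal{B}) \le 2\edm(\mathcal{B}) \le 8\ell$, a reserved set $\mathcal{S}$ of $\frac{2g}{\epsilon}$ $\epsilon$-clusters whose $\ge gc\ell$ credits absorb $c\max\{\dm(\mathcal{B}),\ell/2\}$, a collected set $\mathcal{R}$ covering the $O(g/\epsilon^2)$ edges of $A_6$ incident to $\mathcal{S}\cup\mathcal{R}$, and releasing $\epsilon$-clusters paying for their own incident edges. The one place you complicate matters unnecessarily is in insisting $\mathcal{S}(X) = \trunc{\mathcal{D}\cap\mathcal{C}(X)}{\rfrac{2g}{\epsilon}}$, which forces a case split (and the hand-wavy ``$O(1/\epsilon)$ further $\epsilon$-clusters, if required'') when $\mathcal{D}$ has fewer than $\frac{2g}{\epsilon}$ $\epsilon$-clusters; the paper simply lets $\mathcal{S}$ be \emph{any} $\frac{2g}{\epsilon}$ $\epsilon$-clusters of $\mathcal{B}$, for which the bound $cr(\mathcal{S}) \ge \frac{2g}{\epsilon}\cdot\frac{c\epsilon\ell}{2} = gc\ell \ge c\dm(\mathcal{B}) \ge c\max\{\dm(\mathcal{B}),\ell/2\}$ holds unconditionally (using $g\ell \ge \dm(\mathcal{B})$ from DC2), so no appeal to $\mathcal{D}$, $\mst$-edge credit, or extra clusters is needed. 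Tracking $\mathcal{D}$ is essential in earlier phases where an $E_i$ edge can sit on the diameter path and must be paid for with diameter-reduction surplus, but a long cluster is an $\mst$-connected path with no $E_i$ edge on $\mathcal{D}$, so the raw count of $\epsilon$-clusters already suffices; recognizing this would have let you drop the case analysis.
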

\begin{proof} By construction, $\mathcal{B}$ has effective diameter at most $4\ell$. By Observation~\ref{obs:effdiam-vs-diam}, $\mathcal{B}$ has diameter at most $8\ell$. Thus, $\mathcal{B}$ satisfies invariant DC2 if $g \geq 8$. 
Since $\mathcal{B}$ is a long cluster, it has at least $\frac{2g}{\epsilon} + 1$ $\epsilon$-clusters. Let  $\mathcal{S}$  be a set of $\frac{2g}{\epsilon}$ $\epsilon$-clusters in $\mathcal{B}$ and $X$ be an $\epsilon$-cluster in $\mathcal{B}\setminus \mathcal{S}$. Let $\mathcal{R} = \{X\} $. We save credits of $\mathcal{S}$ for maintaining invariant DC1 of $\mathcal{B}$ and use credits of $\mathcal{R}$ to pay for edges of $A_6$ incident to $\epsilon$-clusters in $\mathcal{S} \cup \mathcal{R}$. Since $|\mathcal{S} \cup \mathcal{R}| = \frac{2g}{\epsilon} + 1$ and $\epsilon$-clusters in $\mathcal{S} \cup \mathcal{R}$ are low-degree, there are at most $O(\frac{g}{\epsilon^2})$ edges of $A_6$ incident to $\epsilon$-clusters in $\mathcal{S} \cup \mathcal{R}$. By invariant DC1 for level ${i-1}$, $\mathcal{R}$ has at least $\frac{c\epsilon \ell}{2}$ credits which is sufficient to pay for $O(\frac{g}{\epsilon^2})$ edges of $A_6$ when $c = \Omega(\frac{g}{\epsilon^3})$.  We let other $\epsilon$-clusters in $\mathcal{B}\setminus (\mathcal{S}\cup \mathcal{R})$ pay for their incident edges of $A_6$ using their credits. This is sufficient when $c = \Omega(\frac{1}{\epsilon^2})$ since each $\epsilon$-cluster is incident to at most $\frac{20}{\epsilon}$ edges and has at least $\frac{c\epsilon \ell}{2}$ credits. 

 We use credits of $\mathcal{S}$ to maintain invariant DC1. Since $|\mathcal{S}| = \frac{2g}{\epsilon}$ and each $\epsilon$-clusters has at least $\frac{c\epsilon\ell}{2}$ credits, $cr(\mathcal{S}) \geq g\ell$.  Since $\dm(\mathcal{B}) \leq g\ell$ by DC2, $cr(\mathcal{S}) \geq c\dm(\mathcal{B})$. Thus, $cr(\mathcal{S}) \geq c\dot \max(\dm(\mathcal{B}), \ell/2)$; invariant DC1 is satisfied. 
\end{proof}

Let $\mathcal{C}(X)$ be a cluster in $\mathcal{C}$ before Phase 4. Let $\mathcal{C}'(X), \mathcal{C}''(X)$ and $\mathcal{C}'''(X)$ be the corresponding clusters that are augmented from $\mathcal{C}(X)$ in Phase 4a, Step 1 of Phase 4b and Step 2 of Phase 4b, respectively. It could be that any two of three clusters are the same. 

By construction in Phase 4a, LD-components are attached to $\mathcal{C}(X)$ via $\mst$ edges. Recall each LD-component has effective diameter at most $4\ell$ and hence, diameter at most $8\ell$ by Observation~\ref{obs:effdiam-vs-diam}. Thus, $\dm(\mathcal{C}'(X)) - \dm(\mathcal{C}(X)) \leq 16\ell+2$.  By construction in Step 1 of Phase 4b, subpaths of effective diameter at most $4b\ell$ are attached to $\mathcal{C}'(X)$ via $\mst$ edges. Thus, $\dm(\mathcal{C}''(X)) -\dm(\mathcal{C}'(X)) \leq 16\ell + 2$. We have:
\begin{claim} \label{clm:diam-cluster-P4a-and-P4b1}
$\dm(\mathcal{C}''(X)) - \dm(\mathcal{C}(X)) \leq 32\ell + 4$.
\end{claim}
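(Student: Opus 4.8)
The plan is to telescope the two single-phase diameter increments that were recorded in the paragraph immediately preceding the statement: $\dm(\mathcal{C}'(X)) - \dm(\mathcal{C}(X)) \le 16\ell + 2$ for the augmentation in Phase 4a, and $\dm(\mathcal{C}''(X)) - \dm(\mathcal{C}'(X)) \le 16\ell + 2$ for Step 1 of Phase 4b. Adding the two immediately gives $\dm(\mathcal{C}''(X)) - \dm(\mathcal{C}(X)) \le 32\ell + 4$, which is the claim. So the real work is to justify each of the two increments, and both follow from the same pendant-attachment estimate.

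To prove each increment I would isolate the following estimate. Fix a cluster $\mathcal{C}$ and let $\widehat{\mathcal{C}}$ be obtained from $\mathcal{C}$ by attaching several pieces, where each attached piece $T$ has $\dm(T) \le 8\ell$ and is joined to $\mathcal{C}$ by a single $\mst$ edge (of unit weight, by the reduction). Let $D$ be a shortest path in $\widehat{\mathcal{C}}$ realizing $\dm(\widehat{\mathcal{C}})$; being shortest it is simple. Since each attached piece $T$ is separated from the rest of $\widehat{\mathcal{C}}$ by a single cut edge, $D$ can use vertices of $T$ only if an endpoint of $D$ lies in $T$; hence $D$ meets at most two attached pieces, one at each end. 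Decomposing $D$ as an excursion into one piece, a middle segment inside $\mathcal{C}$, and an excursion into another piece, the middle segment has length $\le \dm(\mathcal{C})$ and each end excursion has length at most $\dm(T) + 1 \le 8\ell + 1$ (the $+1$ paying for the attaching $\mst$ edge). This yields $\dm(\widehat{\mathcal{C}}) \le \dm(\mathcal{C}) + 2(8\ell+1) = \dm(\mathcal{C}) + 16\ell + 2$. Applying this with $\mathcal{C} = \mathcal{C}(X)$ and the pieces being LD-components (which have effective diameter $< 4\ell$ by their defining partition, hence diameter $< 8\ell$ by Observation~\ref{obs:effdiam-vs-diam}) gives the Phase 4a increment; applying it with $\mathcal{C} = \mathcal{C}'(X)$ and the pieces being the HD-subpaths produced by greedily cutting each HD-path into subpaths of effective diameter in $[2\ell, 4\ell]$ (hence diameter $\le 8\ell$) gives the Step 1 increment.

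The one subtlety I would highlight — and the reason the bound is naturally phrased as a two-step telescoping rather than a single estimate — is that an HD-subpath attached in Step 1 of Phase 4b may be joined to a vertex that originated inside an LD-component attached in Phase 4a, so the region hanging off the original core $\mathcal{C}(X)$ in $\mathcal{C}''(X)$ is a shallow tree of attached pieces, not a star, and a single shortest path in $\mathcal{C}''(X)$ could traverse a Phase 4b piece \emph{and} the Phase 4a piece it hangs from. Telescoping sidesteps this entirely: when bounding $\dm(\mathcal{C}'(X)) - \dm(\mathcal{C}(X))$ the attached pieces are pendant off $\mathcal{C}(X)$, and when bounding $\dm(\mathcal{C}''(X)) - \dm(\mathcal{C}'(X))$ the newly attached pieces are pendant off $\mathcal{C}'(X)$ — in each step the pieces being attached are genuinely pendant via single cut edges, so the estimate of the previous paragraph applies verbatim in each step. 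Summing the two increments then completes the proof. I do not expect a genuine obstacle here; the only care needed is the bookkeeping just described together with citing Observation~\ref{obs:effdiam-vs-diam} to convert the effective-diameter bounds on the attached pieces into the diameter bound $8\ell$.
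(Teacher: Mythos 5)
Correct, and essentially the paper's own approach: the paragraph immediately preceding the claim records exactly the two increments $\dm(\mathcal{C}'(X)) - \dm(\mathcal{C}(X)) \le 16\ell + 2$ and $\dm(\mathcal{C}''(X)) - \dm(\mathcal{C}'(X)) \le 16\ell + 2$, each derived from the attached pieces having effective diameter at most $4\ell$ (hence diameter at most $8\ell$ by Observation~\ref{obs:effdiam-vs-diam}) and hanging pendant via single unit-weight $\mst$ edges, and the claim is their telescoped sum. Your write-up merely supplies the justification the paper leaves implicit, namely that a shortest path witnessing the new diameter can enter at most two pendant pieces (one per endpoint) because each is cut off by a single $\mst$ edge, and so costs at most $\dm(\mathcal{C}) + 2(8\ell + 1)$.
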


By construction in Step 2 of Phase 4b, subpaths of HD-paths are attached to $\mathcal{C}''(X)$ via edges of $E_i$.  Since attached subpaths have effective diameter at most $4\ell$, by Observation~\ref{obs:effdiam-vs-diam}, we have:
\begin{claim}\label{clm:diam-P4b2}
$\dm(\mathcal{C}'''(X)) - \dm(\mathcal{C}''(X)) \leq 18\ell$
\end{claim}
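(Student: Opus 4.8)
The plan is to exhibit $\mathcal{C}'''(X)$ explicitly as the cluster $\mathcal{C}''(X)$ with a family of \emph{pendant} subgraphs attached, each joined to $\mathcal{C}''(X)$ by a single bridge edge, and then bound how much such pendants can stretch the diameter.

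First I would pin down the structure produced by Step 2 of Phase 4b. Each short subpath $\mathcal{P}'$ that is merged there is attached to $\mathcal{C}''(X)$ by exactly one edge $e\in E_i$, together with the $\mst$-edges of $\mathcal{P}'$ internal to it; its other unpaid incident edges are placed in $A_8$ and are \emph{not} added. By Observation~\ref{obs:unpaid-edges-P4b2}, this edge $e$ is incident to an $\epsilon$-cluster merged in Phase 1, and since every $\epsilon$-cluster of a short subpath of an HD-path is unmarked after Phase 1, the Phase-1 endpoint of $e$ lies in $\mathcal{C}''(X)$ itself, never in another freshly merged short subpath. Hence $\mathcal{C}'''(X)=\mathcal{C}''(X)\cup\bigcup_{k=1}^{m}(\mathcal{P}'_k\cup\{e_k\})$, where the $\mathcal{P}'_k$ are vertex-disjoint, and each $\mathcal{P}'_k$ is joined to $\mathcal{C}''(X)$ by the single bridge edge $e_k=u_kv_k$ with $v_k$ a vertex of $\mathcal{C}''(X)$.

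Next I would collect the two numerical inputs: every subpath created in Step 1 of Phase 4b has effective diameter at most $4\ell$, so by Observation~\ref{obs:effdiam-vs-diam} each pendant satisfies $\dm(\mathcal{P}'_k)\le 8\ell$; and every edge of $E_i$ has weight less than $\ell$, so $w(e_k)<\ell$. Then I would bound $\dm(\mathcal{C}'''(X))$ by case analysis on the endpoints $x,y$ of a diameter-witnessing shortest path $P$ of $\mathcal{C}'''(X)$. Since all edge weights are positive, $P$ is simple, so it traverses each bridge edge $e_k$ at most once; consequently it uses at most two pendants (the one containing $x$ and the one containing $y$), never re-enters a pendant, and between its two bridge traversals it stays inside $\mathcal{C}''(X)$. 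In the extremal case $x\in\mathcal{P}'_j$, $y\in\mathcal{P}'_k$, the path decomposes as $x\rightsquigarrow u_j$ inside $\mathcal{P}'_j$, then $e_j$, then $v_j\rightsquigarrow v_k$ inside $\mathcal{C}''(X)$, then $e_k$, then $u_k\rightsquigarrow y$ inside $\mathcal{P}'_k$, so its length is at most $\dm(\mathcal{P}'_j)+w(e_j)+\dm(\mathcal{C}''(X))+w(e_k)+\dm(\mathcal{P}'_k)\le 8\ell+\ell+\dm(\mathcal{C}''(X))+\ell+8\ell=\dm(\mathcal{C}''(X))+18\ell$. All remaining cases — both endpoints in $\mathcal{C}''(X)$, exactly one endpoint in a pendant, or both in a common pendant — yield strictly smaller increments, so $\dm(\mathcal{C}'''(X))-\dm(\mathcal{C}''(X))\le 18\ell$.

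The only delicate point is the structural step: verifying that nothing beyond the single bridge edges is added in Step 2 and, crucially, that the far endpoint of each bridge edge sits in $\mathcal{C}''(X)$ proper rather than in a second newly attached short subpath. This is exactly what Observation~\ref{obs:unpaid-edges-P4b2} provides, and it is what prevents a shortest path from chaining through three or more pendants and inflating the diameter. Once the pendant picture is secured, the remaining estimates are routine applications of Observation~\ref{obs:effdiam-vs-diam} and the weight bound on $E_i$.
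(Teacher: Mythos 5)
Your proof is correct and takes the same approach as the paper's (much terser) justification: each attached short subpath has effective diameter at most $4\ell$ hence diameter at most $8\ell$ by Observation~\ref{obs:effdiam-vs-diam}, is attached by a single $E_i$-edge of weight below $\ell$, and a diameter-witnessing path can use at most two such pendants, giving the $18\ell$ increment. Your explicit verification, via Observation~\ref{obs:unpaid-edges-P4b2}, that pendants cannot chain off one another is a useful detail that the paper leaves implicit.
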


We are now ready to show invariant DC2 for $\mathcal{C}'''(X)$.
\begin{claim}\label{clm:inv-DC2-CX}
$\dm(\mathcal{C}'''(X)) \leq g\ell$ when $g \geq 70$.
\end{claim}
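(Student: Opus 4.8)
The plan is to obtain Claim~\ref{clm:inv-DC2-CX} purely by composing the diameter bounds already proved for the individual phases, treating Phase 4 as a bounded perturbation of the cluster built in Phases 1--3. First I would note the scope of the claim: $\mathcal{C}'''(X)$ is, by definition, the cluster obtained by augmenting a cluster $\mathcal{C}(X)$ that already exists after Phase 3. Clusters that are created from scratch during Phase 4b --- the long and short clusters --- are not of this form, and they have diameter at most $8\ell$ by Claims~\ref{clm:dm-crd-short-affix-clustesr} and \ref{clm:paid-dm-crd-long-clustesr}, hence satisfy DC2 for $g\ge 8$; so only the augmented clusters $\mathcal{C}'''(X)$ need to be treated here.

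Next I would bound $\dm(\mathcal{C}(X))$. Every cluster present after Phase 3 was created in Phase 1, Phase 2, Case 1 of Phase 3a, or Case 2 of Phase 3a / Phase 3b, so by Claims~\ref{clm:diam-clsuter-P1}, \ref{clm:diam-clsuter-P2}, \ref{clm:diam-cluster-P3a} and \ref{clm:diam-cluster-P3b} its diameter is at most $(a + b g\epsilon)\ell$ with $a\le 12$ and $b\le 5$; in particular $\dm(\mathcal{C}(X)) \le (12 + 5g\epsilon)\ell$. Then I would add the two increments contributed by Phase 4: Claim~\ref{clm:diam-cluster-P4a-and-P4b1} gives $\dm(\mathcal{C}''(X)) - \dm(\mathcal{C}(X)) \le 32\ell + 4$ (Phase 4a together with Step 1 of Phase 4b) and Claim~\ref{clm:diam-P4b2} gives $\dm(\mathcal{C}'''(X)) - \dm(\mathcal{C}''(X)) \le 18\ell$ (Step 2 of Phase 4b). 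Summing these,
\[
\dm(\mathcal{C}'''(X)) \;\le\; (12 + 5g\epsilon)\ell + (32\ell + 4) + 18\ell \;=\; (62 + 5g\epsilon)\ell + 4 .
\]

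Finally I would dispose of the two lower-order terms. Since we are in the inductive step, $i\ge 1$, so $\ell = \ell_i = 2^{j+1}/\epsilon^{i} \ge 2/\epsilon \ge 4$ for $\epsilon$ small, whence $4 \le \ell$; and since $\epsilon$ is taken sufficiently smaller than $1/g$ (the existing standing assumption $\epsilon < 1/(6g)$ already forces $5g\epsilon < 1$), the displayed bound is at most $(62 + 1)\ell + \ell = 64\ell$. As $g \ge 70 > 64$, this gives $\dm(\mathcal{C}'''(X)) \le 64\ell < g\ell$, which is the claim. I do not expect a real obstacle here: every ingredient is a diameter bound proved earlier, and the only care needed is to check that the additive constant $4$ and the $g\epsilon\ell$ terms are absorbed once $\ell$ is large (guaranteed by $i\ge 1$) and $\epsilon$ is small relative to $1/g$; the comfortable slack between $64$ and $g\ge 70$ is exactly what makes the crude choice $g\approx 100$ (or even $g = 70$) sufficient.
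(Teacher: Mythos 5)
Your proposal is correct and takes essentially the same route as the paper: compose the Phase 1--3 cluster-diameter bounds (Claims~\ref{clm:diam-clsuter-P1}, \ref{clm:diam-clsuter-P2}, \ref{clm:diam-cluster-P3a}, \ref{clm:diam-cluster-P3b}) with the Phase 4 increments (Claims~\ref{clm:diam-cluster-P4a-and-P4b1}, \ref{clm:diam-P4b2}) and absorb the lower-order terms using $\epsilon \ll 1/g$. The only cosmetic differences are that the paper takes the true maximum $(12+4g\epsilon)\ell$ of the Phase 1--3 bounds (you combine $a\le 12$ and $b\le 5$ separately, getting the slightly looser $(12+5g\epsilon)\ell$) and absorbs the additive $+4$ using only $\ell\ge 1$ to land on $66\ell+4\le 70\ell$, rather than invoking $\ell\ge 2/\epsilon\ge 4$ as you do; your opening remark that freshly formed Phase 4 long/short clusters already satisfy DC2 with diameter $\le 8\ell$ is a harmless scoping clarification not spelled out in the paper's proof.
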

\begin{proof}
From Claim~\ref{clm:diam-clsuter-P1}, Claim~\ref{clm:diam-clsuter-P2}, Claim~\ref{clm:diam-cluster-P3a} and Claim~\ref{clm:diam-cluster-P3b}, $\mathcal{C}(X)$ has diameter at most:
\begin{equation*}
\max((4+5g\epsilon)\ell,(4+2g\epsilon)\ell, (12+4g\epsilon)\ell, (9+4g\epsilon)\ell) = (12 + 4g\epsilon)\ell 
\end{equation*} 
which is at most $16\ell$ when $\epsilon$ is sufficiently small ($\epsilon < 1/g$). By Claim~\ref{clm:diam-cluster-P4a-and-P4b1} and Claim~\ref{clm:diam-P4b2}, $\mathcal{C}'''(X)$ has diameter at most:
\begin{equation*}
16\ell + 32\ell + 4 + 18\ell = 66\ell + 4 \leq 70\ell
\end{equation*}
since $\ell \geq 1$.
\end{proof}

Recall we show how to pay for edges in $A_1,A_2,A_3,A_4,A_6$ before. It remains to show how to pay for edges in $A_5 \cup A_7 \cup A_8$. We first consider edges in $A_5 \cup A_7$. Recall $\mathcal{S}(X)= \trunc{(\mathcal{D} \cap \mathcal{C}(X))}{\rfrac{2g}{\epsilon}}$ where $\mathcal{D}$ is the cluster-diameter path. We call $\epsilon$-clusters in $\mathcal{C}''(X)\setminus \mathcal{C}(X)$ \emph{augmenting} $\epsilon$-clusters. Let $\mathcal{S}''(X) = \trunc{(\mathcal{D} \cap (\mathcal{C}''(X)\setminus \mathcal{C}(X))}{\rfrac{2g}{\epsilon}}$ be the set of augmenting $\epsilon$-clusters that are in the diameter path $\mathcal{D}$. We save credits of $\epsilon$-clusters in $\mathcal{S}''(X)$ for maintaining DC1 and let other augmenting $\epsilon$-clusters release their credits. 

\begin{claim} \label{clm:paid-credit-P4a-P4b1}
If $c = \Omega(\frac{g}{\epsilon^2})$, we can buy edges in $A_5 \cup A_7$ using $\frac{c\epsilon \ell}{3}$ credits of the cluster centers and credits of releasing augmenting $\epsilon$-clusters.
\end{claim}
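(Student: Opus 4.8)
The plan is to sort the edges of $A_5 \cup A_7$ according to which $\epsilon$-cluster is asked to pay for them, following the template of Claims~\ref{clm:paid-credit-P1}, \ref{clm:paid-credit-P2}, \ref{clm:paid-credit-P3a1} and \ref{clm:paid-credit-P3a2-P3b}. The structural facts I rely on are: (i) every $\epsilon$-cluster merged in Phase~4a or in Step~1 of Phase~4b was left unmarked after Phase~1, hence is low-degree and incident to fewer than $\frac{20}{\epsilon}$ edges of $E_i$, each of weight less than $\ell$; (ii) by invariant DC1 at level $i-1$ each such $\epsilon$-cluster carries at least $\frac{c\epsilon\ell}{2}$ credits, while $|\mathcal{S}''(X)| \le \frac{2g}{\epsilon}$ by \eqref{eq:size-trunc} and a short affix contains at most $\frac{2g}{\epsilon}$ $\epsilon$-clusters. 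I split $A_5 \cup A_7$ into three groups: edges incident to a releasing augmenting $\epsilon$-cluster; edges incident to an $\epsilon$-cluster of some $\mathcal{S}''(X)$; and edges of $A_7$ incident to a short affix of an HD-path.

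For the first group I would have each releasing augmenting $\epsilon$-cluster — an $\epsilon$-cluster of an LD-component attached in Phase~4a, or of an augmenting subpath of Step~1, that is not in $\mathcal{S}''(X)$ — pay its own incident edges from its own credit. By (i) and (ii) this costs less than $\frac{20}{\epsilon}\ell$ against a budget of at least $\frac{c\epsilon\ell}{2}$, hence is affordable once $c = \Omega(\frac1{\epsilon^2})$, which $c = \Omega(\frac g{\epsilon^2})$ implies (exactly as in the proof of Claim~\ref{clm:paid-credit-P2}). For the second group, the remaining unpaid edges incident to $\mathcal{S}''(X)$ number at most $\frac{2g}{\epsilon} \cdot \frac{20}{\epsilon} = O(\frac g{\epsilon^2})$, and I would charge them to the center $X$ using a portion of its $\frac{c\epsilon\ell}{3}$ budget; that this residual credit is actually present at the center follows, as in Claims~\ref{clm:P3a1-credit-center} and \ref{clm:P3a2-P3b-credit-center}, once we have fixed which of the center's credit is reserved for DC1 and which was already spent in the earlier phases.

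The step I expect to be the main obstacle is the third group, the edges of $A_7$ incident to short affices of HD-paths. A short affix becomes a fresh cluster whose effective diameter can be below $2\ell$, so (cf.\ Claim~\ref{clm:dm-crd-short-affix-clustesr}) all of its $\epsilon$-clusters' credit is consumed maintaining DC1 and none is left to buy $E_i$-edges; hence these edges must be charged elsewhere. I would keep the charge local to the HD-path: a single HD-path $\mathcal{P}$ has at most two short affices, each with at most $\frac{2g}{\epsilon}$ low-degree $\epsilon$-clusters, so the unpaid edges of $A_7$ incident to them number $O(\frac g{\epsilon^2})$ and cost $O(\frac g{\epsilon^2})\ell$; when the augmenting subpath $\mathcal{P}'$ of $\mathcal{P}$ is a proper subpath it has effective diameter at least $2\ell$, so its releasing $\epsilon$-clusters (those of $\mathcal{P}'$ not in $\mathcal{S}''(X)$) still hold total credit $\Omega(c\ell)$ after covering their own incident edges, which absorbs this cost when $c = \Omega(\frac g{\epsilon^2})$. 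In the residual case where $\mathcal{P}'$ is itself an affix, I route these $O(\frac g{\epsilon^2})$ edges onto the center $X$ within the same $\frac{c\epsilon\ell}{3}$ budget as for the second group. The one new bookkeeping point, beyond what the earlier ``paid-credit'' claims require, is exactly this locality: since many HD-paths may attach to a single existing cluster, the short-affix charges must be paid per-HD-path (by $\mathcal{P}'$) rather than per-center, so that no single center is ever drawn beyond $\frac{c\epsilon\ell}{3}$. Verifying this, together with the residual-credit bookkeeping for the center, is the crux; assembling the three groups then gives the claim.
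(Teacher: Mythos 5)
Your decomposition of $A_5\cup A_7$ and the first two payment groups follow the paper's proof, but the third group has a genuine gap in how the cases are split. You condition on whether the augmenting subpath $\mathcal{P}'$ is a proper subpath of its HD-path or is itself an affix, and in the ``proper subpath'' case you assert that the releasing $\epsilon$-clusters of $\mathcal{P}'$ hold total credit $\Omega(c\ell)$. This is not guaranteed: $\mathcal{P}'$ always has effective diameter at least $2\ell$ (by the greedy breakup), but that bound gives only roughly $2/(g\epsilon)$ $\epsilon$-clusters, while $\mathcal{S}''(X)$ may absorb up to $2g/\epsilon$ of them. So it is entirely possible that $\mathcal{P}'$ is a proper subpath and yet $\mathcal{P}'\setminus\mathcal{S}''(X)$ is empty or nearly empty, leaving no releasing credit to charge. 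Whether $\mathcal{P}'$ is an affix is orthogonal to whether it has releasing credit.

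The paper's case split is on whether $\mathcal{P}'\cap\mathcal{S}''(X)=\emptyset$. When it is empty, every $\epsilon$-cluster of $\mathcal{P}'$ releases, half their credit covers their own incident $E_i$-edges, and the remaining half gives at least $c\ell$ to pay the $O(g/\epsilon^2)$ edges of $A_7$ at the short affices of that HD-path. When it is nonempty, the charges are routed to the center $X$ of $\mathcal{C}''(X)$ -- and this is exactly where the locality worry you raise is resolved: the cluster-diameter path $\mathcal{D}$ can pass through at most two augmenting subpaths, so at most two HD-paths can have $\mathcal{P}'\cap\mathcal{S}''(X)\neq\emptyset$, and the center's total bill stays at $O(g/\epsilon^2)$ edges. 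You flag the per-center-versus-per-HD-path bookkeeping as the crux but do not supply this ``at most two augmenting subpaths meet $\mathcal{D}$'' observation, which is the fact that makes the center's budget hold. Without the corrected case split and this observation, the third-group argument does not close.

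Two minor inaccuracies worth noting: a ``short'' subpath is one with at most $2g/\epsilon$ $\epsilon$-clusters, not one with effective diameter below $2\ell$ (every subpath, including affices, has effective diameter at least $2\ell$); and the reason short affices cannot self-pay is that Claim~\ref{clm:dm-crd-short-affix-clustesr} spends all of their $\epsilon$-cluster and $\mst$-edge credit on DC1, not a shortfall in effective diameter.
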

\begin{proof}
We use $\frac{c\ell}{6}$ credits of $X$ to pay for edges of $A_5\cup A_7$ incident to $\epsilon$-clusters in $\mathcal{S}''(X)$.  Recall each $\epsilon$-cluster is incident to at most $\frac{20}{\epsilon}$ edges of $E_i$ since it is low-degree. Thus, $\epsilon$-clusters in $\mathcal{S}''(X)$ are incident to  most $O(\frac{g}{\epsilon^2})$ edges of $A_5\cup A_7$. Hence, $\frac{c\ell}{6}$  credits suffice when $c = \Omega(\frac{g}{\epsilon^2})$.  We let releasing augmenting $\epsilon$-clusters of LD-components to pay for their incident edges  of $A_5$. This is sufficient when $c = \Omega(\frac{1}{\epsilon^2})$. Thus, all edges of $A_5$ are paid. We now turn to edges of $A_7$.

Let $\mathcal{P}_1, \mathcal{P}_2, \mathcal{P}_3$ be three segments of a HD-path $\mathcal{P}$ in Step 1 where $\mathcal{P}_1, \mathcal{P}_2$ are affixes of $\mathcal{P}$ and $\mathcal{P}_3$ is the augmenting subpath of $\mathcal{P}$. It could be that $\mathcal{P}_1 = \mathcal{P}_3$ or $\mathcal{P}_2 = \mathcal{P}_3$.
 Since edges of $E_i$ incident to long clusters are paid in Claim~\ref{clm:paid-dm-crd-long-clustesr}, $\epsilon$-clusters of $\mathcal{P}_i$, $1 \leq  i \leq 2$, are incident to unpaid edges of $E_i$ only when $\mathcal{P}_i$ is a short cluster and thus, incident to at most $O(\frac{g}{\epsilon^2})$ edges of $A_7$. Note that in Claim~\ref{clm:dm-crd-short-affix-clustesr}, we use all credits of $\epsilon$-clusters and $\mst$ edges of $\mathcal{P}_i$ to maintain diameter-credit invariants and we need to pay for edges of $A_7$ incident to $\mathcal{P}_i$. We consider two cases:

\begin{enumerate}
\item If $\mathcal{P}_3 \cap \mathcal{S}''(X) = \emptyset$, then $\epsilon$-clusters in $\mathcal{P}_3$ are releasing. Recall $\mathcal{P}_3$ has effective diameter at least $2\ell$. We let each $\epsilon$-cluster in $\mathcal{P}_3$ pay for its incident edges of $A_7$ using half of its credits, which is at least $\frac{c\epsilon \ell}{4}$ by invariant DC1 for level ${i-1}$. This amount of credits is enough when $c = \Omega(\frac{1}{\epsilon^2})$. The total remaining credit from $\epsilon$-clusters of $\mathcal{P}_3$ is at least $c\ell$, that is sufficient to pay for $O(\frac{g}{\epsilon^2})$ edges of $A_7$ incident to $\epsilon$-clusters of $\mathcal{P}_1 \cup \mathcal{P}_2$ when $c = \Omega(\frac{g}{\epsilon^2})$.

\item  If $\mathcal{P}_3 \cap \mathcal{S}''(X) \not= \emptyset$, we use $c\ell/6$ credits of the center $X$ of $\mathcal{C}''(X)$ to pay for edges of $A_7$ incident to $\epsilon$-clusters in $\mathcal{X} = (\mathcal{P}_1 \cap \mathcal{S}''(X))\cup \mathcal{P}_2 \cup \mathcal{P}_3$. Recall $\mathcal{S}''(X)$ has at most $\frac{2g}{\epsilon}$ $\epsilon$-clusters, $\mathcal{X}$ has at most $\frac{6g}{\epsilon}$  $\epsilon$-clusters. Thus, $\epsilon$-clusters in $\mathcal{X}$ are incident to at most $\frac{120g}{\epsilon^2}$ edges in $A_7$. Since there are at most two augmenting subpaths that contain $\epsilon$-clusters of the cluster diameter path $\mathcal{D}$, $X$ only need to pay for   at most $\frac{240g}{\epsilon^2} = O(\frac{g}{\epsilon^2})$ edges. Thus, $\frac{c\ell}{6}$ credits are sufficient if $c = \Omega(\frac{g}{\epsilon^2})$.  Other $\epsilon$-clusters of $\mathcal{P}_3\setminus \mathcal{S}''(X)$ are releasing and we can use their released credits to pay for their incident edges of $A_7$.
\end{enumerate}
\end{proof}

We now show how to pay for edges of $A_8$ which consists of edges of $E_i$ incident to $\epsilon$-clusters in Step 2 of Phase 4b. By Observation~\ref{obs:unpaid-edges-P4b2}, $\mathcal{C}(X)$ is formed in Phase 1. Let $\mathcal{S}'''(X)$ be augmenting $\epsilon$-clusters in $\mathcal{D}$ of $\mathcal{C}'''(X)$ that are not in $\mathcal{S}(X) \cup \mathcal{S}''(X)$. We save credit of   $\mathcal{S}'''(X)$ for maintaining DC1 and let other augmenting $\epsilon$-clusters release their credits.

\begin{claim} \label{clm:paid-credit-P4b2} If $c = \Omega(\frac{g}{\epsilon^2})$, we can pay for edges of $A_8$ incident to $\epsilon$-clusters in $\mathcal{C}'''(X)$ using credits of releasing $\epsilon$-clusters and $c\ell$ credits of the center $X$. 
\end{claim}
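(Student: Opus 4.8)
The plan is to reuse the accounting of Claims~\ref{clm:paid-credit-P4a-P4b1} and~\ref{clm:paid-dm-crd-long-clustesr}, splitting $A_8$ according to whether an edge is charged to a \emph{releasing} Step-2 $\epsilon$-cluster, which pays for itself, or to an $\epsilon$-cluster of $\mathcal{S}'''(X)$, which the center $X$ of $\mathcal{C}'''(X)$ pays for. First I would fix the charging. Each short subpath $\mathcal{P}'$ processed in Step~2 has fewer than $\frac{2g}{\epsilon}+1$ $\epsilon$-clusters, all of them low-degree (every high-degree $\epsilon$-cluster was marked in Phase~1), hence each incident to at most $\frac{20}{\epsilon}$ edges of $E_i$. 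By Observation~\ref{obs:unpaid-edges-P4b2}, an edge $e\in A_8$ is incident to an $\epsilon$-cluster merged in Phase~1 --- lying in the original part $\mathcal{C}(X)$ of the cluster it belongs to --- and to an $\epsilon$-cluster $Y$ of a short subpath $\mathcal{P}'$ that was attached to $\mathcal{C}(X)$ in Step~2; I charge $e$ to $Y$, hence to the cluster $\mathcal{C}'''(X)\supseteq\mathcal{P}'$. So every edge of $A_8$ incident to $\mathcal{C}'''(X)$ is charged to a single Step-2 $\epsilon$-cluster of $\mathcal{C}'''(X)$, and each such $\epsilon$-cluster is charged at most $\frac{20}{\epsilon}$ edges.

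For $Y$ releasing (a Step-2 $\epsilon$-cluster outside $\mathcal{S}'''(X)$), invariant DC1 for level $i-1$ gives it at least $\frac{c\epsilon\ell}{2}$ credits, enough to buy the at most $\frac{20}{\epsilon}$ edges charged to it (each of weight below $\ell$) once $c=\Omega(\tfrac{1}{\epsilon^2})$. It remains to pay for the edges charged to $\mathcal{S}'''(X)$, and the key point is that $\mathcal{S}'''(X)$ is small: every Step-2 short subpath is attached directly to the original Phase-1 part $\mathcal{C}(X)$ by a single edge of $E_i$, so the (simple) diameter path $\mathcal{D}$ of $\mathcal{C}'''(X)$ cannot both enter and leave such a pendant subpath and therefore meets at most two of them, each as a terminal affix of $\mathcal{D}$. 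Since a short subpath has at most $\frac{2g}{\epsilon}$ $\epsilon$-clusters, $|\mathcal{S}'''(X)|\le\frac{4g}{\epsilon}$, so at most $O(\tfrac{g}{\epsilon^2})$ edges of $A_8$ are charged there, costing $O(\tfrac{g\ell}{\epsilon^2})\le c\ell$ credits when $c=\Omega(\tfrac{g}{\epsilon^2})$. These $c\ell$ credits are available at $X$: since $\mathcal{C}(X)$ was formed in Phase~1 its center collected at least $9c\ell$ credits (Observation~\ref{obs:P1-center-credit}), having spent only $c\ell$ on $A_1$ (Claim~\ref{clm:paid-credit-P1}) and $O(c\ell)$ on $A_5\cup A_7$ (Claim~\ref{clm:paid-credit-P4a-P4b1}).

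The main obstacle is precisely the bound $|\mathcal{S}'''(X)|=O(\tfrac{g}{\epsilon})$, i.e.\ the structural claim that a diameter path cannot run through a Step-2 pendant subpath. This relies on Observation~\ref{obs:unpaid-edges-P4b2}: because the unpaid attaching edge is incident to an $\epsilon$-cluster \emph{merged in Phase~1}, Step-2 subpaths hang off the original Phase-1 part rather than off other augmenting pieces, so they really are attached by a single $E_i$-edge in $\mathcal{C}'''(X)$. With this established, the rest is the same degree-versus-credit counting used throughout Phases~1--4.
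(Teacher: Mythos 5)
Your proof is correct and takes essentially the same route as the paper's: bound the per-$\epsilon$-cluster incidence by $\frac{20}{\epsilon}$ via low-degree, let releasing Step-2 $\epsilon$-clusters self-pay from their DC1 credit, and bound $|\mathcal{S}'''(X)|\le\frac{4g}{\epsilon}$ via the observation that $\mathcal{D}$ meets at most two Step-2 short subpaths, so the $c\ell$ credits at the Phase-1 center suffice once $c=\Omega(\frac{g}{\epsilon^2})$. The one thing you add is an explicit justification that $\mathcal{D}$ touches at most two such subpaths — namely that each is a pendant attached to $\mathcal{C}(X)$ by a single $E_i$ edge (using Observation~\ref{obs:unpaid-edges-P4b2}), so a simple diameter path can only end there, not pass through — which the paper asserts without spelling out; the credit-availability remark at the end is also sound but is accounting the paper defers to Claim~\ref{clm:inv-DC1-CX}.
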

\begin{proof}

Since augmenting $\epsilon$-clusters are low-degree, each augmenting $\epsilon$-cluster is incident to at most $\frac{20}{\epsilon}$ edges of $A_8$. When $c = \Omega(\frac{1}{\epsilon^2})$, $\frac{c\epsilon\ell}{2}$ credits of each releasing $\epsilon$-cluster suffice to buy their incident edges of $A_8$. 

By construction, the augmenting subpath $\mathcal{P}'$ in Step 1 of Phase 4b is a short path. Since the cluster-diameter path $\mathcal{D}$ contains $\epsilon$-clusters of at most two short subpaths of HD-paths, $|\mathcal{S}'''(X)|\leq \frac{4g}{\epsilon}$. Thus, there are at most $O(\frac{g}{\epsilon^2})$ edges of $A_8$ incident to non-releasing $\epsilon$-clusters. Hence,  $c\ell$ credits of $X$  suffice to pay for such edges when $c = \Omega(\frac{g}{\epsilon^2})$.
\end{proof}

It remains to maintain invariant DC1 for clusters in $\mathcal{C}$. We have:
\begin{claim} \label{clm:inv1-bigS}
If any of the sets $\mathcal{S}(X), \mathcal{S}''(X)$ and $\mathcal{S}'''(X)$ has at least $\frac{2g}{\epsilon}$ $\epsilon$-clusters, then $\mathcal{C}'''(X)$ satisfies invariant DC1. 
\end{claim}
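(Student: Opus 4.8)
The plan is to reduce invariant DC1 for $\mathcal{C}'''(X)$ to a one-line credit count. By Claim~\ref{clm:inv-DC2-CX} we already know $\dm(\mathcal{C}'''(X)) \le g\ell$, so to establish DC1 it suffices to exhibit at least $c\cdot\max\{\dm(\mathcal{C}'''(X)),\ell/2\}$, and hence at least $cg\ell$, credits of $\mathcal{C}'''(X)$ that have not been spent elsewhere. Recall from invariant DC1 for level $i-1$ (and $\ell_{i-1}=\epsilon\ell$) that every $\epsilon$-cluster carries at least $\frac{c\epsilon\ell}{2}$ credits, and that by construction the credits of the $\epsilon$-clusters placed in $\mathcal{S}(X)$, $\mathcal{S}''(X)$, or $\mathcal{S}'''(X)$ are precisely the ones set aside for DC1: in every one of Phases 1 through 4 they are neither released to pay for their incident edges of $E_i$ nor collected by a center.

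Now observe that, by the truncation bound of Equation~\eqref{eq:size-trunc}, none of $\mathcal{S}(X)$, $\mathcal{S}''(X)$, $\mathcal{S}'''(X)$ can contain more than $\frac{2g}{\epsilon}$ $\epsilon$-clusters, so the hypothesis of the claim says that one of them, call it $\mathcal{S}$, is full, i.e.\ $|\mathcal{S}|=\frac{2g}{\epsilon}$. Then the saved credit of $\mathcal{S}$ alone is at least
\[
\frac{2g}{\epsilon}\cdot\frac{c\epsilon\ell}{2} \;=\; gc\ell \;\ge\; c\cdot\dm(\mathcal{C}'''(X)),
\]
where the last inequality is Claim~\ref{clm:inv-DC2-CX}, and trivially $gc\ell \ge c\ell/2$. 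Hence the credits of $\mathcal{S}$ by themselves meet the DC1 requirement for $\mathcal{C}'''(X)$, regardless of what the other two saved sets, the saved $\mst$ edges along the cluster-diameter path, or any leftover center credit contribute; this proves the claim.

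The only point that needs care — and the closest thing to an obstacle — is the bookkeeping: one must check that the $\epsilon$-clusters making up the full set $\mathcal{S}$ were, at the phase in which $\mathcal{S}$ was designated, disjoint from the corresponding releasing $\epsilon$-clusters, from the sets $\mathcal{R}(\cdot)$, and from the $\epsilon$-clusters whose credit is siphoned to a center (e.g.\ in Claim~\ref{clm:paid-credit-P4a-P4b1}), so that their credit genuinely survives to this point. This is immediate from the definitions, since each set of the form $\mathcal{S}(\cdot)$ is always selected inside the cluster-diameter path $\mathcal{D}$ while the releasing clusters and centers lie outside $\mathcal{D}$ or among clusters already clustered in an earlier phase. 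The complementary situation, in which all three of $\mathcal{S}(X)$, $\mathcal{S}''(X)$, $\mathcal{S}'''(X)$ are strictly smaller than $\frac{2g}{\epsilon}$, is not covered here and is treated in the subsequent analysis, where one instead combines their credits with those of the $\epsilon$-clusters and $\mst$ edges of $\mathcal{D}$ itself and, when $\mathcal{D}$ crosses an edge of $E_i$, with the surplus stockpiled at the center in Claims~\ref{clm:P3a1-credit-center} and~\ref{clm:P3a2-P3b-credit-center}.
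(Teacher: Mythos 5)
Your proof is correct and is essentially the paper's own argument: a full saved set $\mathcal{S}$ of $\frac{2g}{\epsilon}$ $\epsilon$-clusters, each carrying at least $\frac{c\epsilon\ell}{2}$ credits by DC1 at level $i-1$, yields $gc\ell \ge c\cdot\max\{\dm(\mathcal{C}'''(X)),\ell/2\}$ by Claim~\ref{clm:inv-DC2-CX}. The extra bookkeeping paragraph, while sensible, is beyond what the paper writes but does not change the route.
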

\begin{proof}
Suppose,~w.l.o.g, say $\mathcal{S}(X)$ has at least  $\frac{2g}{\epsilon}$ $\epsilon$-clusters. Then, by DC1 for level ${i-1}$, the total credits of $\epsilon$-clusters in $\mathcal{S}(X)$ is at least:
\begin{equation*}
 \frac{2g}{\epsilon}\cdot \frac{c\epsilon \ell}{2} = gc\ell
\end{equation*}
which is at least $c\cdot\max(\dm(\mathcal{C}'''(X)), \ell/2)$ since $\dm(\mathcal{C}'''(X)) \leq g\ell$ by Claim~\ref{clm:inv-DC2-CX} and $g > 1$. 
\end{proof}

\begin{claim}\label{clm:inv-DC1-CX}
If $c = \Omega(\frac{g}{\epsilon^3})$, we can maintain invariant DC1 of $\mathcal{C}'''(X)$ using credits of $\epsilon$-clusters and $\mst$ edges in $\mathcal{D}$ and the credits of the cluster center $X$.
\end{claim}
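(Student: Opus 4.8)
The plan is to establish DC1 for the fully-grown cluster $\mathcal{C}'''(X)$ by adding up the credit that was deliberately set aside along its cluster-diameter path $\mathcal{D}$ and topping it up, where necessary, with the credit still held by the center $X$. By Claim~\ref{clm:inv1-bigS} we may assume that none of $\mathcal{S}(X)$, $\mathcal{S}''(X)$, $\mathcal{S}'''(X)$ reached the truncation threshold $\tfrac{2g}{\epsilon}$, so that no truncation occurred in their definitions. Classifying the $\epsilon$-clusters of $\mathcal{D}$ by whether they lie in $\mathcal{C}(X)$, in $\mathcal{C}''(X)\setminus\mathcal{C}(X)$, or in $\mathcal{C}'''(X)\setminus\mathcal{C}''(X)$, the sets $\mathcal{S}''(X)$ and $\mathcal{S}'''(X)$ capture exactly the augmenting $\epsilon$-clusters on $\mathcal{D}$, while $\mathcal{S}(X)$ captures the $\epsilon$-clusters of $\mathcal{D}\cap\mathcal{C}(X)$ --- all of them when $\mathcal{C}(X)$ was built in Phase~2, 3a, or 3b, and all but a constant number (each of diameter at most $g\epsilon\ell$) when it was built in Phase~1. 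None of these saved credits has ever been spent, and the $\mst$ edges of $\mathcal{D}$ likewise keep their unit credit; so by DC1 for level $i-1$ (each $\epsilon$-cluster $Z$ carries at least $c\cdot\dm(Z)$), the credit available along $\mathcal{D}$ is at least $c\bigl(\edm(\mathcal{D})+|\mst(\mathcal{D})|\bigr)$, minus a $c\,O(g\epsilon\ell)$ term present only when $\mathcal{C}(X)$ originates in Phase~1.

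I would then relate this to $\dm(\mathcal{C}'''(X))$. The realization of $\mathcal{D}$ in $G$ is a concatenation of sub-paths lying inside $\epsilon$-clusters (total weight at most $\edm(\mathcal{D})$), $\mst$ edges (total weight $|\mst(\mathcal{D})|$), and edges of $E_i$ (total weight, say, $w_{E_i}(\mathcal{D})$), so $\dm(\mathcal{C}'''(X))\le\edm(\mathcal{D})+|\mst(\mathcal{D})|+w_{E_i}(\mathcal{D})$. Combined with the previous paragraph, the credit saved along $\mathcal{D}$ is at least $c\bigl(\dm(\mathcal{C}'''(X))-w_{E_i}(\mathcal{D})\bigr)-c\,O(g\epsilon\ell)$, so it suffices to produce, at the center $X$, leftover credit at least $c\,w_{E_i}(\mathcal{D})+c\,O(g\epsilon\ell)$.

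This is the crux, and I would prove it by cases on the phase that created $\mathcal{C}(X)$, using that Phase~4a and Step~1 of Phase~4b attach only $\mst$ edges to a cluster, and that Step~2 of Phase~4b (Observation~\ref{obs:unpaid-edges-P4b2}) merges in $E_i$ edges only for clusters created in Phase~1. If $\mathcal{C}(X)$ was built in Phase~2, then $\mathcal{D}$ contains no edge of $E_i$ and there is no Phase-1 correction term, so nothing is owed. If $\mathcal{C}(X)$ was built in Phase~3, then $\mathcal{D}$ meets at most the single bridge edge $e$ of $E_i$ (invoking Observation~\ref{obs:Pxy-no-in-D} in the Phase-3a Case-1 subcase), and combining the center-credit lower bounds of Claims~\ref{clm:P3a1-credit-center} and \ref{clm:P3a2-P3b-credit-center} with the center expenditures of Claims~\ref{clm:paid-credit-P3a1}, \ref{clm:paid-credit-P3a2-P3b}, and \ref{clm:paid-credit-P4a-P4b1} leaves at least $c\,w(e)\ge c\,w_{E_i}(\mathcal{D})$ at $X$ for $\epsilon$ sufficiently small. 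If $\mathcal{C}(X)$ was built in Phase~1, then $\mathcal{D}$ uses at most $4$ edges of $E_i$ inside $\mathcal{C}(X)$ (Claim~\ref{clm:diam-clsuter-P1}) and at most $2$ further $E_i$ edges from short subpaths attached in Step~2 of Phase~4b, so $w_{E_i}(\mathcal{D})\le 6\ell$; the center starts with at least $9c\ell$ credits (Observation~\ref{obs:P1-center-credit}) and spends at most $c\ell+\tfrac{c\epsilon\ell}{3}+c\ell$ over Claims~\ref{clm:paid-credit-P1}, \ref{clm:paid-credit-P4a-P4b1}, and \ref{clm:paid-credit-P4b2}, leaving at least $7c\ell-\tfrac{c\epsilon\ell}{3}\ge 6c\ell+c\,O(g\epsilon\ell)$ for $\epsilon$ sufficiently small relative to $\tfrac1g$.

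Putting the pieces together, the credit of $\epsilon$-clusters and $\mst$ edges on $\mathcal{D}$ plus the leftover at $X$ is at least $c\cdot\dm(\mathcal{C}'''(X))$; and the $c\ell/2$ floor follows because clusters from Phases~1, 3a, and 3b have diameter at least $\ell/2$ by construction (Claims~\ref{clm:diam-clsuter-P1}, \ref{clm:diam-cluster-P3a}, \ref{clm:diam-cluster-P3b}, Phase~4 only enlarging them), while for clusters from Phase~2 the effective-diameter floor $2\ell$ makes the saved $\epsilon$-cluster credit along $\mathcal{D}$ at least $2c\ell$ directly. Hence the total is at least $c\cdot\max\{\dm(\mathcal{C}'''(X)),\ell/2\}$, which is DC1 (all of this under the running assumption $\mathcal{C}\neq\emptyset$ after Phase~3; the empty case is separate). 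I expect the third step to be the main obstacle: pinning down, phase by phase, both the number of $E_i$ edges that can lie on $\mathcal{D}$ and the cumulative center expenditure, so that ``leftover $\ge c\,w_{E_i}(\mathcal{D})$ plus the Phase-1 correction'' holds with room to spare --- the Phase-3 subcase in which the bridge edge $e$ lies on $\mathcal{D}$ being the tightest, since there the center has essentially exactly $c\,w(e)$ to spend.
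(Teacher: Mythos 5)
Your proposal tracks the paper's proof closely: both argue DC1 by adding the credit retained along the cluster-diameter path $\mathcal{D}$ to the center's leftover, then do a phase-by-phase case analysis bounding the number of $E_i$ edges on $\mathcal{D}$ (zero for Phase~2, one for Phase~3, up to six for Phase~1 after Step~2 of Phase~4b augmentation) and tallying the center's remaining credit against the expenditures of Claims~\ref{clm:paid-credit-P1}--\ref{clm:paid-credit-P4b2}. Your reduction via Claim~\ref{clm:inv1-bigS} and the ``leftover $\ge c\,w_{E_i}(\mathcal{D})$'' framing are the same mechanism the paper uses, phrased slightly more uniformly.

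There is, however, a genuine gap in your handling of the Phase-2 floor. You assert that ``the effective-diameter floor $2\ell$ makes the saved $\epsilon$-cluster credit \emph{along} $\mathcal{D}$ at least $2c\ell$ directly.'' That does not follow: the bound $\edm(\mathcal{C}(X))\ge 2\ell$ is the maximum effective diameter over all paths in the tree $\mathcal{C}(X)$, and the path that achieves it need not coincide with the true-diameter path $\mathcal{D}$. The $\epsilon$-clusters on $\mathcal{D}$ (and hence $\mathcal{S}(X)=\trunc{\mathcal{C}(X)\cap\mathcal{D}}{\rfrac{2g}{\epsilon}}$) can have total diameter much smaller than $2\ell$, and in Phase~2 the center collects credit from only a single $\epsilon$-cluster ($\mathcal{R}(X)=\{Y\}$), all of which is consumed by Claims~\ref{clm:paid-credit-P2} and~\ref{clm:paid-credit-P4a-P4b1}. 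The paper explicitly flags this (``we do not have lower bound on the diameter of $\mathcal{C}''(X)$ as in other cases'') and repairs it by reserving \emph{half} of the releasing $\epsilon$-clusters' credit in Claim~\ref{clm:paid-credit-P2} precisely so that $cr(\mathcal{X})/2 + cr(\mathcal{S}(X))\ge c\ell \ge c\ell/2$ when $cr(\mathcal{X})+cr(\mathcal{S}(X))\ge 2c\ell$. Your argument, as stated, omits this reserve and therefore cannot certify the $c\ell/2$ floor for low-diameter Phase-2 clusters.
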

\begin{proof}
By Claim~\ref{clm:inv1-bigS}, credits of all $\epsilon$-clusters and $\mst$ edges of $\mathcal{D}$ are saved for maintaining DC1. We prove the claim by case analysis.
\vspace{5mm}

\noindent \textbf{Case 1: $\mathcal{C}(X)$ is formed in Phase 1.} Recall $\mathcal{D}$ contains at most six edges of $E_i$ where four edges of $E_i$ are in $\mathcal{C}(X)$ and two more edges of $E_i$ are by the augmentation in Step 2 of Phase 4b. We use $6c\ell$ credits from $X$ and credits of $\epsilon$-clusters and $\mst$ edges in $\mathcal{D}$. The total credit is:
\begin{equation*}
6c\ell + c(|\MST({\cal D})| +  \edm({\cal D})) \geq c \cdot\dm({\cal D}) = c\cdot \dm(\mathcal{C}'''(X)))
\end{equation*} 
Since $\mathcal{C}'''(X)$ contains an edge in $E_i$, $\dm(\mathcal{C}'''(X)) \geq \ell/2$. Thus, $c\cdot \dm(\mathcal{C}'''(X))) \geq c\ell/2$.

To complete the proof, we need to argue that $X$ has non-negative credits after paying for edges of $E_i$ and maintaining invariant DC1 of $\mathcal{C}'''(X)$. Recall $X$ initially has $9c\ell$ credits by Observation~\ref{obs:P1-center-credit} and loses:
\begin{itemize}
\item $c\ell$ credits in Claim~\ref{clm:paid-credit-P1}.
\item $\frac{c\epsilon \ell}{3}$ credits in Claim~\ref{clm:paid-credit-P4a-P4b1}.
\item $c\ell$ credits to pay for the edges of $A_8$ incident to non-releasing augmenting $\epsilon$-clusters in Step 2 of Phase 4b.
\item $6c\ell$ credits for maintaining DC1 of $\mathcal{C}'''(X)$.
\end{itemize}
Thus, $X$ still has:
\begin{equation*}
9c\ell - 8c\ell - \frac{c\epsilon \ell}{3} = c(1 - \frac{\epsilon}{3})\ell
\end{equation*}
which is non-negative since $\epsilon < 1$. 
\vspace{5mm}

\noindent \textbf{Case 2: $\mathcal{C}(X)$ is formed in Phase 2.} Recall the center $X$ collects at least $\frac{\epsilon \ell}{2}$ from a neighbor $Y$ of $X$ ($\mathcal{R}(X) = \{Y\}$). We observe that credits in $X$ is taken totally by at most $\frac{\epsilon \ell}{2}$ in Claim~\ref{clm:paid-credit-P2} and Claim~\ref{clm:paid-credit-P4a-P4b1}. Thus, the center still has non-negative credits after buying incident edges $E_i$ when $c = \Omega(\frac{g}{\epsilon^3})$. 

Since credits of $\epsilon$-clusters and $\mst$ edges in $\mathcal{D}$ are reserved and $\mathcal{D}$ does not contain any edge of $E_i$, the total reserved credit is:
\begin{equation}
c(|\MST({\cal D})| +  \edm({\cal D})) \geq c \cdot\dm({\cal D}) = c\cdot \dm(\mathcal{C}''(X)))
\end{equation}
It remains to argue that $\mathcal{C}''(X)$ has at least $\frac{c\ell}{2}$ credits. Note that we do not have lower bound on the diameter of $\mathcal{C}''(X)$ as in other cases. Let $\mathcal{X}$ be the set of releasing $\epsilon$-clusters of $\mathcal{C}(X)$ and $cr(\mathcal{X})$ be the total credits of $\epsilon$-clusters in $\mathcal{X}$. Since $\edm(\mathcal{C}(X)) \geq 2\ell$, we have:
\begin{equation} \label{eq:credit-releasing-X-P2}
cr(\mathcal{X}) +  cr(\mathcal{S}(X)) \geq 2c\ell
\end{equation}

Recall half credit of $\mathcal{X}$ is taken in Claim~\ref{clm:paid-credit-P2}. We use the remaining half to guarantee that the credit of $\mathcal{C}''(X)$ is at least $c\ell/2$.
\vspace{5mm}

\noindent \textbf{Case 3: $\mathcal{C}(X)$ is formed in Case 1 of Phase 3a.}  Recall (in Claim~\ref{clm:P3a1-credit-center}) the center $X$ collects at least $\frac{\epsilon \ell}{2}$ credits if $\mathcal{D}$ does not contain $e$ (we are using notation in Case 1 Phase 3a) and at least  $cw(e) + \frac{\epsilon \ell}{2}$ credits if $\mathcal{D}$ contains $e$. We observe that credits in $X$ is taken totally by at most $\frac{\epsilon \ell}{2}$ in Claim~\ref{clm:paid-credit-P3a1} and Claim~\ref{clm:diam-cluster-P4a-and-P4b1}. By construction, $\mathcal{D}$ contains at most one edge of $E_i$ which is $e$ (in this case $X$ has at least $c\cdot w(e) + \frac{\epsilon \ell}{2}$ credits). Thus, the remaining credits of $X$ and credits from reserved $\epsilon$-clusters and $\mst$ edges in $\mathcal{D}$ are sufficient for maintaining invariant DC1. Since $\dm(\mathcal{C}''(X)) \geq \ell/2$ by Claim~\ref{clm:diam-cluster-P3a}, $c\cdot \dm(\mathcal{C}''(X)) \geq c\ell/2$.

\vspace{5mm}

\noindent \textbf{Case 4: $\mathcal{C}(X)$ is formed in Case 2 of Phase 3a or in Phase 3b.}    Recall in Claim~\ref{clm:P3a2-P3b-credit-center}, we argue that the center of cluster $X$ collects at least $2c(1 - g\epsilon)\ell$ credits. By construction, $\mathcal{D}$ can contain at most one edge of $E_i$, which connects two cluster paths in  Case 2 of Phase 3a or Phase 3b. We observe that credits in $X$ is taken totally by at most $c(1 - 3g\epsilon - \epsilon/3) \ell$ in Claim~\ref{clm:paid-credit-P3a2-P3b} and Claim~\ref{clm:paid-credit-P4a-P4b1}. Thus, $X$ has at least:
\begin{equation*}
c(2 - 2g\epsilon) \ell  - c(1 - 3g\epsilon - \epsilon/3) \ell > c\ell
\end{equation*}
 remaining credits. That implies the remaining credits of $X$ and credits from reserved $\epsilon$-clusters and $\mst$ edges in $\mathcal{D}$ are sufficient for maintaining invariant DC1. Since $\dm(\mathcal{C}''(X)) \geq \ell/2$ by Claim~\ref{clm:diam-cluster-P3b}, $c\cdot \dm(\mathcal{C}''(X)) \geq c\ell/2$.
\end{proof}

\begin{proof}[Proof of Lemma~\ref{lem:main-lightness}] Recall in the beginning of Phase 4, we assume that $\mathcal{C} \not= \emptyset$ after Phase 3 and in this case, we already paid for every edges of $E_i$  with:
\begin{equation*}
c = \max(\frac{\Theta(g) }{\epsilon^3},\frac{\Theta(\avgH)}{\epsilon} ) = O(\frac{\avgH}{\epsilon^3})
\end{equation*}
and $\epsilon$ sufficiently small.
 
We only need to consider the case when  $\mathcal{C} = \emptyset$ after Phase 3. We have:
\begin{observation}\label{obs:empty-case}
The case when $\mathcal{C} = \emptyset$ after Phase 3 only happens when: (i) there is a single cluster-path $\mathcal{P}$ that contains all $\epsilon$-clusters, (ii) every edge of $E_i$ is incident to an $\epsilon$-cluster in an affix of $\mathcal{P}$ of effective diameter at most $2\ell$ and (iii) $\epsilon$-clusters of $\mathcal{P}$ are low-degree in $\mathcal{K}(\mathcal{C}_\epsilon, E_i)$. 
\end{observation}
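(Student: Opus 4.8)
The plan is to read off (i)--(iii) one at a time by examining, phase by phase, what having $\mathcal{C}=\emptyset$ after Phase~3 forces on the structure.

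First I would use Phase~1. Step~1 of Phase~1 applies to \emph{every} high-degree $\epsilon$-cluster for as long as one exists (at the start of the phase every $\epsilon$-cluster is unmarked, so the ``all neighbours unmarked'' condition is automatically satisfied), so if any high-degree $\epsilon$-cluster existed the phase would create a cluster. Since it does not, there are no high-degree $\epsilon$-clusters, which is (iii). Consequently no $\epsilon$-cluster is marked in Phase~1, so the forest $\mathcal{F}$ built in Phase~2 is a maximal forest on the whole set $\mathcal{C}_\epsilon$; and since the $\epsilon$-clusters partition $V(G)$ and $\mst$ is spanning and connected, contracting each $\epsilon$-cluster in $\mst$ yields a connected (multi)graph, so $\mathcal{F}$ is actually a single spanning tree $\mathcal{T}$ containing every $\epsilon$-cluster.

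Next I would show $\mathcal{T}$ is a path; this is the step I expect to be the main obstacle. If $\mathcal{T}$ were not a path it would contain a branching $\epsilon$-cluster $b$, and if moreover $\edm(\mathcal{T})\ge 2\ell$ there would be a minimal subtree of $\mathcal{T}$ containing $b$ and its neighbours with effective diameter at least $2\ell$ (since $\mathcal{T}$ itself qualifies) -- but that is exactly a cluster Phase~2 creates, contradicting $\mathcal{C}=\emptyset$. So a non-path $\mathcal{T}$ would have to satisfy $\edm(\mathcal{T})<2\ell$, hence $\dm(\mathcal{T})\le 2\edm(\mathcal{T})<4\ell$ by Observation~\ref{obs:effdiam-vs-diam}. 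To rule this out I would use that all edges inside $\epsilon$-clusters and all $\mst$ edges lie in $S$ and (using $\epsilon<\frac{1}{2}$) are lighter than every edge of $E_i$, so for any $e\in E_i$ the distance between its endpoints in the spanner just before $e$ is considered is at most $\dm(\mathcal{T})<4\ell$, whereas the greedy rule requires it to be at least $(1+s\epsilon)w(e)\ge(1+s\epsilon)\ell/2$; choosing $s$ sufficiently large relative to $g$ makes this impossible, so $E_i=\emptyset$, in which case there is nothing to charge at level $i$ and this degenerate shape can be absorbed into the case ``$\mathcal{T}$ is a path''. Getting the constants to line up here is the delicate part; once done, $\mathcal{T}=\mathcal{P}$ is a cluster-path through all of $\mathcal{C}_\epsilon$, which is (i).

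Finally I would use Phase~3a to get (ii). Suppose some $e\in E_i$ joined $\epsilon$-clusters $X$ and $Y$ of $\mathcal{P}$ (with $X\ne Y$ by Observation~\ref{obs:K-simple}), say $X$ to the left of $Y$, such that neither $X$ nor $Y$ lay in any affix of $\mathcal{P}$ of effective diameter at most $2\ell$. Then the prefix of $\mathcal{P}$ ending at $X$ and the suffix of $\mathcal{P}$ starting at $Y$ are disjoint affices each of effective diameter at least $2\ell$; in particular $\edm(\mathcal{P})\ge 4\ell$, so $\mathcal{P}$ is an HD-path on which $e$ has both endpoints with disjoint affices of effective diameter at least $2\ell$ -- precisely the trigger for Phase~3a (Case~1 or Case~2), which would create a cluster, a contradiction. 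Hence every edge of $E_i$ is incident to an $\epsilon$-cluster lying in an affix of $\mathcal{P}$ of effective diameter at most $2\ell$, which is (ii). No separate argument is needed for Phase~3b: with $\mathcal{P}$ the only cluster-path there is no second HD-path that an $E_i$-edge could reach. Putting the three conclusions together yields the observation.
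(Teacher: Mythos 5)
Your arguments for (iii) and (ii) are sound and match what the construction forces: if any high-degree $\epsilon$-cluster exists, Step~1 of Phase~1 fires (every cluster is initially unmarked), contradicting $\mathcal{C}=\emptyset$; and any $E_i$-edge whose two endpoints both avoid every affix of effective diameter at most $2\ell$ has disjoint affices of effective diameter at least $2\ell$ on either side, which is precisely the trigger for Case~1 or Case~2 of Phase~3a, again contradicting $\mathcal{C}=\emptyset$. Your observation that no $\epsilon$-cluster is marked after Phase~1 and hence $\mathcal{F}$ is a single spanning tree $\mathcal{T}$ of the contracted $\mst$ is also correct.

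The gap, as you anticipated, is in deriving (i). You correctly reduce to the case where $\mathcal{T}$ is not a path and $\edm(\mathcal{T})<2\ell$, hence $\dm(\mathcal{T})<4\ell$, and then try to conclude $E_i=\emptyset$ by comparing $\dm(\mathcal{T})$ with the greedy threshold $(1+s\epsilon)w(e)\ge(1+s\epsilon)\ell/2$. But this comparison fails quantitatively: to get $(1+s\epsilon)\ell/2 > 4\ell$ you need $s\epsilon>7$, which a \emph{constant} $s$ (the paper fixes $s\ge 8g+1$ for a constant $g$) cannot deliver as $\epsilon\to 0$. Indeed, for small $\epsilon$, $(1+s\epsilon)\ell/2$ is barely larger than $\ell/2$, so the greedy rule is perfectly consistent with a spanner that already realizes distance $<4\ell$ between the endpoints. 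The inequality you'd need runs the wrong way, so the patch ``choose $s$ large relative to $g$'' does not close this case. To complete a proof you would need a genuinely different reason that a branching spanning tree $\mathcal{T}$ of total effective diameter below $2\ell$ cannot coexist with $\mathcal{C}=\emptyset$ and a nonempty $E_i$ (or else that the observation may be read as conditional on this degenerate case not occurring — the paper itself states Observation~\ref{obs:empty-case} without proof, so there is no authoritative treatment to compare against here, but your proposed argument for (i) as written does not succeed).
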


We greedily break $\mathcal{P}$ into subpath of $\epsilon$-clusters of effective diameter at least $2\ell$ and at most $4\ell$ as in Phase 4b and form a new cluster from each subpath. Recall a long cluster is formed from a subpath containing at least $\frac{2g}{\epsilon} + 1$ $\epsilon$-clusters. Let $\mathcal{P}'$ be a subpath of $\mathcal{P}$. If $\mathcal{P}'$ is long, we can both buy edges of $E_i$ incident to $\epsilon$-clusters of $\mathcal{P}'$ and maintain two diameter-credit invariants as in Claim~\ref{clm:paid-dm-crd-long-clustesr}. If $\mathcal{P}'$ is short, we use credits of $\epsilon$ and $\mst$ edges of $\mathcal{P}'$ to maintain DC1. Recall  $\mathcal{P}'$ has effective diameter at least $\ell$, thus, has at least $c\ell$ credits by DC1 for level ${i-1}$. That implies $c\cdot \dm(\mathcal{P}') \geq c\max(\frac{\ell}{2}, \dm(\mathcal{P}'))$.

We put remaining unpaid edges of $E_i$ to the  holding bag $B$. Recall unpaid edges of  $E_i$  must be incident to $\epsilon$-clusters of short clusters, which are affixes of $\mathcal{P}$. By Observation~\ref{obs:empty-case}, $B$ holds at most $O(\frac{g}{\epsilon^2}) = O(\frac{1}{\epsilon^2})$ edges of $E_i$. Thus, the total weight of edges of $B$ in all  levels is at most:
\begin{equation}
\begin{split}
O\left(\frac{1}{\epsilon^2}\right)\sum_{i} \ell_i &\leq O\left(\frac{1}{\epsilon^2}\right) \ell_{\max} \sum_{i}\epsilon^i,\ \mbox{where $\ell_{\max} = \max_{e\in S}\{w(e)\}$} \\
    &\leq  O\left(\frac{1}{\epsilon^2}\right)  w(\mst)  \sum_{i} \epsilon^i \\
    & \leq  O\left(\frac{1}{\epsilon^2}\right)  w(\mst) \frac{1}{1-\epsilon} = O\left(\frac{1}{\epsilon^2}w(\mst)\right)
\end{split}
\end{equation}
\end{proof}

\paragraph{Acknowledgments:} 
\bibliographystyle{plain}
\bibliography{spanner}

\appendix

\section{Notation and definitions} \label{sec:prel}

Let $G(V(G),E(G))$ be a connected and undirected graph with a positive edge weight function $w : E(G) \rightarrow \Re^+\setminus \{0\}$. We denote $|V(G)|$ and $|E(G)|$ by $n$ and $m$, respectively. Let $\MST(G)$ be a minimum spanning tree of $G$; when the graph is clear from the context, we simply write $\mst$.  A walk of length $p$ is a sequence of alternating vertices and edges $\{v_0,e_0,v_1,e_1,\ldots, e_{p-1}, v_{p}\}$ such that $e_i = v_iv_{i+1}$ for every $i$ such that $1\leq 0 \leq p-1$. A path is a \emph{simple walk} where every vertex appears exactly once in the walk. 
For two vertices $x,y$ of $G$, we use $d_G(x,y)$ to denote the shortest distance between $x$ and $y$. 

Let $S$ be a subgraph of $G$. We define $w(S) = \sum_{e \in E(S)}w(e)$. 

Let $X \subseteq V(G)$ be a set of vertices. We use $G[X]$ to denote the subgraph of $G$ induced by $X$. Let $Y\subseteq E(G)$ be a subset of edges of $G$. We denote the graph with vertex set $V(G)$ and edge set $Y$ by  $G[Y]$.

We call a graph $K$ \emph{a minor} of $G$ if $K$ can be obtained from $G$ from a sequences of edge contraction, edge deletion and vertex deletion operations. A graph $G$ is $H$-minor-free if it excludes a fixed graph $H$ as a minor.  If $G$ excludes a fixed graph $H$ as a minor, it also excludes the complete $h$-vertex graph $K_h$ as a minor where $h = |V(H)|$.

\begin{observation} \label{obs:minor-sub-div} If a graph $G$ excludes $K_h$ as a minor for $h\geq 3$, then any graph obtained from $G$ by subdividing an edge of $G$ also excludes $K_h$ as a minor.
\end{observation}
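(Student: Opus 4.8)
The plan is to argue by contradiction: assuming some $K_h$-minor survives the subdivision, I will exhibit a $K_h$-minor already present in $G$. Write $G'$ for the graph obtained from $G$ by replacing an edge $uv$ with a path $u\,w\,v$, so $w$ has degree $2$ in $G'$, $V(G')=V(G)\cup\{w\}$, and $E(G')=(E(G)\setminus\{uv\})\cup\{uw,wv\}$; in particular $G'-w$ is exactly $G$ with the edge $uv$ deleted, hence a subgraph of $G$. Suppose $G'$ has a $K_h$-minor, witnessed by pairwise disjoint vertex sets $B_1,\dots,B_h$, each inducing a connected subgraph of $G'$, with an edge of $G'$ between $B_i$ and $B_j$ for all $i\neq j$. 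The goal is to modify these sets into a valid $K_h$-model in $G$, contradicting that $G$ excludes $K_h$.

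First I would dispose of two easy cases. If $h=3$, then being $K_3$-minor-free is the same as being a forest, and subdividing an edge of a forest yields a forest, so $G'$ cannot have a $K_3$-minor; hence assume $h\geq 4$. Next, if $w$ lies in none of the $B_\ell$, then all branch sets and all chosen connecting edges live in $G'-w$, a subgraph of $G$, so $G$ already has a $K_h$-minor. So assume $w\in B_i$ for the (unique) $i$ with $w\in B_i$.

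The main case, $w\in B_i$, I would resolve by a local rerouting. If $B_i=\{w\}$, then $B_i$ can only be adjacent to branch sets containing a neighbour of $w$, i.e.\ to at most the two sets containing $u$ and $v$; since $K_h$ requires $B_i$ to be adjacent to $h-1\geq 3$ others, this is impossible. Hence $B_i$ contains a neighbour of $w$; without loss of generality $u\in B_i$. Replace $B_i$ by $B_i':=V(B_i)\setminus\{w\}$, with inducing subgraph $B_i-w$ together with the edge $uv$ if $v\in B_i$ (and just $B_i-w$ otherwise): deleting the degree-$\leq 2$ vertex $w$ from the connected graph $B_i$ leaves at most two components, which in the case $v\in B_i$ are reconnected by $uv\in E(G)$, so $B_i'$ induces a connected subgraph of $G$. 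Any connecting edge of the model incident to $w$ must be $wv$, joining $B_i$ to the branch set $B_j$ containing $v$ (this can only occur when $v\notin B_i$), and since $u\in B_i'$ and $uv\in E(G)$, the edge $uv$ witnesses the same adjacency $B_i'\sim B_j$; every other connecting edge and every internal edge of every branch set avoids $w$ and therefore lies in $E(G)\setminus\{uv\}\subseteq E(G)$. Thus $B_1,\dots,B_{i-1},B_i',B_{i+1},\dots,B_h$ is a $K_h$-model in $G$, the desired contradiction.

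The only point needing any care is the degenerate branch set $B_i=\{w\}$ (and, relatedly, the small value $h=3$); once those are dispatched, the remainder is routine bookkeeping about how connectivity and the connecting edges behave when a degree-$2$ vertex is removed from a branch set.
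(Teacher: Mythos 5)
Your proof is correct and takes essentially the same approach as the paper's: both argue by contradiction from a supposed $K_h$-model in the subdivided graph, dispose of $h=3$ and the degenerate case where the branch set containing the subdividing vertex is a singleton (via the degree-$2$ bound against $h-1\geq 3$), and then eliminate the subdividing vertex by absorbing it into a neighbor and reconnecting through the original edge $uv$. The paper phrases the final step as contracting the subdividing vertex to a neighbor within its branch set, which is the same operation you describe as deleting it and adding $uv$ when necessary.
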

\begin{proof} We can assume that $G$ is connected. If $h = 3$, $G$ is acyclic and the observation follows easily.  Let $K$ be the graph obtained from $G$ by subdividing an arbitrary edge, say $e$, of $G$. Let $v$ be the subdividing vertex. Suppose that $K$ contains $K_h$ as a minor. Then  there are $h$ vertex-disjoint trees $\{T_1,T_2,\ldots, T_h\}$ that are subgraphs of $K$ such that each $T_i$ corresponds to a vertex of the minor $K_h$ and there is an edge connecting every two trees. We say $\{T_1,\ldots, T_h\}$ witnesses the minor $K_h$ in $K$. If $v \not\in V(T_1\cup \ldots \cup T_h)$, then $\{T_1,\ldots, T_h\}$ witnesses $K_h$ in $G$, contradicts that $G$ excludes $K_h$ as a minor.  Thus, we can assume,~w.l.o.g, $v \in T_1$. Since $h \geq 4$ and $v$ has degree $2$, $T_1\setminus \{v\} \not= \emptyset$. By contracting $v$ to any of its neighbors in $T_1$, we get a set of $h$ trees witnessing the minor $K_h$ in $G$, contradicting that $G$ is $K_h$ minor-free. 
\end{proof}

\section{Greedy spanners} \label{sec:greed}

A subgraph $S$ of $G$ is a $(1+\epsilon)$-spanner of $G$ if $V(S) = V(G)$ and  $d_S(x,y) \leq (1+\epsilon)d_G(x,y)$ for all $x,y\in V(G)$. The following greedy algorithm by Alth\"ofer et al.~\cite{ADDJS93} finds a $(1+\epsilon)$-spanner of $G$:

\begin{tabbing}
  {\sc GreedySpanner}$(G(V,E), \epsilon)$\\
  \qquad \= $S \leftarrow (V, \emptyset)$.\\
  \> Sort edges of $E$ in non-decreasing order of weights.\\
  \> For each edge $xy \in E$ in sorted order\\
  \> \qquad \=  if $(1+\epsilon)w(xy) < d_S(x,y)$\\
  \>\>\qquad \=  $E(S) \leftarrow E(S) \cup \{e\}$\\
  \qquad \= return $S$
\end{tabbing}

\noindent Observe that as algorithm {\sc GreedySpanner} is a relaxation of  Kruskal's algorithm, $\MST(G) = \MST(S)$. Since we only consider $(1+\epsilon)$-spanners in this work, we simply call an $(1 + \epsilon)$-spanners \emph{a spanner}. We define the lightness of a spanner $S$ to be the ratio $\frac{w(S)}{w(\MST(G))}$. We call $S$ \emph{light} if its lightness is independent of the number of vertices or edges of $G$. 

\section{Reduction to unit-weight $\mst$ edges} \label{app:reduction}

We adapt the reduction technique of Chechik and Wulff-Nilsen~\cite{CW16} to analyze the increase in lightness due to this simplification for $H$-minor-free graphs. Let $G$ be the input graph and let $w : E(G) \rightarrow \Re^+$ be the edge weight function for $G$.  Let $\bar w = \frac{w(\mst)}{n-1}$ be the average weight of the $\mst$ edges. We do the following:
\begin{enumerate}
\item Round up the weight of each edge of $E(G)$ to an integral multiple of $\bar w$.\label{step-round}
\item Subdivide each $\mst$ edge so that each resulting edge has weight exactly $\bar w$.  Let $G'$ be the resulting graph.\label{step-subdivide}
\item Scale down the weight of every edge by $\bar w$.  Let $w'$ be the resulting edge weights of $G'$. $G'$ is minor-free by Observation~\ref{obs:minor-sub-div}. \label{step-scale}
\item Find a $(1+\epsilon)$-spanner $S'$ of $G'$.
\item Let $S$ be a graph on $V(G)$ with edge set equal to the union of $E(S')\cap E(G)$, the edges of $\mst(G)$, and every edge $e$ in $G$ of weight $w(e) \le \frac{\bar w}{\epsilon}$.\label{step-return}
\end{enumerate}

\begin{lemma} \label{lm:unit-ms}
If $S'$ is a $(1+\epsilon)$-spanner of $G'$ with lightness $f(\epsilon)$, then $S$ is a $(1 + O(\epsilon))$-spanner of $G$ with lightness $2f(\epsilon) + O(\avgH/\epsilon)$.
\end{lemma}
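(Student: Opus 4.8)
The plan is to verify the two claims of Lemma~\ref{lm:unit-ms} separately: first that $S$ is a $(1+O(\epsilon))$-spanner of $G$, and then that its lightness is $2f(\epsilon) + O(\avgH/\epsilon)$. For the stretch bound, I would first track how distances change across the three preprocessing steps. Rounding each weight up to a multiple of $\bar w$ (Step~\ref{step-round}) can only increase weights, and by at most $\bar w$ per edge; scaling by $1/\bar w$ (Step~\ref{step-scale}) is a uniform rescaling that preserves ratios; subdividing $\mst$ edges (Step~\ref{step-subdivide}) does not change distances at all. The key quantitative point is that on any shortest path in $G$ between $x$ and $y$, the number of edges is at most $d_G(x,y)/w_{\min}$-ish, but more usefully: a shortest $x$--$y$ path in $G$ has weight at least $\bar w$ unless it is very short, and the cumulative rounding error on a path is at most (number of edges)$\cdot \bar w$. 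To control this I would note that all edges of weight $\le \bar w/\epsilon$ are placed directly in $S$ (Step~\ref{step-return}), so for the stretch argument one only needs to worry about paths made of heavier edges, on which the per-edge rounding error of $\bar w$ is an $\epsilon$-fraction of the edge weight — giving a $(1+\epsilon)$ distortion from rounding alone. Composing this with the $(1+\epsilon)$ guarantee of $S'$ on $G'$ yields $(1+O(\epsilon))$. Care is needed because $S$ is built from $E(S')\cap E(G)$ rather than $E(S')$ itself (subdivided $\mst$ edges are not in $E(G)$), but since $\mst(G)\subseteq S$ and subdivision vertices are internal to $\mst$ edges, any $S'$-path can be pulled back to an $S$-path of the same weight.

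For the lightness bound, decompose $w(S)$ into three parts according to the definition in Step~\ref{step-return}: the $\mst(G)$ edges, the low-weight edges (weight $\le \bar w/\epsilon$), and the edges of $E(S')\cap E(G)$. The $\mst$ contributes exactly $w(\mst(G))$. For the low-weight edges, since $G$ is $H$-minor-free it has $O(\avgH n)$ edges total, so the low-weight edges have total weight $O(\avgH n \cdot \bar w/\epsilon) = O(\avgH/\epsilon)\cdot w(\mst(G))$, using $w(\mst(G)) = (n-1)\bar w$. For the third part, I would bound $w(E(S')\cap E(G))$ in the original weighting by the weight of $S'$ in $G'$ (times $\bar w$ to undo the scaling): since weights were only rounded up, the original weight of any such edge is at most its weight in $G'$ (after undoing the scale), so this part is at most $\bar w \cdot w'(S')$. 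Because $w'(S')\le f(\epsilon)\cdot w'(\mst(G'))$ and $\mst(G')$ has the same total weight as $\mst(G)$ after scaling (subdivision preserves total weight, and $\bar w\cdot w'(\mst(G')) = w(\mst(G))$), this contributes at most $f(\epsilon)\cdot w(\mst(G))$. Summing: $w(S) \le w(\mst(G)) + O(\avgH/\epsilon)w(\mst(G)) + f(\epsilon)w(\mst(G))$, which is $(f(\epsilon) + O(\avgH/\epsilon))w(\mst(G))$; the factor $2$ in the statement gives slack to absorb the $\mst$ term and any constant overhead.

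The main obstacle I anticipate is the stretch argument, specifically handling edges whose rounded weight differs substantially from the original — an edge of weight just above $\bar w/\epsilon$ could be rounded up by nearly $\bar w$, which is still only an $\epsilon$-fraction, so that is fine; the genuinely delicate case is a shortest path in $G$ that uses many edges each of weight barely exceeding $\bar w/\epsilon$, where the accumulated additive error is (number of edges)$\cdot\bar w$ but the path weight is (number of edges)$\cdot(\bar w/\epsilon)$, so the ratio is again $\epsilon$ — so in fact the uniform bound "each edge either goes into $S$ directly or has weight $> \bar w/\epsilon$, hence rounding error $\le \epsilon\cdot w(e)$" handles everything cleanly. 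The one subtlety left is that the spanner guarantee for $S'$ is stated for all pairs in $V(G')\supseteq V(G)$, so restricting to pairs in $V(G)$ is immediate, and distances in $G'$ equal distances in $G$ (up to the rounding) because subdivision is distance-preserving. I would present the stretch proof as: for $x,y\in V(G)$, $d_S(x,y) \le (\text{weight in } S \text{ of pullback of } S'\text{-path}) = \bar w\cdot d_{S'}(x,y) \le \bar w(1+\epsilon)d_{G'}(x,y) \le \bar w(1+\epsilon)\cdot\frac{1}{\bar w}(1+\epsilon)d_G(x,y) = (1+O(\epsilon))d_G(x,y)$, where the middle inequality $\bar w\cdot d_{G'}(x,y)\le(1+\epsilon)d_G(x,y)$ is exactly the rounding estimate above.
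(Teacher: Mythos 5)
Your stretch argument follows the paper's proof almost verbatim: only consider edges $e \notin S$ of weight above $\bar w/\epsilon$, pull the $(1+\epsilon)$-stretch $S'$-path back to $S$, and observe that the per-edge rounding inflation is at most a $(1+\epsilon)$ factor because $w(e) > \bar w/\epsilon$, giving $(1+\epsilon)^2$. One minor slip: the weight in $S$ of the pulled-back path is at most $\bar w$ times its weight in $S'$ (original weights are $\le$ rounded weights), not equal to it; you write an equality there. That is cosmetic.

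The lightness argument has a genuine error in how you account for the factor $2$. You assert $\bar w \cdot w'(\mst(G')) = w(\mst(G))$ and attribute the $2$ in $2f(\epsilon)$ to absorbing the explicit $w(\mst(G))$ term added in Step~\ref{step-return}. That equality is false: Step~\ref{step-round} rounds up the weight of \emph{every} edge of $G$, including $\mst$ edges, before the subdivision of Step~\ref{step-subdivide}. Subdivision then preserves the \emph{rounded} $\mst$ weight, so after undoing the scaling, $\bar w \cdot w'(\mst(G'))$ equals the rounded-up $\mst$ weight, which exceeds $w(\mst(G))$. Since each of the $n-1$ $\mst$ edges can increase by up to $\bar w$ and $w(\mst(G)) = (n-1)\bar w$, the correct bound is
\[
\bar w \cdot w'(\mst(G')) \;\le\; w(\mst(G)) + (n-1)\bar w \;=\; 2\,w(\mst(G)),
\]
and this is precisely where the factor $2$ on $f(\epsilon)$ originates. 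The extra $+\,w(\mst(G))$ from Step~\ref{step-return} is instead harmless because it is dominated by the $O(\avgH/\epsilon)\cdot w(\mst(G))$ term. With this correction the accounting matches the paper's proof and the remainder of your argument is sound.
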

\begin{proof}
We adapt the proof of Chechik and Wulff-Nilsen~\cite{CW16}. 

We first bound $w(S)$.  For an edge $e$ in $E(S') \cap E(G)$, $w(e) \le \bar w \cdot w'(e)$ since weights are rounded up before scaling down. Since $G$ is $H$-minor-free, $G$ has $O(\avgH n)$ edges and so has $O(\avgH n)$ edges of weight at most $\frac{\bar w}{\epsilon}$. Thus, the weight of the edges returned in Step~\ref{step-return} is:
\[
w(S)\leq \bar w \cdot w'(S') + w(\mst(G)) + \frac{\bar w}{\epsilon} \cdot O(\avgH n)
\]
Since $S'$ has lightness $f(\epsilon)$ and since $w(\mst(G)) = (n-1)\bar w$, we get
\[
w(S)\le \bar w \cdot f(\epsilon) w'(\mst(G')) + O(\avgH/\epsilon) \cdot w(\mst(G))
\]
The $\mst$ of $G'$ is comprised of the subdivided edges (Step~\ref{step-subdivide}) of the $\mst$ of $G$.  Since the weight of each edge of $\mst(G)$ is rounded up to an integral multiple of $\bar w$, at most $(n-1)\bar w = w(\mst(G))$ is added to the weight of the $\mst$ of $G$.  Therefore $w'(\mst(G')) \le 2 w(\mst(G)) / \bar w$, giving
\[
w(S)\le 2f(\epsilon)\cdot w(\mst(G)) + O(\avgH/\epsilon) \cdot w(\mst(G))
\]
This proves the bound on the lightness of $S$.

We next show that $S$ is a $(1+O(\epsilon))$-spanner of $G$.  It is sufficient to show that for any edge $e \notin E(S)$ there is a path in $S$ of weight at most $(1+O(\epsilon))w(e)$.  Since $S$ contains all edges of weight at most $\frac{\bar w}{\epsilon}$, we may assume that $w(e) > \frac{\bar w}{\epsilon}$.  Let $S'_e$ be a path in $S'$ between $e$'s endpoints of length at most $(1+\epsilon)w'(e)$.  Let $S_e$ be the path in $S$ that naturally corresponds to the path $S'_e$.  As above, we have $w(S_e) \le \bar w \cdot w'(S'_e)$.  Therefore $w(S_e) \le (1+\epsilon) \bar w \cdot w'(e)$.  Since edge weights are rounded up by at most $\bar w$,   $\bar w \cdot w'(e) \le w(e) + \bar w$ which in turn is $\le w(e) + \epsilon w(e)$ since $w(e) > \frac{\bar w}{\epsilon}$.  We get
\[ w(S_e) \le (1+\epsilon)^2 w(e) = (1+O(\epsilon)) w(e). \qedhere\]
\end{proof}

By Lemma~\ref{lm:unit-ms}, we may assume that all edges of $\mst(G)$ have weight $1$. We find the $(1+\epsilon)$-spanner $S$ of $G$ by using the greedy algorithm. Thus, the stretch condition of $S$ is satisfied.

\end{document}